\newcommand{\NoShow}[1]{}
\newcommand{\response}[1]{{#1}}
\newcommand{\responsesout}[1]{{\color{teal} \sout{#1}}}
\pgfplotsset{every axis/.append style={
		scaled y ticks = false,
		scaled x ticks = false,
		y tick label style={/pgf/number format/.cd, fixed, fixed zerofill,
			int detect,1000 sep={\;},precision=3, /tikz/.cd},
		x tick label style={/pgf/number format/.cd, fixed, fixed zerofill,
			int detect, 1000 sep={},precision=3, /tikz/.cd}
	}
}
\pgfplotsset{compat=newest}
\newcommand{\digits}{D}
\newcommand{\gap}{G}
\newcommand{\period}{\mbox{\huge {.}}}
\newcommand{\gemm}{GEMM}
\newcommand{\machepsnew}{\widetilde \epsilon_{\rm mach}}
\newcommand{\macheps}{\epsilon_{\rm mach}}
\newcommand{\machepsdd}{\widehat \epsilon_{\rm mach}}
\newcommand{\specific}[1]{{\color{blue!50!} #1}}
\newcommand{\remove}[1]{{\sout{#1}}}
\renewcommand{\remove}[1]{}
\author{Devangi N. Parikh}
\email{dnp@cs.utexas.edu}
\affiliation{%
	\department{Oden Institute for Computational Engineering \& Sciences}
	\department{Department of Computer Science}
	\institution{The University of Texas at Austin}
	\city{Austin}
	\country{USA}
}
\author{Robert A. van de Geijn}
\email{rvdg@cs.utexas.edu}
\affiliation{%
	\department{Oden Institute for Computational Engineering \& Sciences}
	\department{Department of Computer Science}
	\institution{The University of Texas at Austin}
	\city{Austin, TX}
	\country{USA}
}
\author{Greg M. Henry}
\email{greg.henry@intel.com}
\affiliation{
	\institution{Intel Corporation}
	\city{Hillsboro, OR}
	\country{USA}
}
	\title{Cascading \gemm: High Precision from Low Precision}
	\keywords{matrix multiplication, double-double precision, extended precision, BLAS
	}
\newcommand{\greg}[1]{{\color{red} #1}}
\newcommand{\devangi}[1]{{\color{cyan} #1}}
\newcommand{\robert}[1]{{\color{blue} #1}}
\begin{document}
	
\begin{abstract}
	This paper lays out insights and opportunities for implementing higher-precision matrix-matrix multiplication (\gemm) in terms of lower-precision high-performance \gemm.
	The driving case study approximates double-double precision (FP64x2) \gemm\ in terms of double precision (FP64) \gemm, leveraging how the BLAS-like Library Instantiation Software (BLIS) framework refactors the Goto Algorithm. 
	With this, it is shown how approximate FP64x2 \gemm\ accuracy can be cast  in terms of ten ``cascading'' FP64 \gemm s.  Promising results from preliminary performance and accuracy experiments are reported.  The demonstrated techniques open up new research
	directions for more general cascading of higher-precision computation in terms of lower-precision computation for \gemm-like functionality.
\end{abstract}
\maketitle

\section{Introduction}

The advent of processors with new floating point precisions, including various low precision formats (like half-precision), has raised the question of how to leverage the higher performance of low-precision arithmetic into high performance for high-precision arithmetic~\cite{Bailey,FaHiLoMaMi22,HenHeiTan2019}.
In this paper, we gain insight by  focusing on one end of the spectrum:  the implementation of double-double precision (FP64x2) matrix-matrix multiplication (\gemm), $ C := A B $, in terms of high-performance double precision (FP64) \gemm.
This is an interesting study because currently precisions higher than FP64 are not supported by contemporary CPUs in hardware yet is of interest to some applications~\cite{DeHiLiRi08}.
We then use this case study to discuss a range of new directions that can be explored along the same lines (e.g., casting single precision in terms of half precision).

Our community demands high performance from its \gemm\ implementations~\cite{BLAS3}.  This is for good reasons: in one form or another, much computation in scientific computing and, more recently, machine learning can be cast in terms of a \gemm-like computations (though typically for machine learning these computations are done in lower precision).  Thus, understanding high-performing \gemm\ contributes to high-performing higher-level functionality. For this reason, proposing new techniques should be accompanied by a demonstration of how to achieve performance.

A current framework for effectively implementing \gemm\  functionality is the BLAS-like Library Instantiation Software (BLIS)~\cite{BLIS4, BLIS3,BLIS5,BLIS2,BLIS1}.  Not only does BLIS support high performance that is portable to a variety of contemporary CPUs, but it also allows \gemm-like functionality to be quickly realized.  Key to this is how the framework isolates architecture-specific code and parameters, and how submatrices are packed for data locality.
The principles that underlie BLIS have been exploited in a number of situations, including for tensor contraction~\cite{tblis_sisc}, practical Strassen's algorithms~\cite{7967156,Huang:2016:SAR:3014904.3014983}, k-Nearest Neighbor computations~\cite{Yu:2015:POK:2807591.2807601},
complex \gemm\ in terms of real \gemm~\cite{BLIS6}, and mixed-precision/mixed-domain \gemm~\cite{BLIS7}.
The present work similarly exploits those techniques.

\response{
This paper contributes:
\begin{itemize}
\item
A strategy for cascading FP64x2 matrices into multiple FP64 matrices that enables \responsesout{the} \response{approximating} \responsesout{of} FP64x2 \gemm\ with ten high-performance FP64 \gemm s. This is a factor two better than previous methods.
\item 
An analysis that gives forward error bounds for the proposed method and compares these to those for FP64x2 computation.
\item
A prototype implementation that demonstrates how to make the proposed scheme practical that 
\begin{itemize}
    \item 
achieves high performance \response{(most of the work and time is GEMM-based and our efficiency is superior to prior efforts)} and accuracy \response{that mostly rivals FP64x2 \gemm,}
\item
requires a similar memory footprint to what is already used in a high performance FP64 \gemm\ implementation, 
\item
leverages existing kernels for FP64 \gemm\ for portable high performance, 
\item 
performs
all higher-order $O(n^3)$  and most second-order $O(n^2)$ work in terms of FP64 arithmetic, leaving only some computation to be performed in FP64x2 arithmetic, and
\response{\item can cheaply detect when the accuracy results are suspicious because of ``catastrophic cancellation.'' }
\end{itemize}

    \item 
    A discussion of new opportunities, including:
    \begin{itemize}
    \item
    how cascading can more generally support high precision \gemm\ in term of lower precision \gemm,  
    \item 
    opportunities for leveraging GPUs,
    \item 
    benefits of scaling/balancing methods,
    \item 
    the potential of hardware support for cascading, and
    \item 
    how to support other level-3 BLAS functionality.

    \end{itemize}
\end{itemize}
Together, these suggest the approach shows merit as a compromise between performance and accuracy.
}

\begin{table}[tb!]
\begin{center}
\begin{tabular}{| c | p{4.2in} | }
\hline
Symbol & 
\multicolumn{1}{c|}{
Comment
} \\ \hline \hline
$ \widehat ~ $ &
Quantity/variable associated with FP64x2 
number.
\\
\hline
$ \widetilde ~ $ &
Quantity/variable associated with cascaded number.  \\
\hline
$ A, B, C  $ & Matrices (upper case Roman letters). 
\\
\hline
$ \chi, \psi $ &
Scalars (lower case Greek letters).
\\
\hline
$ x, y $ &
(Column) vectors (lower case Roman letters).
\\
\hline
$m,n$ & Number of rows and columns in $C$, respectively.
\\
\hline
$\vec \jmath$, $ J$ & Vector and matrix of all ones of appropriate size, respectively.
\\
\hline
$k$ & Number of columns in $A$ and number of rows of $B$ (inner dimension).
\\
\hline
$\beta_{i}$ & the $i$-th bit the mantissa of  a floating point number. \\
\hline
$ c_0, c_1, c_2 $ &
Number of digits in splits $ 0, 1, 2 $. 
$ c_0 = D_0 $,
$ c_1 = D_1 - D_0 $,
$ c_2 = D_2 - D_1 $. \\
\hline
$ D $ & \# binary digits in mantissa \specific{of  a FP64}.
\mbox{\specific{$ D = 53 $ for FP64}.}
\\ \hline
$D_0, D_1, D_2 $ &
Boundaries between digit ranges for each split of the cascaded number. \\
\hline
$ D_A $, $ D_B $
&
$ D_A = 
{\rm diag}( 
\max_j \vert \alpha_{0,j} \vert ,
\max_j \vert \alpha_{1,j} \vert, \ldots  ) $, \\
&
$ D_B= 
{\rm diag}( 
\max_i \vert \beta_{i,0} \vert,
\max_j \vert \alpha_{i,1} \vert, \ldots  ) $\\ 
\hline
$ \delta\!\chi,
\delta\!x, \delta\!y,
\Delta\!A, \Delta\!B
$ &
Error in $ \chi, x, y, A, B $.
\\
\hline
$ \epsilon_{\rm mach} $,
$ \widehat \epsilon_{\rm mach} $,
$ \widetilde \epsilon_{\rm mach} $
&
Machine epsilon for FP64, FP64x2, cascaded arithmetic, respectively.
$ \epsilon_{\rm mach} = 2^{-D} \specific{= 2^{-53}}$, $\widehat \epsilon_{\rm mach} = 2^{-2D}
\specific{= 
2^{-106}}
$,
$\widetilde \epsilon_{\rm mach} = 2^{-(D_2+D)}
\specific{= 
2^{-117}}
$.
\\
\hline
FP64, FP64x2 & Double 
(64 bit) and double-double precision floating point, respectively. \\
\hline
$ \sigma $, $ \tau $
&
Scaling factors for $ x, A $ and $ y, B $, respectively.
\\ 
\hline
$ \Sigma $, $ T $ 
&
Diagonal matrix of scaling factors for $ A $ and $ B $, respectively.
\\
\hline
$ \sigma_1, \sigma_2, \sigma_2 $
&
Scaling factors for bins $ 0, 1, 2 $.
\\
\hline
$ \vert \cdot \vert $, $ \leq $ &
Element-wise absolute value and comparison, respectively, of a vector or matrix.
\\ \hline
\end{tabular}
\end{center}
\caption{Glossary.
Text in \specific{blue} is for the example of cascading FP64x2 into FP64, but can be generalized/modified for other precisions.}
\label{fig:glossary}
\end{table}

\section{Context}

\subsection{Notation}

Since many symbols are used in this paper, we provide a glossary in Table~\ref{fig:glossary}.

\subsection{Basics}
The use of single precision (FP32) and double precision (FP64) storage and arithmetic in \gemm\ is well understood and usually supported in hardware.  A fused-multiply-\response{add} (FMA) in FP64 typically requires twice the time of a FP32 FMA.  More recently, variants on half precision like bfloat16 have become of interest and are often supported in hardware~\cite{bfloat16}.  A precision beyond FP64 is quad precision (FP128), which uses a mantissa of 113 (112 stored) bits and exponent of 15 bits compared to a mantissa of 53 (52 stored) and exponent of 11 bits for FP64.  Hardware support for quad precision is only found on a few processors, largely because the hardware area to support fast FP arithmetic grows quadratically with the mantissa size. Simulating FP128 arithmetic instead in software on the FMA level requires approximately two orders of magnitude more time than a FP64 FMA.

A compromise is double-double (FP64x2) precision, which stores an extended precision number \responsesout{in} \response{as} two FP64 numbers, splitting the mantissa among the two, but providing no greater exponent range. The benefit is that now 106 bits are available for the mantissa, and casting a \response{FP64x2 multiply-add (MAD)} in terms of FP64 FMAs is simpler and requires less computation time~\cite{Bailey88} than FP128 when both are implemented in software.  The problem is that it is still time intensive, with a FP64x2 MAD implemented in software still requiring approximately 20-50 times more time than a hardware FP64 FMA. An early attempt at supporting such extended precision for BLAS functionality was the reference XBLAS~\cite{LiDeBaHeHiIsKaMaThTuYo02,xblasweb}, which make no attempt at optimizing performance.

One problem with early papers on the topic of FP64x2 is that they measure performance by counting flops rather than MADs. Extrapolating from the cost of a FP64 dot product of size $n$ requiring roughly $n$ FP64 multiplies and $n$ FP64 adds, many higher-accuracy dot product strategies would, for example, state an algorithm that requires $20n$ FP64 flops as being $ 10 \times $ more work than FP64 dot products. However, a careful analysis of many of these high precision kernels shows a huge bias toward adds and subtracts over multiplies. This gap is further widened on machines that support FP64 FMAs (nearly all of them), because emulating FP64x2 multiplication only requires 1 FMA and 1 multiply instruction, while FP64x2 addition requires approximately 10 or more add/subtract instructions that are dependent on each other. Therefore, a  FP64x2 multiplication is significantly simpler than FP64x2 addition/subtraction. It is better to have an algorithm with $n$ MADs than one with $1.5n$ adds, even though technically the latter incurs fewer flops.

Another avenue of earlier works focuses on the idea of ``generating a more accurate sum''~\cite{Rump2009} or ``dot-product''~\cite{Ba95,DeHi03,OgRuOi05}.
The idea is that if a dot product algorithm produces a more accurate result, it can be useful in computing a more accurate \gemm. For example, \cite{OzOgOiRu2011} discusses how to split a higher precision \gemm\ via error-free transformations that are not dependent on conditionals, exploiting only regular floating-point add, subtract and multiplication vector instructions. Also, 
 in \cite{HenHeiTan2019} it is discussed how  to cast a FP32 FMA in terms of 
 three bfloat16 FMAs, which is the concept applied in the present work to yield what we call cascading \gemm\ matrices. The same cascading technique was used to show that machine learning could be done strictly with bfloat16 FMAs and could avoid FP32 \gemm, e.g., in  \cite{OsArPeHeCa2022}.

\subsection{Leveraging lower-precision, high-performance \gemm\ implementations}

High-performance FP32 and FP64 implementations of \gemm\ are difficult to implement, in part because careful attention must be paid to overhead caused by moving data between memory layers~\cite{Goto1}.  For this reason, the BLAS~\cite{BLAS3} provide an interface for \gemm\ so that the task of optimizing these can be left to experts.
A question becomes whether a lower-precision, high-performance \gemm\ can be leveraged to implement a higher-precision high-performance \gemm.  

The Ozaki Scheme~\cite{OzakiScheme,OzOgOiRu2011} breaks higher-precision matrices $ A $ and $ B $ up, 
$ A = A_0 + A_1 + \cdots + A_{d_A-1} $ and $ B = B_0 + B_1 + \cdots + B_{d_B-1} $, where each term captures a different part of the range of the elements, allowing these to be stored in lower precision.
Then, in exact arithmetic, $ C = \sum A_i B_j $.  Now each product $ A_i B_j $ becomes a call to a lower-precision \gemm.
These works focus on provable accuracy and reproducibility. \response{ This requires determining the number of splits for the matrices based on the range of the elements of the input matrices and the inner dimension $k$ of the product}. Such range-based quantization methods suffer because as the number of bits needed to  capture the range of the data increases linearly, the number of splits in which to partition increases  linearly, resulting in a quadratic increase in work.

\response{The Ozaki Scheme can be found in OzBLAS~\cite{OzBLAS}, which implements FP128 \gemm\ in terms of FP64.  A ``computational degree'' parameter is introduced, which can be used to limit the number of splits and hence \gemm s, in principle creating tuneable accuracy.%
Possibly inspired by~\cite{HenHeiTan2019},
a ``fast'' mode is introduced where, if $ d_A = d_B = d $,  
a higher-precision \gemm\ is computed in
$\left({d(d+1)}/{2}\right)$ lower-order \gemm s. For example, if $d=d_A=d_B=4$ the fast mode requires ten \gemm s%
\footnote{These ten \gemm s are different than the ten \gemm s that our proposed method will compute as discussed in Section~\ref{sec:CascadingGemm}.}.
}
\response{In addition, as a potential optimization technique, both~\cite{OzBLAS} and~\cite{OzakiScheme} suggest blocking the $m$ and $n$ dimensions of the matrix to reduce the memory required to store the split matrices and~\cite{OzakiScheme} suggests blocking the $k$ dimension to reduce the number of \gemm s that may be required.}

In other closely-related  work that deals with higher accuracy \gemm\ (or dot products) using lower precision computation, the input matrices $A$ and $B$ are stored in the lower precision. For instance,  \cite{MukOzaOgiIma20}
 explores the use of low precision tensor cores on Nvidia GPUs to do cascading matrices with whatever low precision the tensor core supports. Similarly, in~\cite{FaHiLoMaMi22} it is also shown how to perform multi-precision  GEMM.  \cite{OzOgOiRu2011}  already discussed computing $AB$ in higher accuracy, when $A$ and $B$ are in FP64. This case requires about one-half to one-quarter of the work involved in computing the more general case when $A$ and $B$ are in FP64x2. They distinguish between two \response{algorithms}: one that exploits sparsity and skips some GEMM computation, for which they report (for large matrices) $ 13 \times $ slowdown over DGEMM, and one that ignores sparsity that exhibits a  $ 18 \times $ slowdown when $ d_A = d_B  = d = 4 $.   When accuracy requires more splits, the number of required \gemm s can easily exceed 150, for the simpler problem where $ A $ and $ B $ start as FP64 matrices.

In~\cite{FaHiLoMaMi22}, the work on breaking up higher precision \gemm s in terms of lower precision has been extended.  Fasi et al.~discuss the error analysis associated with cascading FP32 matrices into bfloat16 matrices to use Nvidia's tensor core to obtain FP32 precision \gemm. Their proposed technique and error analysis rely on the fact that while the input to the tensor core is bfloat16, the internal accumulator employs FP32 precision. 

 \subsection{Our work}

We primarily consider the case study where $A$, $ B $, and $ C $ are FP64x2 matrices and the goal is to approximate computing $ C = A B $ with   \response{FP64x2 precision}, without the benefit of hardware support for FP64x2 accumulation.    We pursue high accuracy balanced  with high performance, allowing only a fixed number of calls to FP64 \gemm s so that time to completion is predictable (not data dependent).  In the implementation, we fuse various required  phases in order to reduce overhead due to memory movement of data, allowing high performance  to be attained already for relatively small problem sizes.

Constrained by these goals, we strive to maximize the attained accuracy while minimizing the number of required  \gemm\ calls, 
 making the amount of work done independent of the input data range and the inner $k$ dimension of the matrix. This compromise facilitates a high performance implementation that allows us to approximate FP64x2 GEMM in $10$-$13 \times$ DGEMM time.  Key is the observation that multiple ``bins,'' used to accumulate lower precision contributions, can be combined.  In comparison, by counting up the flops in the XBLAS, one could imagine DDGEMM to run in $20$-$50 \times$ DGEMM time, but we are unfamiliar with such an optimized implementation. Moreover, we show how the cascading matrix multiplication can leverage principles that underlie BLIS to limit the number of splits and reduce overhead from data movement between memory layers.  
 In all, in our opinion, these represent a significant new contribution to the subject.

\NoShow{

In contrast, our proposed method of cascading GEMM takes advantage of the blocking that is required for high-performance GEMM by integrating the cascading matrices into the implementation of GEMM.  
\begin{itemize}
    \NoShow{
    \item Testing with uniformly random numbers in +/- $[1,10^R] $ doesn't adequately address cancellation errors or wide-ranging inputs like this paper does in section ~\ref{sec:accuracy}.
    }
    \item \gemm\ accuracy is maximized by quantizing each row of matrix $A$ and each column of matrix $B$ with respect to the respective maximum value and not the range of the data in the input matrices. Therefore, the amount of work done is insensitive to the input data range. Using our proposed methodology, unlike~\cite{OzPoster}, we do not end up with 18 splits of each input matrix resulting in 171 DGEMM calls. 
    \NoShow{You want the accuracy to be greatest around the biggest numbers. But this means we suffer in accuracy around the smaller numbers (which their method does not- it just does more work to prevent that.) }
    \NoShow{\item Performance shouldn't be compared against an untuned FP128-library, because running faster than unoptimized code is an irrelevant observation. Performance should be compared against DGEMM, and notice that their results range from 18x DGEMM to something ridiculous over 100x DGEMM.} \item  DDGEMM can theoretically be computed in $20$-$30 \times$ DGEMM time when optimized, but we are unfamiliar with such an implementation. The proposed  method approximates FP64x2 GEMM in approximately $10$-$13 \times$ DGEMM time.
    \NoShow{
    \item Their method is sensitive to input range (this is not a bad thing, until the range gets unreasonable.) Their best results are worse than our worst, and their methods get progressively worse. In a poster \cite{OzPoster}, they talk about doing up to 18 splits on some ranges, which is 171 DGEMM calls, over 17x more work than the quick methods used in this paper. Note that the quick methods in this paper can fail, but we can also detect when that's happening. Better to detect a problem and use a slightly slower method than to run something many orders of magnitude slower just to avoid all potential problems.
    \item DD-GEMM can appear to be 20-30x DGEMM time when optimized and we should target less than 30 DGEMM calls. 
    }
    \NoShow{
    \item The best feature of \cite{OzakiScheme} is reliability, and they are willing to do over a hundred DGEMM calls to accomplish this. However, since DD-GEMM itself tends to run 20-30x slower than DGEMM (if optimized), there is no performance excuse to use over 30 DGEMM calls for this kind of problem. 
    }
    \item We describe in this paper how to integrate cascading matrices into a high performance BLAS library.

\end{itemize}

}

\NoShow{ 
\robert{
\section{Background}
The use of single precision (FP32) and double precision (FP64) storage and arithmetic in \gemm\ is well understand and usually supported in hardware.  A fused-multiply-accumulate (FMA) in FP64 typically requires twice the time of a FP32 FMA.  More recently, variants on half precision like bfloat16 have become of interest and are also more and more supported in hardware~\cite{}.  Conceptually, a straight forward precision beyond FP64 is quad precision (FP128), which uses a mantissa of ?? bits and exponent of ??? bits compared to a mantissa of ?? and exponent of ?? bits for FP64.  To our knowledge, no hardware supports quad precision and simulating it in software on the FMA level requires an order of magnitude more time than a FP64 FMA.

A compromise is double-double (FP64x2) precision, which stores an extended precision number in two FP64 numbers, splitting the mantissa among the two, but providing no greater exponent range.
The benefit is that now 106 bits are available for the mantissa, and casting a FP64x2 FMA in terms of FP64 FMAs is simpler and requires less computation time~\cite{}.  The problem is that it is still time intensive.
An early attempt at supporting extended precision BLAS was the reference XBLAS~\cite{xblasweb,LiDeBaHeHiIsKaMaThTuYo02}, which make no attempt at optimizing performance.

One problem with early papers on the topic of FP64x2 is that they measure performance in terms of counting flops rather than FP64 operations~\cite{}. 
Extrapolating from the cost of a FP64 dot product of size $n$ requiring roughly $n$ FP64 multiplies and $n$ FP64 adds, many higher-accuracy dot product strategies would, for example, state an algorithm that requires $20n$ FP64 flops as being $ 10 \times $ more overhead than FP64 dot products. However, a careful analysis of many of these high precision kernels shows a huge bias toward adds and subtracts over multiplies. 
\greg{Clarify this: On machines with FP64 FMA support in hardware that are normally computed in 106-bit arithmetic and then rounded, one can easily find an accurate multiplication algorithms in the time it takes to do just 2 FMAs, whereas an accurate addition/subtraction algorithm may require 10 or more dependent adds/subtracts.} So, early papers that only take  flops into account tend to neglect the fact that it's mostly a long series of adds/subtracts that takes up most of the time. On many modern architectures, a FMA can be done in the same speed speed as an add, so a computation that that performs $2n$ FP64 adds will run much slower than a dot product that requires $n$ FMAs. 

{\bf I am just guessing as to what you were trying in your original.  I am quite confused.}
There has been related work on the idea of ``generating a more accurate sum''\cite{Rump2009} or ``dot-product'' \cite{Ba95,DeHi03,OgRuOi05}.
The idea is that 
 if a dot product algorithm produces a more accurate result, surely it can be useful in terms of a GEMM-based approach. For example, \cite{OzOgOiRu2011} discusses how to split a higher precision GEMM via error-free transformations that are not dependent on conditionals,
 exploiting only regular floating-point adds/subs/mults that can be vectorized. Also, 
 in \cite{HenHeiTan2019} it is discussed how  to cast a FP32 FMA in terms of 
 three bfloat FMAs, which is the concept applied in the present work to
 yield what we call cascading GEMM matrices.

Many mixed-precision dot product and GEMM papers focus on a slightly different case: where one wants  to compute \gemm, $AB$, but where $A$ and $B$ are FP64 or lower precission. For instance, there is \cite{MukOzaOgiIma20}, which uses low precision tensor cores on Nvidia GPUs to do cascading matrices with whatever low precision the tensor core supports. Similarly, in~\cite{FaHiLoMaMi22} it is also shown how to perform multi-precision  GEMM.
Both extend the ideas in \cite{HenHeiTan2019}.

\greg{{\bf This is how far I got.  Greg really needs to think carefully about how to make this a coherent discussion.}
For those papers concerned with precisions beyond FP64, note that some of them still restrict the input to be FP64. That is, a paper will give an algorithm for a dot product of two FP64 vectors, or the matrix product of two FP64 matrices, where only the result, not the input is in some extended precision.

If we look at the performance of extended precision algorithms like in \cite{OzOgOiRu2011}, one must take care. Their "best case" (exploiting what they call sparsity due to "zero splits") tends to report 13x slower than DGEMM (18x just by calling DGEMM alone!), and we'll see how our algorithm here beats this, but also note that solving FP64x2 matrix multiply is at least 2 to 4 times more expensive. So, we are matching some of the best known results on a problem 2 to 4 times harder. 

Finally, research into doing high-accuracy double precision GEMM in terms of multiple GEMM calls (like \cite{FaHiLoMaMi22}) could also be seen as way we could extend their research into our proposed algorithms. 
}
}

\greg{
\subsection{Related Works}

\robert{
May I suggest that we instead create a section (rather than subsection) titled ``Background.''

In it, we start by describing quad precision and double double.
We then explicitly discuss how a single FP64x2 FMA is performed, followed by how this then impacts a dot product, focusing on how expensive it becomes to perform the operation.

As much as is prectical, use notation rather than words.
I found a stackoverflow post that was somewhat helpful, but can't find it again.
}

Early precision works focused on methods of doing double-double arithmetic (the idea of using twice of double precision mantissa by capturing each value with two doubles instead of just one- this gives up to 106 bits of mantissa without changing the potential exponent range, so it's not equivalent to quad-precision (FP128), but it is often compared to Quad for its large mantissa.) \cite{Bailey,DeHiKaLiMuRi06} The reason for this comparison is that many machines don't have hardware quad-precision support, although they do have double-precision support. On many machines, if $X$ is the performance of SGEMM (FP32 matrix-matrix multiply), then $X/2$ is the performance of DGEMM (FP64), and, this is just for illustration, but $X/50$ may be the expected speed of double-double GEMM, and $X/200$ may be the expected speed of software-emulated quad GEMM.

Topics like how often things like normalization (assuming no overlapping bits between the high and low parts of a double-double) are needed falls into the above category.

Of course, there is the reference XBLAS \cite{xblasweb,LiDeBaHeHiIsKaMaThTuYo02}, which was the first attempt at making a standard around extended precision BLAS. The work there is quite thorough (including the generation and testing methods), and that still serves as a great reference today.

There have also been natural extensions to double-double, such as FP64x3, to use three doubles to track mantissa bits way past FP64x2 accuracy \cite{HiLiBa15}.

One caution with early papers is that many of them measure performance in terms of counting up flops. A FP64 dot product of size $n$ has roughly $n$ FP64 multiplies and $n$ FP64 adds. So, many "more accurate dot product" strategies would then state a $20n$ flop count algorithm as being 10x more overhead than FP64 dot products. However, a careful analysis of many high precision kernels shows a huge bias toward adds and subtracts over multiplies. In fact, on machines with FMAs that are normally computed in 106-bit arithmetic and then rounded, one can easily find an accurate multiplication algorithms in the time it takes to do just 2 FMAs, whereas an accurate addition/subtraction algorithm may require 10 or more dependent adds/subtracts. So, ironically, most early papers just talking flop count tend to neglect the fact that it's mostly a long series of adds/subtracts that takes up most of the time. On many modern architectures, a FMA can be done in the same speed speed as an add, so an algorithm that that has $2n$ FP64 adds will run slower than a dot product which has $n$ FMAs.

\NoShow{\remove{{\bf Not here. Should this paragraph go elsewhere?} The last item is particularly significant because most high precision GEMM papers expect some specialized code to be developed and tuned, which can vary from processor to processor, and a user may get vastly different results. For instance, FP64x2 GEMM computation is different from FP64 GEMM computation in the balance between adds and multiplies, which are balanced for FP64, is very different. Most FP64x2 implementations are heavily add/sub bound, with long chains of these instructions, each dependent on the previous calculation, and this can greatly hinder performance. Indeed, many early papers qualify and measure a FP64x2 algorithm based on the number of FLOPS it has (don't want it to be too much larger than FP64), and this completely ignores the fact that on many modern processors the bandwidth and throughput of an ADD is just as costly as a FMA (even if the latency itself is lower.) But our implementation just calls DGEMM, which is presumably well-tuned everywhere.
}}
}
}

\section{Cascading}

We illustrate the fundamental ideas with a case study that shows how to cascade FP64x2 \gemm\ into FP64 \gemm s.  How the techniques extend will be discussed later.

\NoShow{\remove{
{\bf Not here.} It should be noted, however, that the concept of higher precision being done in lower precision chunks is applicable everywhere. That is, we illustrate by attempting to get at FP64x2 accuracy through the use of DGEMM, but one could try to achieve DGEMM (or SGEMM) accuracy through the use of half-precision GEMM. We wanted a high performance implementation without the use of adding new assembly instructions, and BLIS currently only supports DGEMM and SGEMM kernels (ZGEMM and CGEMM are obtained through DGEMM and SGEMM.) Because of this, our paper focuses first on FP64x2 approximation, because that could be integrated into an existing BLIS infrastructure that currently contains only SGEMM and DGEMM kernels. 
}}

\NoShow{\remove{
{\bf Not here.} But our method of breaking down a higher precision matrix into lower precision parts works for any situation where you already have high performance low-precision GEMM implementations available. 
}}

\NoShow{\remove{
{\bf Not here.} We should also note another caveat: our solution contains a mix of floating-point and fixed-point computations. In many cases, fixed-point is not as accurate as floating-point (and there are counter examples where fixed-point is more accurate.) Because of this fundamental limitation, our algorithm does have conditions where it breaks down. However, we address that in a later section. For now, just remember, we are giving an approximation to FP64x2 (or Quad) accuracy- not always achieving it, but in many cases, achieving something better, and in a few other side cases, achieving something worse. All this will be addressed later.
}}

\subsection{FP64 and FP64x2 numbers}
\label{sec:FP64x2}

A normalized FP64 number, $  \chi $, consists of a sign, a mantissa, and an exponent:
\[
\chi = 
\pm \period \beta_0 \cdots  \beta_{\digits-1} \times 2^e
=
\pm
\left(
\beta_0 \times 2^{-1} 
+
\beta_1 \times 2^{-2} 
+
\cdots
+
\beta_{D-1} \times 2^{-D} 
\right)
\times2^e,
\]
where $ \beta_0 = 1 $.
A typical $ D $ equals 53 and for that reason that integer will frequently appear in our discussions.

A double-double floating point number (FP64x2),
$ \widehat \chi $,
increases the size of the mantissa by using two normalized FP64 numbers:
\begin{equation}
\label{eqn:doubledouble}
\widehat \chi 
= 
\chi_{\rm hi} + 
\chi_{\rm lo} \times 2^{-D}.
=
\begin{array}[t]{c}
\underbrace{
\pm \period \beta_0 \cdots  \beta_{\digits-1} \times 2^e
}
\\
\chi_{\rm hi}
\end{array}
+
\begin{array}[t]{c}
\underbrace{
\pm \period \beta_{\digits+\gap} \cdots  \beta_{2\digits+\gap-1} \times 2^{e-\gap} 
}
\\
\chi_{\rm lo}
\end{array}
\times 2^{-D}.
\end{equation}
Here,
\begin{itemize}
\item
We add a $ \widehat{~} $ to denote a FP64x2 variable.
\item
There are no overlapped (ranges of) bits in $  \chi_{\rm hi} $ and 
$ \chi_{\rm lo} $,
\item
$ \gap \geq 0 $ captures \response{the scenario} that there could be leading zeroes 
(a {\em gap}) right where the split of the FP64x2 number happens.
\end{itemize}
\response{If one is lucky, this gap allows more accuracy to be stored (in the place where zero bits occur) in a FP64x2 than a $ 2D $ bit mantissa would. }
However, one can obviously not count on that extra accuracy and hence in our discussion we will mostly assume $ \gap = 0 $ and will refer to it as a {\em lucky gap} to emphasize that it cannot be counted on.

\subsection{Cascading scalars and scalar multiplication}
\label{sec:cascading_scalar}

\remove{
We start by examining how to one might think of a FP64x2 in terms of two FP64s, which we will refine subsequently.
}

Consider the FP64x2 number discussed in Section~\ref{sec:FP64x2}, $ \widehat \chi = {\chi_{\rm hi}} + {\chi_{\rm lo}} \times 2^{-D}$ and a second FP64x2 number $ \widehat \psi = \psi_{\rm hi} + \psi_{\rm lo} \times 2^{-D}$.
\remove{By normalized, we mean no overlapping bits between the doubles $\hat{\chi_0}$ and $\hat{\chi_1}$. The exponent of the low part $\hat{\chi_1}$ is then at least 53 bits from the exponent of the high part $\hat{\chi_0}$, but it could be more.
The bits of $\widehat \chi$ can be written as:
\[
\chi = \pm \period \beta_0 \cdots  \beta_{\digits-1} \beta_{\digits} \cdots \beta_{2\digits-1} \times 2^e,
\]
where $ \digits $ equals 
the number of bits stored in the mantissa of a double precision float (FP64)%
\footnote{While this is not how FP64x2 is physically stored, it simplifies our discussion.}.  
A typical $ \digits $ equals $ 53 $.
 We can represent $ \chi $ exactly as two FP64s that share the same exponent:
\[
\chi = \chi_0  + \chi_1 \times 2^{-\digits},
\]
where $ \chi_0 = \pm  \period  \beta_0 \beta_1 \cdots \beta_{\digits-1} \times 2^e $ and $ \chi_1 = \pm \period \beta_{\digits} \cdots \beta_{2\digits-1} \times 2^e  $.
Note that $\hat{\chi_0}=\chi_0$ and $\hat{\chi_1}=\chi_1 \times 2^{-D}.$
Notice that $\chi_0$ and $\chi_1$ now have the same exponent. A number of the most significant bits of $\chi_1$ may be zero, which only means that $D>53$, but again, this does not impact our discussion.
}\remove{
If we then do the same for a second FP64x2, $ \psi = \psi_0  + \psi_1 \times 2^{-\digits} $,
and multiply, we get
}
Then
\[
\widehat \chi  \widehat \psi = \chi_{\rm hi}  \psi_{\rm hi} + \chi_{\rm hi} \psi_{\rm lo} \times 2^{-D} + \chi_{\rm lo}  \psi_{\rm hi} \times 2^{-D} + \chi_{\rm lo}  \psi_{\rm lo} \times 2^{-2D}.
\]
This seems to illustrate how a single FP64x2 multiplication can be broken down into four FP64 multiplications, some shifts of  the exponents, and three FP64x2 additions. 
The problem is that each of the four terms may incur error when computed in FP64,  including in the product of the highest order term, $ \chi_0 \psi_0 $, 
because a FP64 multiplication can produce $ 2 D $ bits in the mantissa of the result which are stored back into a $ D $ bit FP64 before adding the terms together.

One can
try to overcome
these multiplication inaccuracies by approximating
\[
\widehat \chi \approx \chi_0 + \chi_1 \times 2^{-D_0},
\]
where 
$ \chi_0 = \pm  \period  \beta_0 \beta_1 \cdots \beta_{\digits_0-1} \times 2^e $ and $ \chi_1 \approx \pm \period \beta_{\digits_0} \cdots \beta_{\digits_0 + \digits-1} \times 2^e  $.
Now, if $ D_0 \le D/2 $ then
$ \chi_0 \psi_0 $ is computed exactly in FP64 arithmetic since the result has at most $ D $ bits in the mantissa.  However, one would expect a loss in accuracy since digits beyond $ \beta_{D_0 + D -1} $ vanish in the conversion (unless there is a {\em lucky gap}).

With this introduction, we instead 
consider cascading (splitting) a scaled FP64x2, $ \widehat \chi $, 
into four FP64s (splits).
We start with
\[
\widehat \chi / \sigma = \pm  \period  \beta_0 \cdots  \beta_{ \digits_0 -1} \beta_{ \digits_0 } \cdots \beta_{\digits_1-1}
\beta_{ \digits_1} \cdots \beta_{ \digits_2-1}
\beta_{ \digits_2} \cdots \beta_{\digits_2 + \digits-1} 
\beta_{\digits_2 + \digits}
\cdots,
\]
where $ \sigma $ is a power of two 
\response{and there may be leading zeroes.}
In other words, $ \sigma \leq 2^e $, where $ e $ is the exponent that occurs in (\ref{eqn:doubledouble}).
We can now ``cascade'' this scaled FP64x2 number into four FP64s:
\[
\widehat \chi / \sigma \approx \pm \period
\begin{array}{|l|l|l|l|} \hline
\beta_0 \cdots  \beta_{ \digits_0 -1} & \beta_{ \digits_0 } \cdots \beta_{\digits_1-1} &
\beta_{\digits_1} \cdots \beta_{\digits_2-1} &
\beta_{\digits_2} \cdots \beta_{\digits_2 + \digits  -1}
\\
\hline
\end{array}
,
\]
and represent (approximately) $ \widehat \chi / \sigma  $ in terms of four FP64s:
\[
\begin{array}{rrlllll}
\widehat \chi / \sigma &\approx& 
\begin{array}[t]{c}
\underbrace{
\pm
\period \beta_0 \cdots  \beta_{ \digits_0 -1} 
} \\
\chi_0
\end{array}
&&+
\begin{array}[t]{c}
\underbrace{
\pm
\period \beta_{ \digits_0 } \cdots \beta_{  \digits_1-1} 
 }
 \\
 \chi_1
 \end{array}
 & \times 
 \begin{array}[t]{c}
\underbrace{
2^{ -\digits_0 }}
\\
\sigma_1
\end{array}
\\
&+&
\begin{array}[t]{c}
\underbrace{
\pm
\period \beta_{\digits_1} \cdots \beta_{ \digits_2-1}
}
\\
\chi_2
\end{array}
& \times 
\begin{array}[t]{c}
\underbrace{
2^{ -\digits_1}}
\\
\sigma_2
\end{array}
&+
\begin{array}[t]{c}
\underbrace{
\pm
\period
\beta_{ \digits_2} \cdots \beta_{\digits_2 + \digits-1} 
}
\\
\chi_3
\end{array}
& \times 
\begin{array}[t]{c}
\underbrace{
2^{ -\digits_2}}
\\
\sigma_3
\end{array}
.
\end{array}
\]
This is also illustrated in Figure~\ref{fig:chunks}. 
\remove{Of course, there may be leading zero bits in each chunk, but that will be taken care of automatically within the FP64 number.
}
More concisely
\[
\widehat \chi \approx
\sigma
\left(
\chi_0 + \sigma_1 \chi_1  
+ \sigma_2 \chi_2  + \sigma_3 \chi_3  \right),
\]
where $\chi_0, \chi_1$ and $\chi_2$ have a fixed bit-width of $D_0, D_1-D_0$ and $D_2-D_1$ respectively and $\chi_3$ captures the error of the first 3 terms in as much precision as FP64 can allow (up to 53 bits).
If $ \beta_0 = 1 $, $ D_2 + D > 2D $, and there is no lucky gap in the FP64x2 representation, then the cascaded representation can store more digits than FP64x2.
Even more precision may result if there there is a gap at the beginning of $ \chi_3 $, but that cannot be counted on.
In other words, 
if the cascaded number holds the result of other computations, it may hold extra precision in the trailing zeroes of $ \chi_3 $.
We consider the existence of a cascading split for any FP64x2 number to be self-evident.

\begin{figure}[tbp]
\begin{center}
{
\setlength{\unitlength}{0.35in}
\begin{picture}(17,3.2)
\put(0.4,2.8){\makebox(0,0){$\widehat \chi/\sigma = $}}
\put(1.2,2.8){\makebox(0,0){$\pm \period $}}
\put(15.7,2.8){\makebox(0,0){$ \phantom{\times 2^{-D_0}} $}}
\put(1.5,2.4){
\put(0,0.8){\line(1,0){5.8}}
\put(0,0){\line(1,0){5.8}}
\put(0,0){\line(0,1){0.8}}
\put(2.4,0){\line(0,1){0.8}}
\put(5.8,0){\line(0,1){0.8}}
\put(1.2,0.4){\makebox(0,0){$\beta_0 \cdots \beta_{D_0-1}$}}
\put(4.1,0.4){\makebox(0,0){$0 ~~~~~~~~ \cdots ~~~~~~~~ 0 $}}
}

\put(1.2,2.0){\makebox(0,0){$\pm \period $}}
\put(15.2,2.0){\makebox(0,0){$ \times 2^{-D_0} $}}
\put(3.9,1.6){
\put(0,0.8){\line(1,0){5.8}}
\put(0,0){\line(1,0){5.8}}
\put(0,0){\line(0,1){0.8}}
\put(2.4,0){\line(0,1){0.8}}
\put(5.8,0){\line(0,1){0.8}}
\put(1.2,0.4){\makebox(0,0){$\beta_{D_0} \cdots \beta_{D_1-1}$}}
\put(4.1,0.4){\makebox(0,0){$0 ~~~~~~~~ \cdots ~~~~~~~~ 0 $}}
}

\put(1.2,1.2){\makebox(0,0){$\pm \period $}}
\put(15.2,1.2){\makebox(0,0){$ \times 2^{-D_1} $}}
\put(6.3,0.8){
\put(0,0.8){\line(1,0){5.8}}
\put(0,0){\line(1,0){5.8}}
\put(0,0){\line(0,1){0.8}}
\put(2.4,0){\line(0,1){0.8}}
\put(5.8,0){\line(0,1){0.8}}
\put(1.2,0.4){\makebox(0,0){$\beta_{D_1} \cdots \beta_{D_2-1}$}}
\put(4.1,0.4){\makebox(0,0){$0 ~~~~~~~~ \cdots ~~~~~~~~ 0 $}}
}

\put(1.2,0.4){\makebox(0,0){$\pm \period $}}
\put(15.2,0.4){\makebox(0,0){$  \times 2^{-D_2} $}}
\put(8.7,0.0){
\put(0,0.8){\line(1,0){5.9}}
\put(0,0){\line(1,0){5.9}}
\put(0,0){\line(0,1){0.8}}
\put(5.9,0){\line(0,1){0.8}}
\put(2.9,0.4){\makebox(0,0){$\beta_{D_2} ~~~~~~ \cdots ~~~~~~ \beta_{2D-1} ~~0 ~ \cdots ~ 0 $}}
}
\end{picture}
}
\end{center}
\caption{Illustration of how FP64x2 number $ \widehat \chi $ is cascaded into four FP64 splits.  Here we assume that we start with a normalized number so that $ \beta_0 = 1$.  If $ D_i \leq D $ for $ i = 0, 1, 2 $, then this also illustrates that additional precision can be stored in the cascaded number, which can improve accuracy when the cascaded representation is used for intermediary accumulation.}
\label{fig:chunks}
\end{figure}
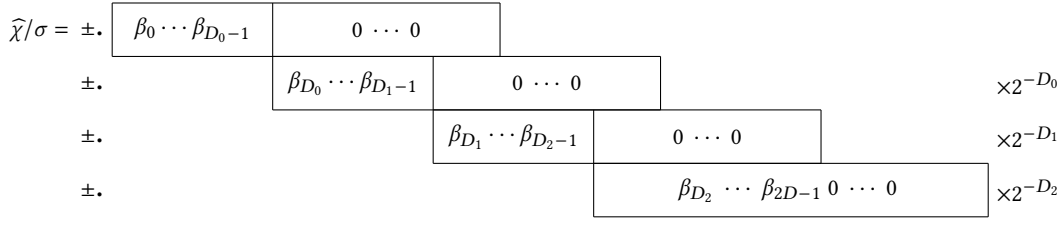

If we similarly cascade a second FP64x2 number, 
$ \widehat \psi $:
\[
\widehat \psi \approx
\tau \left( 
\psi_0 + 
\sigma_1 \psi_1  + 
\sigma_2 \psi_2  +
\sigma_3 \psi_3 \right)  ,  \]
then the product of these two scalars equals, approximately,
\[
\begin{array}{l@{}ll@{~}l@{~}@{~}r@{~}r@{~}r@{~}r@{~}r@{~}r@{~}r@{~}r@{~}r@{~}r@{~}r@{~}r @{~}r}
\widehat \chi 
\widehat \psi \approx \sigma \tau ( ~~~& 
\chi_0 \psi_0 &+&
\sigma_1 \chi_0 \psi_1   &+&
\sigma_2 \chi_0 \psi_2    &+&
\sigma_3 \chi_0 \psi_3    \\
&&+&
\sigma_1 \chi_1 \psi_0  &+&
\sigma_1^2 \chi_1 \psi_1  &+&
\sigma_1 \sigma_2 \chi_1 \psi_2   &+&
\sigma_1 \sigma_3 \chi_1 \psi_3  &+& \\
&&&&+&
\sigma_2 \chi_2 \psi_0   &+&
\sigma_2 \sigma_1 \chi_2 \psi_1  &+&
\sigma_2^2 \psi_2   &+&
\sigma_2 \sigma_3 \chi_2 \psi_3  &+& \\
&&&&&&+&
\sigma_3  \chi_3 \psi_0  &+&
\sigma_3 \sigma_1 \chi_3 \psi_1  &+&
\sigma_3 \sigma_2 \chi_3 \psi_2   &+&
\sigma_3^2 \chi_3 \psi_3  & )  . \\[-0.1in]
&
\begin{array}[t]{@{}c@{}}
\underbrace{
\phantom{\chi_0 \psi_0 }
}
\\
\mbox{bin 0}
\end{array}
&&
\begin{array}[t]{@{}c@{}}
\underbrace{
\phantom{\chi_0 \psi_0 \sigma_0 }
}
\\
\mbox{bin 1}
\end{array}
&&
\begin{array}[t]{@{}c@{}}
\underbrace{
\phantom{\chi_0 \psi_0 \sigma_0 }
}
\\
\mbox{bin 2}
\end{array}
&&
\begin{array}[t]{@{}c@{}}
\underbrace{
\phantom{\chi_0 \psi_0 \sigma_0 \tau_0}
}
\\
\mbox{bin 3}
\end{array}
&&
\begin{array}[t]{@{}c@{}}
\underbrace{
\phantom{\chi_0 \psi_0 \sigma_0 \tau_0}
}
\\
\mbox{bin 4}
\end{array}
&&
\begin{array}[t]{@{}c@{}}
\underbrace{
\phantom{\chi_0 \psi_0 \sigma_0 \tau_0}
}
\\
\mbox{bin 5}
\end{array}
&&
\begin{array}[t]{@{}c@{}}
\underbrace{
\phantom{\chi_0 \psi_0 \sigma_0 \tau_0}
}
\\
\mbox{bin 6}
\end{array}
\end{array}
\]
Notice that 
\[
\begin{array}{lrcl @{\hspace{1in}} lrcl}
\mbox{bin 1:} &
 \sigma_1  &=& 2^{-D_0} &
\mbox{bin 4:} &
\sigma_2^2 \approx \sigma_1 \sigma_3   & \ll & 2^{-D_2} \\
\mbox{bin 2:} &
\sigma_1^2 \approx  \sigma_2   & = & 2^{- D_1} &
\mbox{bin 5:} &
\sigma_2 \sigma_3  & \ll & 2^{-D_2} \\
\mbox{bin 3:} &
\sigma_1 \sigma_2   \approx \sigma_3  & = & 2^{- D_2} &
\mbox{bin 6:} &
\sigma_3^2 & \ll & 2^{-D_2}.
\end{array}
\]

\NoShow{\devangi{\emph{In the highlighted equation above, are we off by one, since $D_0 =22$ but $D_1 = 21$? or does the $\approx$ take care of that? Oh wait, it does, right? the plus or minus one for some reason I interpreted as the sign.}}}
The ``approximate'' here means that they are equal plus or minus one, if the sizes of splits are chosen appropriately.  
If $ D_0 $, $ D_1 $, and $ D_2 $ are chosen carefully%
\footnote{How to choose these parameters will become clear shortly.}, the first three bins can be computed and stored exactly in FP64.

Thus, the process becomes:
\begin{enumerate}
    \item 
    Transform $ \widehat \chi $ and $ \widehat \psi $ into cascaded representation.
    \item
    Compute within the bins  in FP64 arithmetic.
    \item
    Add within and across the bins in appropriate precision.
\end{enumerate}
We note that accumulating the contributions across a few of the higher-order bins (bin~0, bin~1 and bin~2.) must be in higher than double precision (in FP64x2 arithmetic or some compensated two-sum algorithm~\cite{Dekker1971}) and the rest can be summed in FP64. 

Let us analyze the error introduced when cascading a real number: 
\[
\widehat \chi = \sigma
\left(
\chi_0 + \chi_1 \sigma_1 
+ \chi_2 \sigma_2 + \chi_3 \sigma_3 + \delta\!\chi \right) ,
\]
where $ \vert \delta\!\chi \vert \leq 2^{-(D_2 + D)} $ if rounding is used to create the fourth split.
In $ \delta\!\chi $ the $ \delta $ touches the $ \chi $ to indicate this is one symbol to represent the error term following the conventions in~\cite{Bientinesi:2011:GMS:2078718.2078728}.
This means that 
\[
\widehat \chi =
\sigma \left(
\chi_0 + \chi_1 \sigma_1 
+ \chi_2 \sigma_2 + \chi_3 \sigma_3 \right)  + \sigma \delta\!\chi,
\]
where 
$ \vert \sigma \delta\!\chi  \vert \leq 2^{-(D_2 + D)} \sigma $.
This bound is pessimistic, since the last split, $ \chi_3 $ will hold a full FP64 number and hence even if $ \chi_0 = \chi_1 = \chi_2=0 $, 
the last split holds a FP64 appproximation of the FP64x2 number. This allows us to refine the bound to
\begin{equation}
\label{eqn:error_scalar}
 \vert \delta\!\chi  \sigma \vert \leq \min( 2^{-(D_2 + D)} \sigma,
\macheps \widehat \chi ) = 2^{-(D_2 + D)} \sigma,
\end{equation}
where $ \macheps = 2^{-D} $ and \response{ $ 2^{-(D_2 + D )}$ are the {\em machine epsilon} for an FP64 and a  cascaded number, respectively. Going forward, we will refer to the machine epsilon of the cascading number as $\machepsnew$}.

\subsection{Cascading vectors and dot products}
\label{sec:cascading_dot}

On our way to understanding the properties of using cascading arithmetic within a \gemm, we propose how to cascade vectors and examine a dot product with such vectors.

Let $ \widehat x $ and $ \widehat y $ be vectors of size $ k $ with FP64x2 numbers as their entries.  We wish to compute
$
\widehat x^T \widehat y
$.
We cascade each of the vectors:
\[
\begin{array}{rcl@{~}c@{~}l@{~}c@{~}l@{~}c@{~}l@{~}c@{~}l@{~}c@{~}l}
\widehat x & \approx & \sigma~( & x_0 &+&
\sigma_1 x_1  &+&
\sigma_2 x_2  &+&
\sigma_3 x_3 & )& ~ \\
\widehat y & \approx & \tau ~( & y_0  & + &
\sigma_1 y_1  &+&
\sigma_2 y_2  &+&
\sigma_3 y_3  &)
\end{array}
\]
so that 
\begin{equation}
    \label{eq:dot}
\begin{array}{ll@{\hspace{2pt}}c@{\hspace{2pt}}@{\hspace{2pt}}r@{\hspace{2pt}}r@{\hspace{2pt}}r@{\hspace{2pt}}r@{\hspace{2pt}}r@{\hspace{2pt}}r@{\hspace{2pt}}r@{\hspace{2pt}}r@{\hspace{2pt}}r@{\hspace{2pt}}r@{\hspace{2pt}}r@{\hspace{2pt}}l}
\widehat x^T \widehat y \approx \sigma \tau ~ ( & 
 x_0^T y_0  &+&
\sigma_1 x_0^T y_1  &+&
\sigma_2 x_0^T y_2   &+&
\sigma_3 x_0^T y_3    \\
&&+&
\sigma_1 x_1^T y_0   &+&
\sigma_1^2 x_1^T y_1  &+&
\sigma_1 \sigma_2 x_1^T y_2   &+&
\sigma_1 \sigma_3 x_1^T y_3  &+& \\
&&&&+&
\sigma_2  x_2^T y_0 
&+&
\sigma_2 \sigma_1 x_2^T y_1  &+&
\sigma_2^2 x_2^T y_2   &+&
\sigma_2 \sigma_3 x_2^T y_3  &+& \\
&&&&&&+&
\sigma_3 x_3^T y_0  
&+&
\sigma_3 \sigma_1 x_3^T y_1  &+&
\sigma_3 \sigma_2 x_3^T y_2   &+&
\sigma_3^2  x_3^T y_3   & ) .  \\\\[-0.1in]
&
\begin{array}[t]{@{}c@{}}
\underbrace{
\phantom{\chi_0 \psi_0 }
}
\\
\mbox{bin 0}
\end{array}
&&
\begin{array}[t]{@{}c@{}}
\underbrace{
\phantom{\chi_0 \psi_0 \sigma_0 }
}
\\
\mbox{bin 1}
\end{array}
&&
\begin{array}[t]{@{}c@{}}
\underbrace{
\phantom{\chi_0 \psi_0 \sigma_0 \tau_0}
}
\\
\mbox{bin 2}
\end{array}
&&
\begin{array}[t]{@{}c@{}}
\underbrace{
\phantom{\chi_0 \psi_0 \sigma_0 \tau_0}
}
\\
\mbox{bin 3}
\end{array}
&&
\begin{array}[t]{@{}c@{}}
\underbrace{
\phantom{\chi_0 \psi_0 \sigma_0 \tau_0}
}
\\
\mbox{bin 4}
\end{array}
&&
\begin{array}[t]{@{}c@{}}
\underbrace{
\phantom{\chi_0 \psi_0 \sigma_0 \tau_0}
}
\\
\mbox{bin 5}
\end{array}
&&
\begin{array}[t]{@{}c@{}}
\underbrace{
\phantom{\chi_0 \psi_0 \sigma_0 }
}
\\
\mbox{bin 6}
\end{array}
\end{array}
\end{equation}
We now examine under what circumstances splits $ i,j \in \{ 0,1,2\} $ of 
$ x_i^T y_j $ are guaranteed to be computed exactly. Consider that
\[
x_i^T y_j 
= \chi_{0,i} \psi_{0,j} + \chi_{1,i} \psi_{1,j} + \cdots + \chi_{k-1,i} \psi_{k-1,j},
\]
where $\chi_{p,i}$ is the $p$-th element of the cascaded vector $x_i$.
In order to compute each $ \chi_{p,i} \psi_{p,j} $ exactly {\em and} add these results together to be exact, we observe that:
\begin{itemize}
    \item 
    Each element in $ x $ must be split into the same ranges.  This means
    \begin{itemize}
        \item 
        Finding the element with largest magnitude and \response{dividing} all terms in that vector by the smallest power of two greater than that value (normalizing everything in the vector.)
        This tells us how to choose $ \sigma $ and $ \tau $.
        \item
        Splitting all other elements conformally, by which we mean that corresponding splits are taken from the same range.
    \end{itemize}
    Observe that elements other than the one with \response{the} largest magnitude may have leading zeroes, including in their most significant split%
    \footnote{While some of this can be addressed within the FP64 number, since it is a floating point number, there is the possibility of losing precision in the conversion.}.
    \item
    Adding $ k $ contributions from the terms in the dot product may yield a result with $ \lceil \log_2 k \rceil $ additional binary digits that must be tracked.
    Adding the products within a bin may magnify the number of bits further.
\end{itemize}
Assume that $ x $ and $ y $ are split \response{ such that each split accommodates $c_i$ bits, where $i\in \{ 0,1,2,3 \}$. Here,} $c_0 = D_0$, $ c_1 = D_{1} - D_{0} $, $c_2 = D_2 - D_1$  and $ c_3 \le 2D - D_2 $, and \response{we assume that} $ k = 256 $ so that $ \log_2 k = 8 $ as an example%
\footnote{
We will see that choosing $ k $ to be relative small
naturally occurs in high-performance algorithms for \gemm.
If $ k $ is smaller or larger, appropriate adjustments can be made.}. 
Then
\begin{itemize}
\item
To ensure that 
bin 0 is computed in full accuracy, we constrain $ c_0 $ such that  $ 2 c_0  + \log_2 k \leq 53 $.
This suggests that $ c_0 \leq (53 - 8)/2 $.
Thus, split 0 can accommodate $ c_0 = 22 $ bits. \NoShow{\greg{as long as we continue to block K to be no bigger than 256, and even more bits otherwise.}}
\NoShow{\remove{In our BLIS runs on Intel(R) Xeon processors, $k$ is blocked by 256 by DGEMM (hence our choice of $k=256$ above.) But notice that if $k$ happens to be smaller than 256, we can pick a slightly larger $c_0$ in our implementation. But for larger problems, $c_0=22$ is a fair choice.}
}
\item
To ensure that 
bins 1 and 2 are computed in full accuracy, we must constrain $ c_1 $ such that  both $ c_0 + c_1 + \log_2 k + \lceil \log_2 2 \rceil \leq 53 $
and 
$ 2c_1 + \log_2 k + \lceil \log_2 3 \rceil \leq 53 $.  The $ 2 $ and $ 3 $ come from  the number of terms that need to be added in bin 1 and bin 2 respectively.
This implies that split 1 can accommodate only $ c_1 = 21 $ bits.
\item
To ensure that 
bin 2 is computed in full accuracy, we constrain $ c_2 $ such that   $ c_0 + c_2 + \log_2 k + \lceil \log_2 4 \rceil \leq 53 $, which implies that split 2 can accommodate $ c_2 = 21 $ bits. 
\NoShow{
\remove{And again, if $k$ happens to be smaller, we can be more aggressive here and perhaps try $c_2=22.$}}
\item
The remaining bins (3-6) contribute to lowest order terms and hence they place no constraints on the splittings.
In other words, $c_3=53$. 
\end{itemize}
The total number of bits stored with cascaded \response{representation} is now at least $ 22+21+21+53 = 117 $ which is more than the number of bits in the mantissa of a FP64x2 number (if there is no lucky gap). Note that when $k<256$, we can store even more than 117 bits. That is, in our implementation, when choosing the sizes of $c_i$, we actually look at $k$ as well, and if $k=64$, for instance, then we allow $c_0=23$. Regardless of the size of $k$, the minimal 117 bits means intermediate results are potentially accumulated in a precision higher than FP64x2 accommodates.
Summation within bins can be in FP64 arithmetic.
Summation across bins 0 through 2 must be performed in FP64x2 addition or with quick two-sum arithmetic~\cite{Dekker1971}. Summation across bins~3 through~6 can be done in FP64 arithmetic, since $c_3=53$ implies that these terms are not even attempted to be done error-free.  
To further preserve accuracy, the
 \response{addition} of contributions across bins starts with bin~6 and ends with bin~0.

The computation in~(\ref{eq:dot}) can be reformulated as the computation of the sixteen  terms
\[
\left( \begin{array}{r}
x_0^T \\ \hline
\sigma_1 x_1^T \\ \hline
\sigma_2 x_2^T \\ \hline
\sigma_3 x_3^T 
\end{array} \right)
\begin{array}{c}
\left( \begin{array}[t]{c | c | c | c}
y_0 & \sigma_1 y_1 & \sigma_2 y_2 & \sigma_3 y_3 
\end{array} \right) \\
~~ \\
~~ \\
~~
\end{array}
=
\left( \begin{array}{r | r | r | r}
 x_0^T y_0 &  \sigma_1 x_0^T y_1 &  \sigma_2 x_0^T y_2 & \sigma_3 x_0^T y_3 \\ \hline
\sigma_1  x_1^T y_0 & \sigma_1^2 x_1^T y_1 & \sigma_1 \sigma_2 x_1^T y_2 & \sigma_1 \sigma_3 x_1^T y_3 \\ \hline
\sigma_2  x_2^T y_0 & \sigma_2 \sigma_1 x_2^T y_1 & \sigma_2^2 x_2^T y_2 & \sigma_2 \sigma_3 x_2^T y_3 \\ \hline
\sigma_3  x_3^T y_0 & \sigma_3 \sigma_1 x_3^T y_1 & \sigma_3 \sigma_2 x_3^T y_2 & \sigma_3^2 x_3^T y_3 
\end{array}
\right)
\]
followed by the addition of these terms, yielding a single scalar.
\NoShow{\remove{This will become important in our discussion of the high performance implementation of cascading matrices in Section~\ref{sec:BLISCascadingGemm}.}}

Finally, we observe an alternative for computing  the sum of bins 3--6: 
\begin{equation}
    \label{eqn:xy}
\begin{array}{r@{\hspace{2pt}}r@{\hspace{2pt}}r@{\hspace{2pt}}r@{\hspace{2pt}}r@{\hspace{2pt}}r@{\hspace{2pt}}r@{\hspace{2pt}}r@{\hspace{2pt}}r@{\hspace{2pt}}r@{\hspace{2pt}}r@{\hspace{2pt}}r@{\hspace{2pt}}r@{\hspace{2pt}}r@{\hspace{2pt}}r}
\mbox{bins 3--6} =  &
\sigma_3 x_0^T y_3    
&+&\\
& \sigma_1 \sigma_2 x_1^T y_2   &+&
\sigma_1 \sigma_3  x_1^T y_3 &+& \\
& \sigma_2 \sigma_1 x_2^T y_1  &+&
\sigma_2^2 x_2^T y_2   &+&
\sigma_2 \sigma_3 x_2^T y_3  &+& \\
&
\sigma_3 x_3^T y_0  &+&
\sigma_3 \sigma_1 x_3^T y_1 &+&
\sigma_3 \sigma_2 x_3^T y_2  &+&
\sigma_3^2  x_3^T y_3 \\[-0.1in]
&
\begin{array}[t]{@{}c@{}}
\underbrace{
\phantom{\chi_0 \psi_0 \sigma_0 \tau_0}
}
\\
\mbox{bin 3}
\end{array}
&&
\begin{array}[t]{@{}c@{}}
\underbrace{
\phantom{\chi_0 \psi_0 \phantom{\sigma_0 \tau_0}}
}
\\
\mbox{bin 4}
\end{array}
&&
\begin{array}[t]{@{}c@{}}
\underbrace{
\phantom{\chi_0 \psi_0 \phantom{\sigma_0 \tau_0}}
}
\\
\mbox{bin 5}
\end{array}
&&
\begin{array}[t]{@{}c@{}}
\underbrace{
\phantom{\chi_0 \psi_0 \sigma_0 \tau_0}
}
\\
\mbox{bin 6}
\end{array}
\\
= &
\multicolumn{9}{l}{
\sigma_3 x_0^T  y_3   
+
\sigma_1 x_1^T 
\begin{array}[t]{c}
\underbrace{
\left( \sigma_2  y_2  +
\sigma_3 y_3 \right)
} \\
\sigma_2 y_4 
\end{array} 
+
\sigma_2 x_2^T 
\begin{array}[t]{c}
\underbrace{
\left( \sigma_1 y_1   +
\sigma_2 y_2   +
\sigma_3 y_3  
\right) 
} \\
\sigma_1 y_5 
\end{array}
}
\\
&
\multicolumn{8}{l}{
~~~~~~~~~ +
\sigma_3 x_3^T 
\begin{array}[t]{c}
\underbrace{
\left( y_0   +
\sigma_1 y_1  +
\sigma_2 y_2 
+
\sigma_3 y_3 
\right)
} \\
y_6 
\end{array}
}
\\
=
&
\multicolumn{7}{l}{
\sigma_3 x_0^T y_3  
+
\sigma_1 \sigma_2 x_1^T y_4 
+
\sigma_2 \sigma_1 x_2^T y_5 
+
\sigma_3 x_3^T y_6,
}
\end{array}
\end{equation}
where $y_4 = y_2 + \left({\sigma_3}/{\sigma_2}\right)y_3$, $y_5 = y_1 + \left({\sigma_2}/{\sigma_1}\right)y_2 + \left({\sigma_3}/{\sigma_1}\right)y_3$ and 
$y_6 = y_0 + {\sigma_1}y_1 + {\sigma_2}y_2 + {\sigma_3}y_3.$ 
\response{This is {\bf not} the same as dropping the products from bins 4 to 6 altogether, even though both techniques will lead to 10 products.}

This also shows how the approximate FP64x2 dot product can be cascaded into {\bf ten} rather than sixteen FP64 dot products. Viewing splits of $ x $ as rows of a $ 4 \times k $ matrix and splits of $ y $ and related vectors as columns in a $ k \times 7 $ matrix, 
(\ref{eq:dot})  can be formulated
as 
\begin{eqnarray*}
\lefteqn{
\left( \begin{array}{c}
\phantom{\sigma_0} x_0^T \\ \hline
\sigma_1 x_1^T \\ \hline
\sigma_2 x_2^T \\ \hline
\sigma_3 x_3^T 
\end{array} \right)
\begin{array}{c}
\left( \begin{array}[t]{c | c | c | c || c | c | c }
 y_0 & \sigma_1 y_1 & \sigma_2 y_2 & \sigma_3 y_3  & \sigma_2 y_4  & \sigma_1 y_5  & \tau_0 y_6 
\end{array} \right) \\
~~ \\
~~ \\
~~
\end{array}
= } \\
& \left( \begin{array}{c | c | c | c || c | c | c }
\phantom{\sigma_0 \tau_0} x_0^T y_0 & \phantom{\sigma_0} \sigma_1 x_0^T y_1 & \phantom{\sigma_0} \sigma_2 x_0^T y_2 & \phantom{\sigma_0} \sigma_3 x_0^T y_3 
 & \star & \star  & \star \\ \hline
\sigma_1  x_1^T y_0 & \sigma_1 \sigma_1 x_1^T y_1 & \star & \star &
\sigma_1 \sigma_2 x_1^T y_4 & \star & \star \\ \hline
\sigma_2  x_2^T y_0 & \star & \star  & \star &
\star & \sigma_2 \sigma_1 x_2^T y_5 & \star \\ \hline
\star & \star & \star & \star &
\star & \star & \sigma_3  x_3^T y_6
\end{array}
\right)
\end{eqnarray*}
followed by the summation of the elements or the appropriate contributions into bins and then the accumulation of those, in appropriate precision.   Here the $\star$ entries do not need to be computed.

We note that
\[
\widehat x =
( x_0 + x_1 \sigma_1 + x_2 \sigma_2 + x_3 \sigma_3 + \delta\!x ) \sigma 
,
\mbox{~where~} \vert ~ \delta\!x  ~ \vert \leq \machepsnew \vec \jmath .
\]
In the symbol $ \delta\!x $ the $ \delta $ touches the $ x $ to indicate this is one symbol to represent the error vector;
$ \vert ~ \cdot ~ \vert $ returns the element-wise absolute value; and 
$ \leq $ is an element-wise comparison; and   $ \vec \jmath $ is the vector of appropriate size of all ones.
Letting $ \chi_i $ equal the entry with maximal magnitude in $ x $ (which equals $ \| x \|_\infty$), 
\[
\chi_i = \pm  \period \beta_0 \beta_1 \cdots \times 2^t,
\]
where $ \beta_0 = 1 $, \response{and} $ \sigma = 2^{-t } $ in our discussion. 
Hence
$
\| x  \|_\infty \leq \sigma \leq 2 \| x  \|_\infty
$.
We conclude that
$ \sigma \leq 2 \| x \|_\infty $
so that
\[
 \vert \delta\!\chi  \sigma \vert \leq 2 \machepsnew \| x \|_\infty \vec \jmath .
 \]
 Again, this is pessimistic given that the last split of the vector, $ x_3 $ carries a full FP64 remainder.  In line with the reasoning that resulted in (\ref{eqn:error_scalar}), this yields
 \[
 \widehat x =
( x_0 + x_1 \sigma_1 + x_2 \sigma_2 + x_3 \sigma_3  ) \sigma 
+ \delta\!x \sigma
,
 \]
where
\begin{equation}
\label{eqn:sigma}
\vert ~ \delta\!x  ~ \vert \sigma  \leq  \left( 2\machepsnew \| x \|_\infty \vec j \right)
.
\end{equation}
Here $ \macheps = 2^{-53} $ and $ \machepsnew = 2^{-117} $ when  $ c_0 = 22 $, $ c_1 = c_2 = 21 $, and $ c_3 = 53 $.

\NoShow{
\remove{These observation will become important in our discussion of how to perform one FP64x2 \gemm\ in terms of ten FP64 \gemm s, in Sections~\ref{sec:BLISCascadingGemm} and~\ref{sec:dot2}.  In particular, it will allow us to reason about reuse of data.}

\remove{
{\bf I think this needs to move to future research, in a discussion for the need for further numerical theory.}
We end this subsection by noting that the quantizing of a vector does not potentially wipe out {\em all} accuracy in elements that are small in magnitude relative to the largest such element.  The fourth chunk, since it is a floating point number, really contains the remainder of the original FP64x2 number, after the other chunks have been subtracted out.  Thus, unless that last chunk underflows, there will be $ D_2 - D $ bits off accuracy left in the mantissa. In general, this is a problem with all fixed-point quantization algorithms. It turns out that we aren't solving the original $A*B$ problem, we are solving one that approximately some of the terms of $A$ and $B$, including these dropped bits. 
}

\remove{
{\bf I think this needs to move to future research, in a discussion for the need for further numerical theory.}
In general, we need to be cautious about dropping bits. After all, it's easy to generate floating point numbers that will completely break this approach. However, the great news is that we're able to detect when we've dropped bits and even predict and bound the magnitude of the error from the dropped bits. We can do this to determine if our algorithm's result is safe, and if it's not safe on one of the dot products, a fall-back method can be used.
}
}

\subsection{Cascading matrices and \gemm}

\label{sec:CascadingGemm}

We are now ready to discuss how to cascade FP64x2 matrices $ \widehat A $ and $ \widehat B $ and approximately compute
$ \widehat A \widehat B $.
Here we restrict the ``inner dimension'' of the matrices to equal the $ k $ discussed in Section~\ref{sec:cascading_dot} since later we will see that high-performance implementations of \gemm\ block the operands in a way that imposes that restriction.  Such a \gemm\ is often referred to as a {\em rank-k update}.

Now
\[
\begin{array}{r@{\hspace{2pt}}c@{\hspace{2pt}}r@{}r@{\hspace{2pt}}c@{\hspace{2pt}}r@{}r@{\hspace{2pt}}c@{\hspace{2pt}}r@{}r@{\hspace{2pt}}c@{\hspace{2pt}}r@{}l@{\hspace{2pt}}l@{\hspace{2pt}}}
\widehat A &=& \Sigma ~ ( A_0 &  &+& \sigma_1 & A_1  &+&  \sigma_2 & A_2 &+& \sigma_3  & A_3 & )  \\
  \widehat B &=&( B_0 &  &+& \sigma_1 & B_1  &+& \sigma_2 & B_2  &+& \sigma_3 & B_3  & ) ~ T  ,
 \end{array}
\]
where $ \Sigma $ and $ T $ are the diagonal matrices%
\footnote{
Recall that multiplying a matrix from the left by a diagonal matrix scales its rows by the corresponding element of the diagonal matrix.
Similarly,
multiplying a matrix from the right by a diagonal matrix scales its columns.
}
that appropriately scale the rows of $ \widehat A $ (which become $ \widehat x $ in the dot products discussed in Section~\ref{sec:cascading_dot}) and columns of $ \widehat B $
(which become $ \widehat y $ in the dot products discussed in Section~\ref{sec:cascading_dot}).
The cascading of rows of $ \widehat A $ and columns of $ \widehat B $ follow exactly the discussion in the last section, since elements of the product are computed with dot products.

Then
{
\begin{equation}
    \label{eqn:Gemm16}
\begin{array}{l@{\hspace{2pt}}l@{\hspace{2pt}}r@{\hspace{2pt}}r@{\hspace{2pt}}r@{\hspace{2pt}}r@{\hspace{2pt}}r@{\hspace{2pt}}r@{\hspace{2pt}}r@{\hspace{2pt}}r@{\hspace{2pt}}r@{\hspace{2pt}}r@{\hspace{2pt}}r@{\hspace{2pt}}r@{\hspace{2pt}}r}
\widehat A  \widehat B \approx \Sigma ~ ( &
 A_0 B_0  &+&
\sigma_1  A_0 B_1  &+&
\sigma_2 A_0 B_2  &+&
\sigma_3 A_0 B_3   \\
&&+&
\sigma_1 A_1 B_0   &+&
\sigma_1^2 A_1 B_1  &+&
\sigma_1 \sigma_2 A_1 B_2  &+&
\sigma_1 \sigma_3 A_1 B_3 &+& \\
&&&&+&
\sigma_2 A_2 B_0  &+&
\sigma_2 \sigma_1 A_2 B_1 &+&
\sigma_2^2 A_2 B_2  &+&
\sigma_2 \sigma_3 A_2 B_3 &+& \\
&&&&&&+&
\sigma_3 A_3 B_0 &+&
\sigma_3 \sigma_1 A_3 B_1 &+&
\sigma_3 \sigma_2 A_3 B_2 &+&
\sigma_3^2 A_3 B_3 & )~ T . \\[-0.1in]
&
\begin{array}[t]{@{}c@{}}
\underbrace{
\phantom{A_0 A_0}
}
\\
\mbox{bin 0}
\end{array}
&&
\begin{array}[t]{@{}c@{}}
\underbrace{
\phantom{A_0 B_0 \tau_0 }
}
\\
\mbox{bin 1}
\end{array}
&&
\begin{array}[t]{@{}c@{}}
\underbrace{
\phantom{A_0 A_0 \sigma_0 \tau_0}
}
\\
\mbox{bin 2}
\end{array}
&&
\begin{array}[t]{@{}c@{}}
\underbrace{
\phantom{A_0 A_0 \sigma_0 \tau_0}
}
\\
\mbox{bin 3}
\end{array}
&&
\begin{array}[t]{@{}c@{}}
\underbrace{
\phantom{A_0 A_0 \sigma_0 \tau_0}
}
\\
\mbox{bin 4}
\end{array}
&&
\begin{array}[t]{@{}c@{}}
\underbrace{
\phantom{A_0 A_0 \sigma_0 \tau_0}
}
\\
\mbox{bin 5}
\end{array}
&&
\begin{array}[t]{@{}c@{}}
\underbrace{
\phantom{A_0 A_0 \sigma_0 }
}
\\
\mbox{bin 6}
\end{array}
\end{array}
\end{equation}
}%
The point is that one can approximately compute the FP64x2 \gemm\ in terms of 16 FP64 \gemm s, extending the observations about dot products.
 This means that
\begin{itemize}
    \item
    $ D_0 $, $ D_1 $, and $ D_2 $ are picked as discussed in Section~\ref{sec:cascading_dot}.
    \item 
    Elements within rows of $ \widehat A $ must be split conformally.
    \item
   Elements within columns of $ \widehat B $ must be split conformally.
\end{itemize} 
The terms in bins 0--2 are computed exactly.  Unless ``catastrophic cancellation'' happens, the contributions of those bins will contribute the most significant bits of the results. 

Again, bins 3--6 can be combined, since they contribute to lower order terms.
\[
\begin{array}{r@{\hspace{2pt}}l@{\hspace{2pt}}l@{\hspace{2pt}}r@{\hspace{2pt}}r@{\hspace{2pt}}r@{\hspace{2pt}}r@{\hspace{2pt}}r@{\hspace{2pt}}@{\hspace{2pt}}r@{\hspace{2pt}}r@{\hspace{2pt}}r@{\hspace{2pt}}r@{\hspace{2pt}}r@{\hspace{2pt}}r}
\mbox{bins 3--6} =  &
 \sigma_3 A_0 B_3    
&+&\\
& \sigma_1 \sigma_2  A_1 B_2 &+&
\sigma_1 \sigma_3 A_1 B_3  &+& \\
& \sigma_2 \sigma_1 A_2 B_1 &+&
\sigma_2^2 A_2 B_2 &+&
\sigma_2 \sigma_3 A_2 B_3 &+& \\
&
\sigma_3 A_3 B_0 &+&
\sigma_3 \sigma_1 A_3 B_1 &+&
\sigma_3 \sigma_2  A_3 B_2 &+&
\sigma_3^2 A_3 B_3 \\[-0.1in]
&
\begin{array}[t]{@{}c@{}}
\underbrace{
\phantom{A_0 B_0 \sigma_0 \tau_0}
}
\\
\mbox{bin 3}
\end{array}
&&
\begin{array}[t]{@{}c@{}}
\underbrace{
\phantom{\chi_0 \psi_0 \phantom{\sigma_0 \tau_0}}
}
\\
\mbox{bin 4}
\end{array}
&&
\begin{array}[t]{@{}c@{}}
\underbrace{
\phantom{\chi_0 \psi_0 \phantom{\sigma_0 \tau_0}}
}
\\
\mbox{bin 5}
\end{array}
&&
\begin{array}[t]{@{}c@{}}
\underbrace{
\phantom{\chi_0 \psi_0 \sigma_0 \tau_0}
}
\\
\mbox{bin 6}
\end{array}
\\
= &
\multicolumn{9}{l}{
A_0  B_3 \sigma_3  
+
A_1 \sigma_1
\begin{array}[t]{c}
\underbrace{
\left( B_2  \sigma_2 +
B_3 \sigma_3 \right)
} \\
B_4 \sigma_2 
\end{array} 
+
A_2 \sigma_2
\begin{array}[t]{c}
\underbrace{
\left( B_1  \sigma_1 +
B_2 \sigma_2  +
B_3 \sigma_3 
\right) 
} \\
B_5 \sigma_1
\end{array}
}
\\
&
\multicolumn{9}{l}{
~~~~~~~~~~ +
\sigma_3 A_3
\begin{array}[t]{c}
\underbrace{
\left( B_0   +
\sigma_1 B_1 +
\sigma_2 B_2  
+
\sigma_3 B_3 
\right)
} \\
B_6 
\end{array}
}
\\
=
&
\multicolumn{9}{l}{
\sigma_3 A_0 B_3  
+
\sigma_1 \sigma_2 A_1 B_4 
+
\sigma_2 \sigma_1 A_2 B_5 
+
\sigma_3 A_3 B_6, 
}
\end{array}
\]
where $B_4 = B_2 + B_3\left({\sigma_3}/{\sigma_2}\right)$, $B_5 = B_1 + B_2\left({\sigma_2}/{\sigma_1}\right) + B_3\left({\sigma_3}/{\sigma_1}\right)$ and 
$B_6 = B_0 + B_1{\sigma_1} + B_2{\sigma_2} + B_3{\sigma_3}.$  

This also shows how the approximate FP64x2 \gemm\ can be cascaded into {\bf ten} rather than sixteen FP64 multiplications. 
Now 
(\ref{eqn:Gemm16})  can be depicted as the computation of 
\begin{eqnarray}
\nonumber
\lefteqn{
\left( \begin{array}{c}
\phantom{\sigma_0} A_0 \\ \hline
\sigma_1 A_1 \\ \hline
\sigma_2 A_2 \\ \hline
\sigma_3 A_3 
\end{array} \right)
\begin{array}{c}
\left( \begin{array}[t]{c | c | c | c || c | c | c }
B_0 & \sigma_1 B_1 & \sigma_2 B_2 & \sigma_3 B_3  & \sigma_2 B_4  & \sigma_1 B_5  & \phantom{\tau_0} B_6 
\end{array} \right) \\
~~ \\
~~ \\
~~
\end{array}
= } \\
\label{eqn:Gemm10}
& \left( \begin{array}{c | c | c | c || c | c | c }
 \phantom{\sigma_0} A_0 B_0 & \sigma_1 A_0 B_1 & \sigma_2 A_0 B_2 & \sigma_3 A_0 B_3 
 & \star & \star  & \star \\ \hline
\sigma_1 A_1 B_0 & \sigma_1^2 A_1 B_1 & \star & \star &
\sigma_1 \sigma_2 A_1 B_4 & \star & \star \\ \hline
\sigma_2 A_2 B_0 & \star & \star  & \star &
\star & \sigma_2 \sigma_1 A_2 B_5 & \star \\ \hline
\star & \star & \star & \star &
\star & \star & \sigma_3 A_3 B_6
\end{array}
\right)
\end{eqnarray}
followed by the summation of the elements or the appropriate contributions into bins and then the accumulation of those, in appropriate precision.   Here the $\star$ entries do not need to be computed.

We finish by noting that from (\ref{eqn:sigma}) it follows that
\[
\widehat A = \Sigma
( A_0 + \sigma_1 A_1 + \sigma_2 A_2 + \sigma_3 A_3 ) + \Sigma \Delta\!\!A 
,
\]
where 
\[
 \vert \Sigma \Delta\!\!A \vert 
\leq 
 2 \machepsnew D_A J 
\mbox{~with~}
D_A = {\rm diag}( 
\max_j \vert \alpha_{0,j} \vert, 
\max_j \vert \alpha_{1,j} \vert ,
\ldots
)
.
\]
Here, in $ \Delta\!\!A $ the $ \Delta $ touches the $ A $ to indicate this is one symbol to represent the error matrix;
$ \vert ~ \cdot ~ \vert $ returns the element-wise absolute value; and 
$ \leq $ is an element-wise comparison  $ J $ is the matrix of appropriate size of all ones.

If $ \delta\!\alpha_{i,j} $ denotes the error incurred when cascading element $ \alpha_{i,j} $, 
this captures that 
\[ \vert \delta\!\alpha_{i,j} \vert \leq 
 2 \machepsnew \max_{j} \vert \alpha_{i,j} \vert .
\]
An interpretation of this is that storing a real-valued matrix as a FP64x2 matrix incurs an element-wise relative error.  Cascading a FP64x2 matrix $ \widehat A $ with our method incurs an error in a given element that is proportional to the magnitude of the largest element in the same row for $ \widehat A $.  
However, because the last split of a given element captures the remainder of what it left when the other splits are subtracted from the original FP64x2 number, when those  other splits equal zero, what is left is a FP64 approximation and hence the error is never worse than what is incurred when storing from FP64x2 to FP64.
\response{This feature is missing from competitive algorithms where the worst error found from converting FP64x2 into fixed-point might be arbitrarily large if one doesn't have enough splits.}

\subsection{Blocking for the $ k $ dimension}

\response{
We are cascading the matrices into splits for both  error-free
(in effect, fixed point implemented in floating point) and error incurring (floating point)  multiplication. Since we force the computation of only the first few bins of matrix multiplication to be error-free, we must pay attention to the inner $k$ dimension of matrices. 
In theory, as $k$ increases, the number of splits would also increase to ensure the first few bins are computed error-free. However, importantly, high-performance \gemm~ implementation already requires the matrices be blocked for the registers as well as the various cache levels~\cite{Goto1}. We know that the $k$ dimension of matrix $A$ and $B$ is blocked by $k_C$, a parameter that is determined based on the size of the L1 cache~\cite{BLIS4}. 

We take advantage of the $k$-blocking done for high performance, and use the same value of $k_C$ to determine how many bits we can track in each split to ensure the computations of the first few bins are error-free. This in turn allows us to fix the number of splits of $A$ and $B$ to four and the total number of \gemm s to ten. In our proposed method, the number of splits is independent of the input matrix sizes, while in~\cite{OzakiScheme} it is matrix-size dependent. 

{One could manually block OzBLAS results in the $k$ dimension as suggested in~\cite{OzakiScheme}, but this ignores one of the key difference between our strategies. OzBLAS requires more splits (and the number of multiplies grows quadratically with the splits, so the number of multiplies that need to be performed is higher) since their method attempts to reduce the errors in all the bins. While this computationally intense strategy produces more accurate results, we will show that it can suffice to allow approximations in the final bins for much faster {and predictable} time.}
}

\NoShow{
\devangi{Devangi Comments: Based on the Ozaki code of OzBLAS:
\begin{itemize}
    \item The codes provides blocking of m, n of the matrices, not in the k dimension. 
    \item The number of splits seems to be independent of the problem size. 
    \item The number of splits seems to be dependent on the range of the input data. 
\end{itemize}
}
}

\NoShow{
\greg{ \emph{Greg's original new section} 

The closest competing method to us is the Ozaki Scheme \cite{OzakiScheme}, which states that $K$ (the inner dimension of the GEMM) is a major factor in the determination of the number of splits. Other factors may include the range of the input matrices. But the number of splits is so important to this kind of work that the authors suggest that one may in practice run a "blocked Ozaki Scheme", where one blocks $K$ and runs the method in a loop around $K$.

Since we are also splitting up matrices into pieces, we too must pay attention to $K$. But, unlike the Ozaki Scheme, we get aggressive on the remaining bins. Instead of splitting up our input so that every DGEMM can be done error-free, we only force the first few bins to be error-free. The reason is simple: doing too many splits results in too many matrix multiplies. In some cases, the Ozaki Scheme might have a minimum of 6 splits and no upper bound if $K$ isn't sufficiently blocked. In those cases, the Ozaki scheme will have too much work and perform signficantly worse than a great double-double implementation. The Ozaki paper ignores this by only comparing themselves to a poor implementation of double-double. The fastest and best claim to be around 20x slower than DGEMM, so no method that does more than {\bf twenty} multiplies should ever be considered. Keep in mind that if using the Ozaki scheme, six splits implies between 21 and 36 multiplies. By the time the Ozaki scheme reaches six splits, which is the minimum number in their charts for their smallest tested range, they are already doing more work than is practical. So the smallest range and most ideal case presented in their paper should easily run slower than the best double-double algorithm. And that's just with six splits. They have some references to cases with 10 or more splits.

It was this reason why we started looking for techniques that did not have more than 20 matrix-matrix multiplies. And then we asked the question: "how accurate can we get if we limit ourselves to just 10 matrix-matrix multiplies?"

Granted, we can't promise the same accuracy as double-double. But we can get an approximation with only 10 DGEMM calls, and we can figure out where we may have gone wrong, something most GEMM algorithms ignore completely. 

Given a choice between a DDGEMM that runs 20x slower than DGEMM, and something like the Ozaki scheme whose best case example already starts with 21 DGEMM calls, it seems like the Ozaki scheme is beautiful but just too costly. We propose it's better to give a slightly less trustworthy result as long as it takes signficantly less time than DDGEMM (which ten multiplies would), and as long as we have a way of "detecting problems". Better a fast solution that can warn a user of errors, than a solution that runs several times slower than every competing method, but never has an error. While we won't get provably more accurate results, we'll get them faster and are able to determine potential problems, and those individual cases can be addressed instead of slowing down all the cases with a multitude of splits. 

So we choose $K$-blocking not based on the input range, not based on accuracy, but based on performance. In our Haswell implementation, as already stated, $K_C=256$ and that's the ideal blocking already present in BLIS for DGEMM. So in BLIS, $K$ is already blocked at 256. Our algorithm does not block this further. If $K=256$ gets the ideal performance, then that's the $K$-blocking to use. 

Once we've fixed the $K$ size, we now can choose the most number of bits that fit. So, yes, if $K=16$ instead of $256$, we automatically cover more bits during our splitting method. But we know in advance that $K$ will never be greater than 256 because for performance it's already pre-blocked that way. So we can pick our splits to cover as many bits as possible for the given input $K$, or for 256, whichever number is smaller.

Since we always do four splits, regardless of the input range or $K$ size, we always have the same ten multiplies (and it's not the same ten Ozaki gets from four splits. \footnote{Normally, if there are four splits and they are equal in bits, one has to do only 10 multiplies to approximate the final result, however in Section ~\ref{sec:CascadingGemm} we still manage ten multiplies and a fixed number of flops regardless of $K$.})

One could easily extend the Ozaki scheme to block for ideal performance as well, but without our additional trick of combining the last few splits into a single FP64 matrix and completely giving up on error-free computations on the ending bits, we'd be just as dependent on $K$ as the Ozaki scheme unfortunately is.
}
}
\NoShow{
\subsection{\response{Analysis}}
\label{sec:analysis}
We here provide a
forward error analysis of casting a FP64x2 \gemm\ as a cascaded multiplication in FP64.

\subsubsection{The classic error result}

Two classic error results will play a role in our analysis, details for which can be found in~\cite{Bientinesi:2011:GMS:2078718.2078728} based on the classic work by Higham~\cite{Higham:2002:ASN}.

Let {\rm fl} denote a result of performing computer arithmetic in the appropriate precision.  Then if are vectors of size $ k $ already stored in a given floating point storage format, then 
\[
\vert
{\rm fl}( x^T y ) - x^T y \vert \leq
\gamma_k \vert x \vert ^T \vert y \vert .
\]
Here 
$ \gamma_k = k \macheps / ( 1 - k \macheps) \approx k \macheps $ and $ \vert x \vert $ equals the element-wise absolute value of the given vector.
For matrices, this result becomes
\[
\vert
{\rm fl}( A B ) - A B \vert \leq
\gamma_k \vert A\vert ^T \vert B \vert .
\]

\subsubsection{Conversion error}

We first analyze error introduced by partitioning the matrices. 
{\bf reword!!}
transforming the matrices propagates into \response{an} error in the matrix multiplication under the conjecture  that this is the primary source of significant error beyond the usual error introduced when computing with floating point arithmetic.  We compare the result with standard error bounds for \gemm\ in a given precision, in this case FP64x2.

We start with a simple worst-case example of how cascading can reduce accuracy in a dot product, the building block of a matrix multiplication.
Let 
\[
\widehat x = \left( \begin{array}{c}
1 \\
\epsilon
\end{array}
\right)
\mbox{ and }
\widehat y = \left( \begin{array}{c}
0 \\
1
\end{array}
\right).
\]
If $ \epsilon < 2^{-64} $, then 
cascading $\widehat x $ and $ \widehat y $ yields
\[
\begin{array}[t]{c}
\underbrace{
\left( \begin{array}{c}
1 \\
\epsilon
\end{array}
\right)
}
\\
\widehat x
\end{array}
\approx
\begin{array}[t]{c}
\underbrace{( 1 )}
\\
\sigma
\end{array}
(
\begin{array}[t]{c}
\underbrace{
\left( \begin{array}{c}
1 \\
\epsilon
\end{array}
\right)
}
\\
x_0
\end{array}
+
\sigma_1
\begin{array}[t]{c}
\underbrace{
\left( \begin{array}{c}
0 \\
0
\end{array}
\right)
}
\\
x_1
\end{array}
+
\sigma_2
\begin{array}[t]{c}
\underbrace{
\left( \begin{array}{c}
0 \\
0
\end{array}
\right)
}
\\
x_2
\end{array}
+
\sigma_3
\begin{array}[t]{c}
\underbrace{
\left( \begin{array}{c}
0 \\
{\rm fl}_{\rm FP64}(\epsilon / \sigma_3 )
\end{array}
\right)
}
\\
x_3
\end{array}
),
\]
where $ {\rm fl}_{\rm FP64}( \zeta ) $ equals the FP64 approximation to $ \zeta $, and
\[
\begin{array}[t]{c}
\underbrace{
\left( \begin{array}{c}
0 \\
1
\end{array}
\right)
}
\\
\widehat y
\end{array}
\approx
\begin{array}[t]{c}
\underbrace{( 1 )}
\\
\tau
\end{array}
(
\begin{array}[t]{c}
\underbrace{
\left( \begin{array}{c}
0 \\
1
\end{array}
\right)
}
\\
y_0
\end{array}
+
\sigma_1
\begin{array}[t]{c}
\underbrace{
\left( \begin{array}{c}
0 \\
0
\end{array}
\right)
}
\\
y_1
\end{array}
+
\sigma_2
\begin{array}[t]{c}
\underbrace{
\left( \begin{array}{c}
0 \\
0
\end{array}
\right)
}
\\
y_2
\end{array}
+
\sigma_3
\begin{array}[t]{c}
\underbrace{
\left( \begin{array}{c}
0 \\
0
\end{array}
\right)
}
\\
y_3
\end{array}
).
\]
Then
$
\widehat x^T \widehat y = \epsilon
$ when computed in FP64x2 arithmetic (incurring no error in this very special case)
and $ \widetilde x^T \widetilde y = 
{\rm fl}_{FL64}( \epsilon ) $ when computed as a cascaded dot product, incurring no error other than that incurred when cascading $ \widehat x $.  In other words, the accuracy is no better than if the computation had been performed in FP64 arithmetic because the entire vector $ \widehat x $ is conformally chunked.

Assuming $ \widehat A $ is  $ m \times k $
and $ \widehat B $ is $ k \times n $,
\begin{eqnarray*}
\widehat A
\widehat B
&=& 
[
\begin{array}[t]{c}
\underbrace{
\Sigma
( A_0 + \sigma_1 A_1 +  \sigma_2 A_2 + \sigma_3 A_3  )
}
\\
\widetilde
A
\end{array}
+
\Sigma \Delta\!\!A 
]
[
\begin{array}[t]{c}
\underbrace{
(
 B_0 + \sigma_1 B_1 + \sigma_2 B_2 + \sigma_3 B_3 
 )
 T
 }
 \\
 \widetilde
 B
 \end{array}
 + \Delta\!\!B 
T
]
\\
&=&
\widetilde A
\widetilde B
+ E,
\end{eqnarray*}
where
\[
E = 
\widetilde A
\Delta\!\!B T
+ 
\Sigma \Delta\!\!A 
\widetilde B 
+
\Sigma \Delta\!\!A 
\Delta\!\!B T
\]
captures an error that ignores the error in  computing bin~3-6 and in the accumulation of the bins, which we conjecture is minor by comparison.
Now 
\begin{eqnarray*}
\vert E \vert &=& 
\vert
\widetilde A
\Delta\!\!B T
+ 
\Sigma \Delta\!\!A 
\widetilde B
+
\Sigma \Delta\!\!A 
\Delta\!\!B
\vert \leq 
\vert \widetilde A
\vert
\vert
\Delta\!\!B T
\vert
+ 
\vert
\Sigma \Delta\!\!A 
\vert
\vert
\widetilde B
\vert
+
\vert
\Sigma \Delta\!\!A 
\vert
\vert
\Delta\!\!B T
\vert \\
& \leq &
\vert
 \widetilde A
\vert
\left(
2 \machepsnew  J D_B
\right)
+ 
\left(
2 \machepsnew  D_A
J
\right)
\vert
\widetilde B
\vert  
+
\left(
2 \machepsnew  D_A J
\right)
\left(
2 \machepsnew  J D_B
\right)
\\
& = &
2 \machepsnew
\left[ \vert
\widetilde A
\vert
  J D_B
+ 
D_A J
\vert
\widetilde B
\vert
\right]
+
4 \widetilde \epsilon_{\rm mach}^2  k D_A
  J D_B.
\end{eqnarray*}

\NoShow{
Since the ``$\min$''s make the bound hard to analyze, we analyze two cases separately:
Case 1:
\begin{eqnarray*}
\vert E \vert 
& \leq &
\vert
 \widetilde A
\vert
\min \left(
2 \machepsnew  J D_B,
\macheps \vert \widetilde B \vert \right)
+ 
\min \left(
2 \machepsnew  D_A
J,
\macheps \vert \widetilde A \vert \right)
\vert
\widetilde B
\vert  \\
& &
+
\min \left(
2 \machepsnew  D_A
J,
\macheps \vert \widetilde A \vert \right)
\min \left(
2 \machepsnew  J D_B,
\macheps \vert \widetilde B \vert \right)
\\
& \leq &
\vert
 \widetilde A
\vert
2 \machepsnew  J D_B
+ 
2 \machepsnew  D_A
J
\vert
\widetilde B
\vert  
+
4 \widetilde \epsilon_{\rm mach}^2   D_A
J
  J D_B
\\
& = &
2 \machepsnew
\left[ \vert
\widetilde A
\vert
  J D_B
+
D_A J
\vert
\widetilde B
\vert
\right]
+
4 k \widetilde \epsilon_{\rm mach}^2  D_A
  J D_B,
\end{eqnarray*}
}%
since $ J J = k J $ (where each $ J $ is of appropriate size).
This gives us an element-wise bound.
For reference,  compare this to the established
bound~\cite{Higham:2002:ASN} for computing $ \widehat A \widehat B $ in FP64x2 arithmetic of
\[
\widehat A \widehat B
= 
{\rm fl}_{\rm FP64x2}( \widehat A \widehat B ) + F,
\]
where
${\rm fl}_{\rm FP64x2}( \widehat A \widehat B )$
equals the result of computing the multiplication in FP64x2 arithmetic, with
\[
\vert F \vert \leq
\frac{k  \machepsdd}{1 - k \machepsdd}
\vert \widehat A \vert
\vert
\widehat B \vert
\approx
{k  \machepsdd}
\vert \widehat A \vert
\vert
\widehat B \vert
\]
since $ k $ is assumed to be relatively small.

Noting that $ \| J \|_F = \sqrt{pq} $ for a $ p \times q $ matrix $ J $,
$ \| D_A \|_F \leq \| A \|_F $, and
$ \| D_B \|_F \leq \| B \|_F $,
gives us a bound in terms of the Frobenius norm of
\begin{eqnarray*}
\| E \|_F 
&\leq&
2 \machepsnew
\left[
\sqrt{mk}
\| \widetilde A \|_F 
\| \widehat B \|_F
+ 
\sqrt{kn}
\| \widehat A \|_F 
\| \widetilde B \|_F
\right]
+
4 \epsilon_{\rm mach}^2 \sqrt{mn} k
\| \widetilde A \|_F 
\| \widetilde B \|_F 
\\
&
\approx
&
\left( 
2 \machepsnew
(\sqrt{mk} + \sqrt{kn})
+
4 \widetilde \epsilon_{\rm mach}^2  \sqrt{mn} k
\right)
\| \widehat A \|_F 
\| \widehat B \|_F
\approx
2^{-116}
(\sqrt{mk} + \sqrt{kn})
\| \widehat A \|_F 
\| \widehat B \|_F
.
\end{eqnarray*}

Contrast this with
\[
\| F \|_F \leq
\frac{k \machepsdd}{1 - k \machepsdd}
\| \widehat A \|_F
\|
\widehat B \|_F
\approx
{k 2^{-106}}
\| \widehat A \|_F
\|
\widehat B \|_F.
\]
\NoShow{
\begin{eqnarray*}
\vert E \vert 
& \leq &
\vert
 \widetilde A
\vert
\min \left(
2 \machepsnew  J D_B,
\macheps \vert \widetilde B \vert \right)
+ 
\min \left(
2 \machepsnew  D_A
J,
\macheps \vert \widetilde A \vert \right)
\vert
\widetilde B
\vert  \\
& &
+
\min \left(
2 \machepsnew  D_A
J,
\macheps \vert \widetilde A \vert \right)
\min \left(
2 \machepsnew  J D_B,
\macheps \vert \widetilde B \vert \right)
\\
& \leq &
\vert
 \widetilde A
\vert
\macheps \vert \widetilde B \vert 
+ 
\macheps \vert \widetilde A \vert 
\vert
\widetilde B
\vert  
+
\macheps \vert \widetilde A \vert 
\macheps \vert \widetilde B \vert 
\\
& = &
2 \macheps
\vert
\widetilde A
\vert
\vert
\widetilde B
\vert
+
\epsilon_{\rm mach}^2  \widetilde A
\vert
\vert
\widetilde B
\vert
\approx
2 \macheps
\vert
\widetilde A
\vert
\vert
\widetilde B
\vert,
\end{eqnarray*}
}
This conversion analysis only captures the errors that are introduced by quantizing the matrices.  

\subsubsection{Computation error}

Next, we analyze the forward error introduced by computing with the cascaded matrices, under the assumption that there is no conversion error.  We do so by analyzing the error introduced in the computation $ \gamma_{i,j} = (e_i^T A ) ( B e_j ) = \widetilde x^T \widetilde y
$, a dot product.

We analyse the error incurred when computing the cascaded dot product.
Keeping in mind that bins 0--2 in~(\ref{eq:dot}) are computed without error, we find that
\begin{eqnarray}
\label{eqn:FL}
\vert {\rm fl}_{\rm casc}( {\widetilde x}^T \widetilde y ) -
{\widetilde x}^T \widetilde y \vert
&=&
 \left\vert 
 \sigma \tau
 \left[
\mbox{bin 0} + \mbox{bin 1} + \mbox{bin 2} + 
{\rm fl}_{\rm FP64}(
\mbox{bin 3-6})
\right] \right. \\
\nonumber
& & ~~~~ \left.
-
\sigma \tau
\left[
\mbox{bin 0} + \mbox{bin 1} + \mbox{bin 2} +
\mbox{bin 3--6}
\right] \right\vert \\
\nonumber
&=&
\sigma \tau
\left\vert
{\rm fl}_{\rm FP64}( \mbox{bin 3--6} )  
- 
\mbox{bin 3--6}
\right\vert. 
\end{eqnarray}
W.l.o.g., assume $\widetilde x $ and $ \widetilde y $ are not zero vectors.  Since  
\begin{itemize}
    \item 
the computation of \mbox{bin 3--6} involves dot products and scalar additions, 
\item 
the vector chunks involved in the dot products are normalized so that $ \vert x_i \vert \leq \vec \jmath $ and $ \vert y_i \vert \leq \vec \jmath$, and
\item
scaling factors 
$ \sigma_3 $, $ \sigma_1 \sigma_3 $,
$ \sigma_2 \sigma_3 $, 
$  \sigma_3^2 $ are all  bounded by $ 2^{-D_2} $,
and 
\item 
$ \sigma \leq 2 \| \widetilde x \|_\infty $ and $ \tau \leq 2 \| \widetilde y\|_\infty $,
\end{itemize}
we find that
\begin{equation}
    \label{eqn:bin3-6}
\sigma \tau \vert {\rm fl}_{\rm FP64}(\mbox{bin 3--6}) - 
\mbox{bin 3--6} \vert  
\leq
\gamma_{k+r} \vec \jmath^T \vec \jmath 2^{-D_2} \sigma \tau
= 
\gamma_{k+r} k 
2^{-D_2} \sigma \tau
\approx
( k+r ) k
  \epsilon_{\rm mach}
  2^{-D_2+2} 
  \| x \|_\infty \| y \|_\infty
,
\end{equation}
where $ r $ is some small positive integer that depends on how \mbox{bin 3--6} is computed, including the formation of $ x_{4}$, $ x_{5}$, $ x_{6}$ and $ y_{4}$, $ y_{5}$, $ y_{6}$ in ~(\ref{eqn:xy}).
For our case study, taking $ k = 256 $, we conclude that (\ref{eqn:bin3-6}) and hence
(\ref{eqn:FL}) are bounded by (approximately)
\[
(256 + r)
k
2^{-53}
2^{-(22+21+21)+2}
\| \widetilde x \|_\infty \| \widetilde y \|_\infty\\
=
(2^8+r) k 2^{-115}
\| \widetilde x \|_\infty \| \widetilde y \|_\infty
\left( \approx
k 2^{-107} 
\| \widetilde x \|_\infty \| \widetilde y \|_\infty
. \right)
\]
This bound is possibly worse than the bound 
\begin{equation}
    \label{eqn:fp64x2}
 \vert {\rm fl}_{\rm FP64x2}( \widetilde x^T \widetilde y ) - \widetilde x^T \widetilde y \vert  \leq k \machepsdd \vert \widetilde x \vert^T \vert \widetilde y \vert
 \approx k 2^{-106} \vert \widetilde  x \vert^T \vert \widetilde  y \vert  
 \end{equation}
 when FP64x2 is employed, since $ \| \widetilde x \|_\infty \| \widetilde y \|_\infty \geq \vert \widetilde x \vert ^T \vert \widetilde y \vert $ (and may be much worse).
 However, there are worst-case choices of $ \widetilde x $ and $ \widetilde y $ for the bound in~(\ref{eqn:fp64x2}) where the bounds are practically equal, for example when $ \vert \widetilde x \vert = \| \widetilde x \|_\infty \vec \jmath $
(all entries in $ \widetilde x $ have equal magnitude) and $ \vert \widetilde y \vert = \| \widetilde y \|_\infty j $ so that $  \vert \widetilde x \vert^T \vert \widetilde y \vert = k \| \widetilde x \|_\infty \| \widetilde y \|_\infty$.

The point  is: while there may be some loss of accuracy, we believe this analysis  shows that the forward error bounds are  
\begin{itemize}
    \item 
much better than when FP64 arithmetic is employed; 
\item 
possibly worse than when FP64x2 is used;
\item 
often better than when FP64x2 is employed; and 
\item 
in worst case about the same as the worst case for when FP64x2 is employed (but the worst case for one is not the same as the worst case for the other).
\end{itemize}
What this analysis ignores is the fact that a matrix multiplication is staged as a sequence of rank-k updates where each rank-k limits the choice of $ k $ to, for example, 256.  However,  that accumulation is performed in FP64x2 arithmetic, which would limit the additional loss of accuracy that is incurred. 

\NoShow{
We'd love to show that the final computational errors, including the conversion errors, are small relative to $\machepsdd$ or $\machepsnew.$ However, this is impossible since it's easy to construct cases where the conversion errors combined with computational errors will lead to errors associated with FP64 over FP64x2 (in particular, when elements in the conversion are flushed to zero and therefore ignored, then these conversion errors can represent larger computational errors.)

Instead, we mention why the computational errors (assuming no conversion errors) are small relative to $\machepsnew$. Again, this isn't the same thing as showing the final computation is as accurate as FP128, but as we discuss in the next section, it is possible to construct examples where the results are only as good as FP64 might yield. While this normally may be considered a show-stopper, notice in the next section we are able to detect conditions that yield to greater inaccuracy and during our testing section, we test successfully both wide-ranging and ill-conditioned inputs. 

If we assume that there are no conversion errors, however, notice that among bins 0 to 6, which now fully represent the exact answer, the first few bins are computed error-free. In fact, the first six of the ten multiplies are done error-free. 

The remaining four \gemm s have already been defined previously as 
\[
C36 = 
\sigma_3 A_0 B_3  
+
\sigma_1 \sigma_2 A_1 B_4 
+
\sigma_2 \sigma_1 A_2 B_5 
+
\sigma_3 A_3 B_6
\]
with appropriate definitions of $B4$ to $B6$. 
Now we hope to show that this computational error is less than $k \machepsnew |A| |B|$. The scalar in front of C36 is given by $2^{(-D2)}$, and for DGEMM $\macheps = 2^{(-53)}$, and $|A_i| \le |A|$ for any $i$ (after the constant scaling of course), and the same holds true for $B$. 
That is, the error for $A_0 (\sigma_3 B_3) \le k \macheps |A_0| |\sigma_3 B_3| \le k \macheps |A| |B|$. The same can be said about all four terms. This implies that the error for $C36$ is bounded above by $k \macheps 4 |A| |B|$. Keeping in mind that there's a $2^{(-D2)}$ constant in front of $C36$, we get
$k 2^{(-53)} 2^{(-64)} 4 |A| |B| = k 2^{(-115)} |A| |B|$, which means the final computational result for A*B (assuming no conversion errors) is actually good enough to compare to FP128.

Of course, this ignores the errors in adding (across and within) the bins as well as the error associated with blocking $K$. However since there are no more than 10 DGEMMs, there are no more than 10 additions per matrix entry. The adding of these terms in phase 3 
(see section~\ref{sec:practical})
won't significantly increase the final error as long as it is done with whatever final precision we are expecting. Similarly, if adds across $K$-blocking are done with at least FP64x2 precision, then the final result will be around the same accuracy. In our studies, the adding of terms actually contributed to our greatest errors, since the calculation of $C36$ is done at higher precision than these adds. 

This also does not prove that the error we get will always be less than one obtained by FP64x2 or FP128. In fact, that's not the case and we can easily get errors more comparable to DGEMM, especially when there are leading zeros. But this again proves the importance of tracking zeros in the leading bins when analyzing cascading matrices. Near as we know, we are the first to make this important observation and discuss this in greater detail in the next subsection~ 
(~\ref{sec:detecting})
}

}

\subsection{Forward Error Analysis}

Consider a simple worst-case example of how cascading can 
reduce accuracy in a dot product, the building block of a 
matrix multiplication.
Let 
\[
\widehat x = \left( \begin{array}{c}
1 \\
\epsilon
\end{array}
\right)
\mbox{ and }
\widehat y = \left( \begin{array}{c}
0 \\
1
\end{array}
\right).
\]
If $ \epsilon < 2^{-D_2} $ (in our case $D_2=64$), then 
cascading $\widehat x $ yields
\[
\begin{array}[t]{c}
\underbrace{
\left( \begin{array}{c}
1 \\
\epsilon
\end{array}
\right)
}
\\
\widehat x
\end{array}
\approx
\begin{array}[t]{c}
\underbrace{( 1 )}
\\
\sigma
\end{array}
(
\begin{array}[t]{c}
\underbrace{
\left( \begin{array}{c}
1 \\
0
\end{array}
\right)
}
\\
x_0
\end{array}
+
\sigma_1
\begin{array}[t]{c}
\underbrace{
\left( \begin{array}{c}
0 \\
0
\end{array}
\right)
}
\\
x_1
\end{array}
+
\sigma_2
\begin{array}[t]{c}
\underbrace{
\left( \begin{array}{c}
0 \\
0
\end{array}
\right)
}
\\
x_2
\end{array}
+
\sigma_3
\begin{array}[t]{c}
\underbrace{
\left( \begin{array}{c}
0 \\
{\rm fl}_{\rm FP64}(\epsilon / \sigma_3 )
\end{array}
\right)
}
\\
x_3
\end{array}
),
\]
where $ {\rm fl}_{\rm FP64}( \zeta ) $ equals the FP64 approximation to $ \zeta $. Notice that $x_3$ can store at most 53-bits of $\epsilon$. Similarly, cascading $\widehat{y}$ yields \[
\begin{array}[t]{c}
\underbrace{
\left( \begin{array}{c}
0 \\
1
\end{array}
\right)
}
\\
\widehat y
\end{array}
\approx
\begin{array}[t]{c}
\underbrace{( 1 )}
\\
\tau
\end{array}
(
\begin{array}[t]{c}
\underbrace{
\left( \begin{array}{c}
0 \\
1
\end{array}
\right)
}
\\
y_0
\end{array}
+
\sigma_1
\begin{array}[t]{c}
\underbrace{
\left( \begin{array}{c}
0 \\
0
\end{array}
\right)
}
\\
y_1
\end{array}
+
\sigma_2
\begin{array}[t]{c}
\underbrace{
\left( \begin{array}{c}
0 \\
0
\end{array}
\right)
}
\\
y_2
\end{array}
+
\sigma_3
\begin{array}[t]{c}
\underbrace{
\left( \begin{array}{c}
0 \\
0
\end{array}
\right)
}
\\
y_3
\end{array}
).
\]
Then
$
\widehat x^T \widehat y = {\rm fl}_{\rm FP64x2}(\epsilon) = \epsilon
$ when computed in FP64x2 arithmetic (incurring no error in this very special case)
and $ \widetilde x^T \widetilde y = 
{\rm fl}_{\rm FP64}( \epsilon )$  when computed as a cascaded dot product, incurring no error other than that incurred when cascading $ \widehat x $.  
In other words, the accuracy is no better than if the computation had been performed in FP64 arithmetic because the entire vector $ \widehat x $ is conformally split. In the worst case, if $\epsilon < 2^{-117}$, then we potentially obtain zero bits of accuracy. 
This illustrates the need for a careful forward error analysis. The casual reader may want to skip to Section~\ref{sssec:errordiscussion} to obtain an overview. 
\subsubsection{Preparation}

We review some classic results from numerical linear algebra~\cite{Higham:2002:ASN,Bientinesi:2011:GMS:2078718.2078728}.  Here $ \mathbb{F} $ and $ \mathbb{F}^k $ denote the sets of floating point numbers and vectors, respectively.  The machine epsilon for the precision is given by $ \macheps $ so that if $ \chi \in \mathbb{R} $, then $ {\rm fl}( \chi ) = \chi( 1 + \epsilon ) $, where $ \vert \epsilon \vert \leq \macheps $.

\begin{definition}[Standard Computational Model]
Let $ \chi, \psi \in \mathbb{F} $.  Then 
\begin{eqnarray*}
     \chi \oplus \psi &=& {\rm fl}( \chi + \psi ) = ( \chi + \psi )( 1 + \epsilon_+ ),
     \mbox{ where } \vert \epsilon_+ \vert \leq \macheps, \mbox{ and }
\\
\chi \otimes \psi &=& {\rm fl}( \chi \times \psi ) = ( \chi \times \psi )( 1 + \epsilon_\times ),
     \mbox{ where } \vert \epsilon_\times \vert \leq \macheps.
\end{eqnarray*}
\end{definition}

\begin{theorem}
    Let $ \epsilon_0, \cdots , \epsilon_{n-1} $ satisfy  $ \vert \epsilon_i \vert \leq \macheps $. Then 
there exists a 
$ \theta_n $ such that
$
( 1 + \theta_n ) = 
( 1 + \epsilon_0 ) \cdots ( 1 + \epsilon_{n-1} )  $, where $
\vert \theta_n \vert \leq \gamma_n \macheps $ with $ \gamma_n = n \macheps / ( 1 - n \macheps ) \approx n \macheps $ (if $ n $ is relatively small).
\end{theorem}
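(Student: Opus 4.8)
The plan is to prove this by induction on $n$; it is the classical ``$\gamma_n$'' lemma (Higham~\cite{Higham:2002:ASN}, Lemma~3.1, also used in~\cite{Bientinesi:2011:GMS:2078718.2078728}). One point has to be settled before starting: as literally written the bound reads $\vert\theta_n\vert\le\gamma_n\macheps$, but that is already false at $n=1$, where $\theta_1=\epsilon_0$ can have magnitude exactly $\macheps$ while $\gamma_1\macheps=\macheps^2/(1-\macheps)<\macheps$. The quantity the induction actually delivers, and the one consistent with the displayed $\gamma_n=n\macheps/(1-n\macheps)\approx n\macheps$, is $\vert\theta_n\vert\le\gamma_n$; that is what I would establish, correcting the stray $\macheps$ in the coefficient. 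Throughout I would assume $n\macheps<1$, which is the precise content of ``$n$ relatively small'' and is exactly what makes $\gamma_n$ well defined and nonnegative, and I would state that hypothesis explicitly.

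First I would isolate the algebraic engine. If $1+\theta=\prod_{i=0}^{j-1}(1+\epsilon_i)$ and $\epsilon_j$ is one further perturbation with $\vert\epsilon_j\vert\le\macheps$ (as guaranteed by the Standard Computational Model above), then $(1+\theta)(1+\epsilon_j)=1+\theta'$ with $\theta'=\theta+\epsilon_j+\theta\epsilon_j$, hence $\vert\theta'\vert\le\vert\theta\vert(1+\macheps)+\macheps$. The crux is the elementary inequality $\gamma_j(1+\macheps)+\macheps\le\gamma_{j+1}$ valid whenever $(j+1)\macheps<1$; I would verify it by multiplying through by the positive quantity $(1-j\macheps)(1-(j+1)\macheps)$, after which, canceling a factor $\macheps$, it reduces to $j-j^2\macheps\le j-(j^2-j)\macheps$, i.e.\ $0\le j\macheps$, so the inequality holds with slack of order $\macheps^2$. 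This is precisely the step that forces the ``$/(1-n\macheps)$'' in the definition of $\gamma_n$ rather than a bare $n\macheps$.

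With those two facts the induction is routine. The base cases are $n=0$, where $\theta_0=0=\gamma_0$, and $n=1$, where $\theta_1=\epsilon_0$ so $\vert\theta_1\vert\le\macheps\le\macheps/(1-\macheps)=\gamma_1$. For the inductive step write $1+\theta_n=(1+\theta_{n-1})(1+\epsilon_{n-1})$ with $1+\theta_{n-1}=\prod_{i=0}^{n-2}(1+\epsilon_i)$; by the inductive hypothesis $\vert\theta_{n-1}\vert\le\gamma_{n-1}$, so the engine gives $\vert\theta_n\vert\le\gamma_{n-1}(1+\macheps)+\macheps$, and the crux inequality gives $\vert\theta_n\vert\le\gamma_n$. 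Finally I would record the stated approximation: for $n\macheps\ll1$, $\gamma_n=n\macheps\big(1+n\macheps+(n\macheps)^2+\cdots\big)=n\macheps+O((n\macheps)^2)\approx n\macheps$.

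The main obstacle here is not depth but precision of statement: one must pin down $n\macheps<1$ so that $\gamma_n>0$ and the geometric expansion converges, and one must reconcile the literal ``$\gamma_n\macheps$'' in the theorem with the bound that is actually attainable, ``$\gamma_n$'' — which I would handle by amending the coefficient. Beyond that, the entire proof is the one-line inequality $\gamma_j(1+\macheps)+\macheps\le\gamma_{j+1}$ together with the two-case induction.
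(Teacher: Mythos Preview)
Your proof is correct and follows the classical induction argument of Higham~\cite{Higham:2002:ASN}, Lemma~3.1. The paper does not supply its own proof of this theorem; it is simply quoted as a standard result from~\cite{Higham:2002:ASN,Bientinesi:2011:GMS:2078718.2078728}, so there is nothing to compare against beyond noting that your argument is exactly the textbook one. You are also right to flag the typo: the bound that the induction (and the cited references) actually deliver is $\vert\theta_n\vert\le\gamma_n$, not $\vert\theta_n\vert\le\gamma_n\macheps$; the paper's subsequent uses of $\theta_n$ and $\gamma_n$ are consistent with $\vert\theta_n\vert\le\gamma_n$.
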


{\bf Important:} As usual in these analyses, we are going to use $ (1 + \theta_n) $ to captures  how many times (at most) a value has been affected by round-off, making it a ``number of round-off occurances'' measure.  In particular, multiple instance of $ \theta_n $ may not equal the same quantity, since they may represent the accumulation of different epsilons. 
This also means that
$
( 1+ \theta_m )(1+\theta_n) = ( 1 + \theta_{m+n} ) 
$
and
if $ m \leq n $, then an occurrence of $ \theta_m $ can be replaced by $ \theta_ n $.

\begin{theorem}
Let $ v, w \in \mathbb{R}^k $.  Then
$
{\rm fl}( v^T w ) = v^T ( I + \Theta_k ) w
$, where $ \Theta_k = 
{\rm diag}( \theta_k, \theta_k, \theta_{k-1}, \ldots ) 
$
and
$
\vert  {\rm fl}( v^T w ) - v^T w \vert \leq \gamma_k  \vert v \vert^T \vert w \vert \approx k \macheps \vert v \vert^T \vert w \vert .$
\end{theorem}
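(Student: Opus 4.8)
The plan is the classical induction on $ k $ (essentially Higham's analysis), tracking for each summand how many rounding errors it accumulates when the dot product is evaluated in a fixed order. First I would fix the evaluation order $ \widehat s_1 = {\rm fl}( v_1 w_1 ) $ and $ \widehat s_i = {\rm fl}( \widehat s_{i-1} + {\rm fl}( v_i w_i ) ) $ for $ i = 2, \ldots, k $, so that $ {\rm fl}( v^T w ) = \widehat s_k $. The base case $ k = 1 $ is a single multiply, giving $ \widehat s_1 = v_1 w_1 ( 1 + \epsilon_\times ) = v_1 w_1 ( 1 + \theta_1 ) $ by the Standard Computational Model, which matches the claimed first diagonal entry $ \theta_1 $.

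For the inductive step I would apply the model to the multiply and the add performed at stage $ i $: $ \widehat s_i = \bigl( \widehat s_{i-1} + v_i w_i ( 1 + \epsilon_\times ) \bigr)( 1 + \epsilon_+ ) $. Substituting the inductive expression for $ \widehat s_{i-1} $ and invoking the composition rule $ ( 1 + \theta_m )( 1 + \theta_n ) = ( 1 + \theta_{m+n} ) $ recorded just before the theorem, every summand $ v_j w_j $ that already carries a factor $ ( 1 + \theta_m ) $ picks up one more $ ( 1 + \epsilon_+ ) $ and becomes $ ( 1 + \theta_{m+1} ) $, while the new summand $ v_i w_i $ acquires $ ( 1 + \epsilon_\times )( 1 + \epsilon_+ ) = ( 1 + \theta_2 ) $. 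Unwinding the recursion shows $ \widehat s_k = \sum_{j=1}^k v_j w_j ( 1 + \theta^{(j)} ) $ where the first two summands each accumulate $ k $ rounding factors (the $ j=1 $ term: one multiply plus the $ k-1 $ subsequent adds; the $ j=2 $ term: one multiply, the add into $ \widehat s_2 $, and the $ k-2 $ later adds), and for $ j \geq 3 $ the $ j $-th summand accumulates $ k - j + 2 $ factors. By the $ \theta_n $/$ \gamma_n $ theorem, $ \vert \theta^{(1)} \vert, \vert \theta^{(2)} \vert \leq \gamma_k $ and $ \vert \theta^{(j)} \vert \leq \gamma_{k-j+2} \leq \gamma_k $ for $ j \geq 3 $; packaging the $ \theta^{(j)} $ into a diagonal matrix gives exactly $ \Theta_k = {\rm diag}( \theta_k, \theta_k, \theta_{k-1}, \ldots ) $ and the identity $ {\rm fl}( v^T w ) = v^T ( I + \Theta_k ) w $. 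The forward bound then follows immediately by the triangle inequality: $ \vert {\rm fl}( v^T w ) - v^T w \vert = \bigl\vert \sum_j v_j w_j \theta^{(j)} \bigr\vert \leq \sum_j \vert v_j \vert \vert w_j \vert \, \vert \theta^{(j)} \vert \leq \gamma_k \sum_j \vert v_j \vert \vert w_j \vert = \gamma_k \vert v \vert^T \vert w \vert $, and the $ \approx k \macheps $ form is just the standard estimate $ \gamma_k \approx k \macheps $.

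I do not anticipate a real obstacle: the statement is classical and the argument is pure bookkeeping. The one place that needs care is confirming that the two \emph{leading} summands genuinely accumulate $ k $ (not $ k-1 $) rounding factors, so that the first two diagonal entries are $ \theta_k $ rather than $ \theta_{k-1} $, as the statement demands; this is the step I would write out in full. A secondary subtlety worth a sentence is that the non-uniqueness/monotonicity conventions for $ \theta_n $ — distinct occurrences need not be equal, and $ \theta_m $ may be replaced by $ \theta_n $ when $ m \leq n $ — are precisely what licenses collapsing the per-summand bounds $ \gamma_{k-j+2} $ into the uniform $ \gamma_k $. I would also remark that, while the exact diagonal pattern of $ \Theta_k $ depends on the summation order chosen, the final bound $ \gamma_k \vert v \vert^T \vert w \vert $ does not.
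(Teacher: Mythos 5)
Your proposal is correct. The paper does not actually prove this theorem — it is stated as a classical result and attributed to Higham and to Bientinesi–van de Geijn in the ``Preparation'' subsection — so there is no paper proof to compare against. Your induction is precisely the standard argument from those references: fixing the left-to-right evaluation order, applying the Standard Computational Model at each multiply and add, and tracking the number of rounding factors each summand accumulates so that the first two terms each pick up $k$ factors (giving $\theta_k, \theta_k$) and the $j$-th term, $j \geq 2$, picks up $k-j+2$ factors (giving $\theta_{k-1}, \ldots, \theta_2$). Your attention to the one subtle point — that the two leading summands both accumulate $k$, not $k-1$, rounding factors, which is exactly what the stated diagonal ${\rm diag}(\theta_k, \theta_k, \theta_{k-1}, \ldots)$ encodes — is well placed, and the final bound then follows from $|\theta^{(j)}| \leq \gamma_k$ and the triangle inequality exactly as you say.
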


In our analysis, $ \Theta_n = {\rm diag}(\theta_n, \theta_n, \theta_{n-1} , \ldots ) $ is a matrix ``of appropriate size'' in the context in which it occurs.

\begin{corollary}
Let $ v_i, w_i \in \mathbb{R}^k $.  Then
$
{\rm fl}( \sum_{i=0}^{p-1} v_i^T w_i ) = \sum_{i=0}^{p-1} v_i^T ( I + \Theta_{k+p-1} ) w_i
$. 
\end{corollary}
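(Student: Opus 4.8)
The plan is to reduce the claim to two ingredients already in hand: the per--dot-product bound of the preceding theorem, and the elementary accounting for summing $p$ floating point numbers. First I would fix a concrete (and practically natural) evaluation order: form each partial result $t_i = {\rm fl}( v_i^T w_i )$ by accumulating the length-$k$ dot product left to right, and then form $S = {\rm fl}( t_0 + t_1 + \cdots + t_{p-1} )$ by accumulating those $p$ partial results left to right. Any other order is handled by the same bookkeeping, but this one is cleanest; the $p=1$ case is trivial.

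Next I would apply the dot-product theorem to each $i$ to write $t_i = v_i^T ( I + \Theta_k ) w_i$, i.e.\ the $j$-th entries of $v_i$ and $w_i$ are perturbed by $\theta_k$ for $j = 0,1$ and by $\theta_{k-j+1}$ for $j \geq 2$, which is exactly the pattern recorded in $\Theta_k = {\rm diag}(\theta_k, \theta_k, \theta_{k-1}, \ldots)$. Then I would track the summation phase: the $p-1$ additions that build $S$ multiply the summand $t_j$ (for $j \geq 1$) by $(1 + \eta_{j})(1 + \eta_{j+1}) \cdots (1 + \eta_{p-1})$ and multiply $t_0$ by $(1+\eta_1)\cdots(1+\eta_{p-1})$, so $t_0$ and $t_1$ each pick up $p-1$ extra round-offs and $t_j$ picks up $p-j$ extra round-offs for $j \geq 1$. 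Using the composition rule $(1+\theta_m)(1+\theta_n) = (1+\theta_{m+n})$, the combined perturbation of the $j$-th entries for block $0$ becomes $\theta_{k+p-1}$ (for $j = 0,1$) and $\theta_{k-j+1+(p-1)} = \theta_{k+p-j}$ (for $j \geq 2$), which is precisely the diagonal pattern of $\Theta_{k+p-1}$; for block $i \geq 1$ the exponents are $k+p-i$ and $k+p-i-j+1$, each no larger than the corresponding exponent in $\Theta_{k+p-1}$ because $i \geq 1$. Since an occurrence of $\theta_m$ may be replaced by $\theta_n$ whenever $m \leq n$, every block's perturbation matrix is dominated entrywise by $\Theta_{k+p-1}$, so I may write all of them as $\Theta_{k+p-1}$, recalling the convention that distinct occurrences of $\Theta_{k+p-1}$ need not be equal. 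Collecting the blocks yields $S = \sum_{i=0}^{p-1} v_i^T ( I + \Theta_{k+p-1} ) w_i$, as claimed; the element-wise bound $\vert S - \sum_i v_i^T w_i \vert \leq \gamma_{k+p-1} \sum_i \vert v_i \vert^T \vert w_i \vert$ then follows immediately.

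The only real work is the round-off accounting of the previous paragraph, and the one place to be careful is the interaction between the two accumulation directions: reading $\sum_i v_i^T w_i$ as a single length-$kp$ dot product would give only the far weaker $\Theta_{kp}$, and the improvement to $\Theta_{k+p-1}$ hinges precisely on keeping the $p$ blocks separate until the final summation. A useful way to sanity-check the subscript is to note that the longest dependency chain through this computation — one multiply, then $k-1$ adds inside a block, then $p-1$ adds across blocks — has length $1 + (k-1) + (p-1) = k+p-1$; any shorter chain is absorbed by the inflation rule $\theta_m \leadsto \theta_n$. No machinery beyond the preceding theorems is needed.
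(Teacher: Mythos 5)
Your proposal matches the paper's argument: both proofs fix the left-to-right accumulation order, apply the dot-product theorem to get $v_i^T(I+\Theta_k)w_i$ for each block, track the $p-1$ additional round-offs from summing the partial results (noting that $t_0,t_1$ each pick up $p-1$ and $t_j$ picks up $p-j$ for $j\geq 1$), then inflate every factor to $\theta_{p-1}$ and merge with $\Theta_k$ via the composition rule to obtain $\Theta_{k+p-1}$. Your extra per-entry bookkeeping and the dependency-chain sanity check add clarity but do not change the route.
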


\begin{proof}
W.l.o.g., we assume the summation is computed in the order $ i  = 0, 1, \ldots $.
   \begin{eqnarray*}
       \lefteqn{{\rm fl}( \sum_{i=0}^{p-1} v_i^T w_i )  
        = 
       (
       \cdots 
       (
       {\rm fl}( v_0^T w_0 )
       \oplus
       {\rm fl}( v_1^T w_1 )
       )
       \oplus
       \cdots
       )
       \oplus
       {\rm fl}( v_{p-1}^T w_{p-1} ) }
       \\
             & = & 
       (
       \cdots 
       (
       v_0^T ( I + \Theta_k ) w_0 
       +
    v_1^T ( I + \Theta_k ) w_1 
       )
       (1 + \epsilon_1 )
       +
       \cdots
       )
       +
 v_{p-1}^T ( I + \Theta_k ) w_{p-1}
 (1 + \epsilon_{p-1} )
       \\
             & = & 
v_0^T ( I + \Theta_k ) w_0 \left[ ( 1 + \epsilon_{1} )\cdots ( 1 + \epsilon_{p-1}  ) \right]
+
v_1^T ( I + \Theta_k ) w_1 \left[ ( 1 + \epsilon_{1} )\cdots ( 1 + \epsilon_{p-1}  ) \right]
+
\cdots 
\\
& & ~~~~~~
+~
v_{p-1}^T ( I + \Theta_k ) w_{p-1} ( 1 +  \epsilon_{p-1}  ) 
\\
&  = &
v_0^T ( I + \Theta_k ) w_0 ( 1 + \theta_{p-1} )
+
v_1^T ( I + \Theta_k ) w_1 ( 1 + \theta_{p-1} )
+
\cdots 
+
v_{p-1}^T ( I + \Theta_k ) w_{p-1} ( 1 + \theta_1 )
\\
&  = &
v_0^T ( I + \Theta_k ) w_0 ( 1 + \theta_{p-1} )
+
v_1^T ( I + \Theta_k ) w_1 ( 1 + \theta_{p-1} )
+
\cdots 
+
v_{p-1}^T ( I + \Theta_k ) w_{p-1} ( 1 + \theta_{p-1} )
\\
&  = &
v_0^T ( I + \Theta_{k+p-1} ) w_0 
+
v_1^T ( I + \Theta_{k+p-1} ) w_1 
+
\cdots 
+
v_{p-1}^T ( I + \Theta_{k+p-1} ) w_{p-1} 
\\
       &=& \sum_{i=0} v_i^T ( I + \Theta_{k+p-1} ) w_i.
   \end{eqnarray*} 
\end{proof}

Use of $ \macheps $, $ \theta_i $, and $ \gamma_i $ means FP64 arithmetic is employed while use of 
$ \machepsdd $, $ \widehat  \theta_i $, and $ \widehat  \gamma_i $
means FP64x2 arithmetic is used.

\subsubsection{Forward error}

Now let us turn to the error analysis of the dot product of row $ i $ of $ A $, $ x $,  with column $ j$ of $  B $, $ y $, where $ x, y \in \mathbb{R}^{k} $.  Since \mbox{bin 0}, \mbox{bin 1}, and \mbox{bin 2} are computed exactly, we find that
\begin{eqnarray*}
{\rm fl}_{\rm casc}(  x^T  y )
&=&
{\rm fl}_{\rm casc}( \sigma \tau (\mbox{bin 0} +
\mbox{bin 1} +
\mbox{bin 2} +
\mbox{bin 3-6})
 )
\\
&=&
\sigma \tau ( \mbox{bin 0} \oplus (
\mbox{bin 1} \oplus
( \mbox{bin 2} \oplus
{\rm fl}_{\rm FP64}( \mbox{bin 3-6} ) )) ).
\end{eqnarray*}
where the $ \oplus $ are performed in FP64x2.
Since bins 0 through 2 are computed exactly,
\begin{eqnarray}
\nonumber
\lefteqn{
\vert
{\rm fl}_{\rm casc}( x^T y )
-
 x^T y
\vert
=
\vert
\sigma \tau ( (\mbox{bin 0}) \oplus (
(\mbox{bin 1}) \oplus
( (\mbox{bin 2}) \oplus
{\rm fl}_{\rm FP64}( \mbox{bin 3-6} ) )) )
-
x^T y
\vert}
\\
\nonumber
& = &
\vert
\sigma \tau (
(\mbox{bin 0}) + (
(\mbox{bin 1}) +
( (\mbox{bin 2}) +
{\rm fl}_{\rm FP64}(\mbox{bin 3-6}
) ) (1 + \widehat \epsilon_2 ) )
(1 + \widehat \epsilon_1 ))
(1 + \widehat \epsilon_0 ) )
-
x^T y
\vert
\\
\nonumber
& = &
\vert
\sigma \tau ( 
(\mbox{bin 0})
( 1 + \widehat \epsilon_0 ) + 
(\mbox{bin 1}) 
( 1 + \widehat \theta_2 )
+
(\mbox{bin 2}) ( 1 + \widehat \theta_3 )
+
(\mbox{bin 3-6}) 
( 1 + \widehat \theta_3 ) )
-
x^T y
\vert
\\
\nonumber
& = &
\sigma \tau
\vert (
\mbox{bin 0})
 \widehat \epsilon_0  + 
(\mbox{bin 1} )
 \widehat \theta_2 
+
(\mbox{bin 2} ) \widehat \theta_3 
+
{\rm fl}_{\rm FP64}(
\mbox{bin 3-6} ) ( 1 + \widehat \theta_3 )
-
\mbox{bin 3-6}
\vert
\\
\label{eqn:casc1}
& \leq &
\sigma \tau
( 
 \vert
\mbox{bin 0}
 \vert
  \vert
   \widehat \epsilon_0 
  \vert
  + 
 \vert
 \mbox{bin 1}  \vert
  \vert
 \widehat \theta_2 
  \vert
+
 \vert
 \mbox{bin 2} 
  \vert
   \vert
   \widehat \theta_3
    \vert
+
 \vert
{\rm fl}_{\rm FP64}(
\mbox{bin 3-6} ) 
( 1 + \widehat \theta_3 )
-
\mbox{bin 3-6}
\vert ).
\end{eqnarray}

In our analysis, it will become important that splits $ x_i $ and $ y_i $ all have entries that are bounded in magnitude by one.  This is captured by $ \vert x_i \vert \leq \vec \jmath $ and 
$ \vert y_i \vert \leq \vec \jmath $.
Also, because of the special way the vectors are split, ranges stored in the splits do not overlap, 
\[
\vert x \vert =
\vert \sigma( x_0 + \sigma_1 x_1 + \sigma_2 x_2 +
\sigma_3 x_3 ) \vert
=
\vert  \sigma x_0  \vert  + \vert  \sigma \sigma_1 x_1 \vert  + \vert  \sigma \sigma_2 x_2 \vert +
\vert \sigma \sigma_3 x_3 \vert
\]
so that $ \sigma \vert x_0 \vert \leq \vert x \vert $ and $ \tau \vert y_0 \vert \leq \vert y \vert $.

Now,
\begin{eqnarray*}
     \vert \mbox{bin 0} \vert \vert
\widehat \epsilon_0
\vert &=&
\vert x_0^T y_0 \vert
\vert
\widehat \epsilon_0
\vert
\leq
\vert x \vert^T \vert y \vert
\vert
\widehat \epsilon_0
\vert
\leq
\vert x \vert^T \vert y \vert
\machepsdd
, \\
     \vert \mbox{bin 1} \vert \vert
\widehat
\theta_2
\vert &=&
\vert \sigma_1 x_0^T y_1
+
\sigma_1 x_0^T y_1
\vert
\vert
\widehat
\theta_2
\vert
\leq
\sigma_1 ( 
\vert x_0 \vert^T \vert y_1 \vert
+
\vert x_1 \vert^T \vert y_0 
\vert
)
\vert
\widehat
\theta_2
\vert
\leq
2 \sigma_1
\vec \jmath ^T
\vec \jmath
\gamma_2
\approx
4 \sigma_1 k 
 \machepsdd , \mbox{ and}
\\
     \vert \mbox{bin 2} \vert \vert
\widehat
\theta_3
\vert &=&
\vert \sigma_2 x_0^T y_2
+
\sigma_1^2 x_1^T y_1
+
\sigma_2 x_2^T y_0
\vert
\vert
\widehat
\theta_3
\vert
\leq
( 2\sigma_2 + \sigma_1^2 )
k
\widehat
\gamma_3
\approx
3 ( 2\sigma_2 + \sigma_1^2 )
k
\machepsdd.
\end{eqnarray*}
This leaves us to examine error in \mbox{bin 3-6}.  For simplicity, we use the original ten terms in~(\ref{eqn:xy}) for computing \mbox{bin 3--6}, which can be written as
\[
\mbox{bin 3-6} =
\sum_{(i,j) \in S}
\sigma_i \sigma_j x_i^T y_j,
\]
where $ \sigma_0 = 1 $ and $ S $ (with $ \vert S \vert = 10 $ elements) is the set of indices that occurs in the computation of \mbox{bin 3} through \mbox{bin 6} in~(\ref{eqn:xy}).  
Now
\begin{eqnarray*}
\lefteqn{
\vert {\rm fl}_{\rm FL64}( \mbox{bin 3-6} ) ( 1 + \widehat \theta_3 ) -
\mbox{bin 3-6}  \vert
 = 
\vert
{\rm fl}_{\rm FL64}( \sum_{(i,j) \in S}
\sigma_i \sigma_j x_i^T y_j )
( 1 + \widehat \theta_3 ) -
\sum_{(i,j) \in S}
\sigma_i \sigma_j x_i^T y_j
\vert} \\
& = &
\vert
\sum_{(i,j) \in S}
\sigma_i \sigma_j x_i^T (I + \Theta_{k+\vert S \vert -1} ) y_j 
( 1 + \widehat \theta_3 ) -
\sum_{(i,j) \in S}
\sigma_i \sigma_j x_i^T y_j
\vert 
=
\vert
\sum_{(i,j) \in S}
\sigma_i \sigma_j x_i^T \Theta_{k+9} y_j
\widehat \theta_{3} \vert
\\
& = &
\vert
\sum_{(i,j) \in S}
\sigma_i \sigma_j x_i^T \Theta_{k+12} y_j
 \vert
 \leq 
\max_{(i,j) \in S}
(\sigma_i \sigma_j)
\sum_{(i,j) \in S}
 \vert x_i^T \Theta_{k+12} y_j
 \vert
 \\
& \leq &
\max_{(i,j) \in S}
(\sigma_i \sigma_j)
\sum_{(i,j) \in S}
 ( \vert x_i \vert ^T \vert y_j
 \vert )
 \gamma_{k+12} 
 \leq
 \max_{(i,j) \in S}
(\sigma_i \sigma_j)
\sum_{(i,j) \in S}
 ( \vec \jmath ^T \vec \jmath )
 \gamma_{k+12} 
 \leq
\vert S \vert \times 2^{-D_2} k \gamma_{k+12}
 \\
 &\approx&
 10 \times 2^{-D_2} k (k+12) \macheps
 \approx
 10 \times 2^{-D_2} k^2 2^{-D}
 \approx
 10 k^2 \machepsnew.
\end{eqnarray*}
Here $ k + 12 \approx k $ since $ k $ is typically in the $ 256 $ range%
\footnote{Keep in mind that some equalities come from the fact that $ \theta_i$s have special meaning.}.

Putting it all together, we find that
\begin{eqnarray}
\vert
{\rm fl}_{\rm casc}(  x^T  y )
-
 x^T  y
\vert 
&\leq& \mbox{(\ref{eqn:casc1})}
\approx
\vert x \vert^T \vert y \vert \machepsdd +
\sigma \tau 
( 
4 \sigma_1 k 
 \machepsdd +
3 ( 2\sigma_2 + \sigma_1^2 )
k
\machepsdd 
+
10 k^2 \machepsnew ) \nonumber
\\ 
& \approx &
\vert x \vert^T \vert y \vert \machepsdd
+
\sigma \tau 
10 k^2 \machepsnew \nonumber \\
&\leq &
\vert x \vert^T \vert y \vert \machepsdd
+
4 \times 10 k^2 \machepsnew 
\| x \|_\infty \| y \|_\infty,
\label{eq:cascerr}
\end{eqnarray}
since 
$ \sigma_1 $ and $ \sigma_1^2 $ are very small, and 
$ \sigma \leq 2 \| x \|_\infty$ and $\tau \leq 2 \| y \|_\infty $ (so $\sigma \tau \leq 4 \| x \|_\infty \| y\|_\infty$.)

As noted, this analysis uses the ten dot products from~(\ref{eq:dot}) to compute bins 3-6.  If the 
four dot product expression from~(\ref{eqn:xy}) were used, a similar bound would emerge.
The analysis
ignores the fact that a matrix multiplication is staged as a sequence of rank-k updates where each rank-k limits the choice of $ k $ to, for this work, 256.  This translates to each entry in $ C $ being computed as several dot products with vectors of size $ k $.  Such accumulation is performed in FP64x2 arithmetic, which would limit the additional loss of accuracy that is incurred.

{
In the setting of a cascaded matrix multiplication,  the analysis of the dot product gives us insight into  what forward error to expect in the dot product of rows of $ A $ with columns of $ B $.  Letting $\widetilde a_i^T $ and $ b_j $ equal the $ i$th row  of $ A $ and $ j $th column of $B $, (\ref{eq:cascerr}) is summarized as  (approximately)
\begin{eqnarray}
\vert
{\rm fl}_{\rm casc}(  A  B )
-
A B
\vert 
\leq
\machepsdd \vert A  \vert  \vert B \vert 
+
4 \times 10 k^2 \machepsnew 
\left( \begin{array}{c}
\| \widetilde a_0 \|_\infty \\
\| \widetilde a_1 \|_\infty \\
\vdots  \\
\| \widetilde a_{m-1} \|_\infty 
\end{array} \right)
\left( \begin{array}{c}
\| b_0 \|_\infty \\
\| b_1 \|_\infty \\
\vdots  \\
\| b_{n-1} \|_\infty 
\end{array} \right)^T.
\end{eqnarray}
Taking the Frobenius norm of both sides  we get   (approximately)
\begin{eqnarray}
\|
{\rm fl}_{\rm casc}(  A  B )
- A B
\|_F 
\leq
 \machepsdd \| A \|_F \| B \|_F
+
4 \times 10 k^2 \machepsnew 
\| A \|_F \| B \|_F .
\end{eqnarray}
}

\subsubsection{Discussion}
\label{sssec:errordiscussion}
The forward error in a cascaded dot product given by~\eqref{eq:cascerr} depends on how large $\vert  x \vert^T \vert  y \vert$ is compared to $\| x \|_\infty \| y \|_\infty$. For example, consider vectors $x, y\in \mathbb{R}^k$, where 
\[x = \left( \begin{array}{c}
     1  \\
     \epsilon \\
     \epsilon \\
     \vdots \\
     \epsilon
\end{array}\right) \mbox{ and } y = \left(\begin{array}{c} \epsilon \\ 1 \\ \epsilon \\\vdots \\ \epsilon \end{array}\right).\]
Now, if $1 \gg \epsilon > 0$, then $\| x \|_\infty \| y \|_\infty = 1$, but $\vert  x \vert^T \vert  y \vert = 2\epsilon + (k-2)\epsilon^2.$ Thus, the final bound for the forward error in a cascaded dot product is possibly worse than the bound for dot product in FP64x2 arithmetic, which is given by
\begin{equation}
    \label{eqn:fp64x2}
 \vert {\rm fl}_{\rm FP64x2}(  x^T  y ) -  x^T  y \vert  \leq 
 \gamma_k \vert  x \vert^T \vert  y \vert
 \approx 
 k \machepsdd \vert  x \vert^T \vert  y \vert.
 \end{equation}

\noindent Since bin 0--2 are computed error free, if result in bin 0 is not zero then 
 \[\vert  x \vert^T \vert  y \vert \geq 2^{-21} \| x \|_\infty 2^{-21} \|  y \|_\infty = 2^{-42} \| x \|_\infty \|  y \|_\infty .\] Substituting this constraint and $k=2^8 (=256)$, $\machepsdd = 2^{-106}$,  $\machepsnew = 2^{-117}$ in~\eqref{eq:cascerr}, we get
 \begin{eqnarray}
 \nonumber
    \vert
{\rm fl}_{\rm casc}(  x^T  y )
-
 x^T  y
\vert & \leq &  \vert x \vert^T \vert y  \vert \times 2^{-106}
+
4 \times 10 \times 2^8 \times 2^{-117} \times  2^{42} \times k  \times
\vert x \vert^T \vert y \vert  \\ 
     & \leq & \vert x \vert^T \vert y  \vert \times 2^{-106}
+
2^{-61} \times k  \times
\vert x \vert^T \vert y \vert.
\label{eq:fp642err}
\end{eqnarray} 
Similarly~\eqref{eqn:fp64x2} can be rewritten as \[\vert {\rm fl}_{\rm FP64x2}(  x^T  y ) -  x^T  y \vert  \leq 
 2^{-106} \times k \times \vert  x \vert^T \vert  y \vert.\]

 This analysis shows that the forward error bounds   
 \begin{itemize}
    \item are better than FP64 (53 bits mantissa);
     \item can be much better than double-double; and 
     \item  when bin 0 does not accumulate to zero, (\ref{eq:fp642err}) tells us the worst case error for cascading still leaves 61 correct bits in the mantissa.
 \end{itemize} 
 To gain further insight, we present numerical experiments in Section~\ref{sec:accuracy}. 

 An important aspect of cascading GEMM is that we can estimate all the condition numbers and potential worst-case errors after only computing bin 0, which is described in the next section.

\NoShow{
This final bound for the forward error in a cascaded dot product is possibly worse than the bound for dot product in FP64x2 arithmetic  given by
\begin{equation}
    \label{eqn:fp64x2}
 \vert {\rm fl}_{\rm FP64x2}(  x^T  y ) -  x^T  y \vert  \leq 
 \gamma_k \vert  x \vert^T \vert  y \vert
 \approx 
 k \machepsdd \vert  x \vert^T \vert  y \vert
 \end{equation}
 when FP64x2 is employed.
 For example, it is possible for $ \vert x \vert^T \vert y \vert  $ to be very small if small elements in one vector line up with large elements in the other, yet $ \| x \|_\infty \| y \|_\infty $ can then be large.
Yet 
\[
 \vert {\rm fl}_{\rm FP64x2}(  x^T  y ) -  x^T  y \vert  \leq 
 \gamma_k \vert  x \vert^T \vert  y \vert
 \leq
\gamma_k (\| x \|_\infty \vec \jmath\vert)^T (\| y \|_\infty \vec \jmath\vert)  =
k \gamma_k \| x \|_\infty  \| y \|_\infty
\approx
k^2 \machepsdd \| x \|_\infty  \| y \|_\infty ,
\]
 which suggests that for well-balanced vectors cascading can be expected to give higher accuracy than a dot product in FP64x2 accuracy, since $ \machepsnew < \machepsdd$ (with $ \machepsnew = 2^{-117} $ and $ \machepsdd = 2^{-106}$).
 To gain further insight, we present numerical experiments in Section~\ref{sec:accuracy}.
 }

\subsection{Detecting cancellation error }
\label{sec:detecting}

Dot products, and hence \gemm, can suffer severe cancellation errors which lead to less accurate results, especially for vectors that are nearly orthogonal. 
In most floating-point calculations this phenomenon is difficult to detect while it is happening unless one does very intrusive (not performance or power-friendly) checking before and after each FMA. To detect cancellation in \gemm, one usually looks at the final result of the dot product to see if it is small compared to the norm of the vectors that generated it. This is not typically done in practice, since it only is possible after the calculation is finished. Such detection of cancellation or computation of a condition number of the dot product could trigger the use of a different algorithm, but only after the previous algorithm completes. 

Cascading \gemm\ has an advantage, since the dot products are by definition broken into pieces. Detecting cancellation error can typically be done after just computing bin 0, well before the algorithm is completed (indeed, bin 0 represents only one tenth of the work we propose.) By detecting cancellation error earlier, other approaches can be be taken before one has completed much of the work. 

Furthermore, as previously noted, bins 0, 1 and 2 are computed error-free. If the result of a dot product contains a non zero in bin 0, then that result has  minimally 42 error-free bits from bins~1 and~2, along with whatever bits are nonzero in bin 0\footnote{Notice that there may be leading $D_2 \approx 64$ leading bits that are computed error-free, but because of leading zeros we can't say these are the most significant 64 bits.}, and the rest of the computation is done with regular FP64 DGEMM accuracy (up to another 53 bits), although not error-free. This is the primary reason why nonzeros in bin 0 yield results comparable with double-double. We also benefit from higher accuracy because the condition number of bins 3-6 is not necessarily as high as the condition number of the original input vectors%
\footnote{If one splits a real number then the splits other than the most significant one will tend to be more like a random number, which means that if dot products are formed with the vectors of such numbers, that dot product can be expected to have a reasonable condition number.}. In our ill-conditioned testing (Section~\ref{sec:accuracy}), when the conditioning grows, our method actually becomes even more favorable (again with the notable exception of zeros appearing in early bins.)

This leads to a strategy to check for zeros in bin 0 possibly before even finishing all ten \gemm s, and flag such results for being subject to cancellation errors. We further comment on the implementation of this in Section~\ref{sec:morecancellation}.

\NoShow{ SHOULD WE REMOVE THIS OLD PARAGRAPH BELOW? IT MAY NO LONGER BE TRUE!

One may observe that so far we have only talked about the error involved in transforming data from something like double-double or quad into this cascaded format. Obviously this is incomplete. One would like an error analysis that talks about the overall error of the computation, not just the error in one step of conversion. 
However, the combination of fixed and floating point, which tend to have different error analyses makes this prohibitive. So instead, we provide a high level explanation on the overall error assuming that the conversion error previously discussed has already been accounted for.}

\NoShow{A cascaded dot product (within \gemm) combines fixed-point and floating-point calculation so that \response{the}
highest order bits  of the result are always computed error-free in fixed-point. Because we always get exact results in bins~0, 1, and 2, the leading\footnote{Notice that we can't say most significant here.} $ D_2 $ ($\approx$ 64)
 bits are calculated error-free, followed by some FP64 calculations in bin~3--6 that get similar accuracy as one might expect from FP64. \response{ 
For instance, if there is a matrix entry in bin 0 that is zero, then that bin does not contribute to the most significant bits of the final result. This can be easily/cheaply detected and flagged. In contrast, if it contains a non-zero, that means the bits computed by bin 0 are error-free and bins 1-2 also contribute to this result error-free, even if entries in these bins are zero.} The combined values from cascading might be far better than FP64x2 where the results from both parts suffer with standard FP64 (53-bit mantissa) errors. However, there is no guarantee that we calculate the first {\em most significant} 64 bits error-free, because there may be leading zero bits in the result. In the \response{worst} case scenario, bins 0-2 may be all zeroes, in which case the final accuracy will be closer to that of FP64 than FP64x2, as discussed before.
 }

\NoShow{With the proposed approach,  this phenomena can be relatively cheaply detected.
By monitoring how \response{many} leading zeroes result in bins~0--3, cancellation can be flagged.  \response{ 
For instance, if there is a matrix entry in bin 0 that is zero, then that bin does not contribute to the most significant bits of the final result. This can be easily/cheaply detected and flagged. In contrast, if it contains a non-zero, that means the bits computed by bin 0 are error-free and bins 1-2 also contribute to this result error-free, even if entries in these bins are zero.} 
}

\NoShow{In our worst case scenario where the first few bins result in zeros, we will only have the double precision accuracy (not even double-double) given by the DGEMMs in the final bins. This might be unacceptably bad if not for the fact that in practice, the algorithm tends to be more accurate than DDGEMM, tends to be faster because we are trying for the most accuracy possible with only ten DGEMMs, and indicates (through the detection of zeros) where things might be problematic. For many of our runs, no errors were detected and the average result was better than DDGEMM, even for ill-conditioned data, but more on that in a future section. 
}

\remove{
For the case where $ k $ is small, if bin 0 has ANY non-zero bits, then unless bins 1 or 2 cancel that out (a possibility, but extremely unlikely to ever occur in practice), then that means we have at least $ D_2 - D_0 = 42 $ leading bits that are computed error-free, in addition to however many significant bits we found in bin 0, and in addition to the error-prone 53 additional bits we have in bin~3--6. 
{\bf I don't understand this:} This means that if bin 0 has exclusively non-zero values (every component has at least one non-zero bit), we are getting full 21-bit benefits from bin 1 and 2 (we know we are achieving 42 bits error-free) and the standard benefits from the following bins, and we're still likely to do better than FP64x2.
If $ k $ is large, we will discuss later how the multiplication can be broken down into multiple rank-k updates.  Whether   cancellation can be tracked in this case 
is an open question.
}

\NoShow{
\subsection{Mixing Precisions in Approach}

We now briefly give insight into how cascading \gemm\ affects accuracy.
Since in the end dot products are employed to compute individual entries, we start by analyzing a cascaded dot product.

Consider a cascaded 
vector:
\[
x = x_0 \phantom{\sigma_0} + x_1 \sigma_1 + x_2 \sigma_2 + x_3 \sigma_3,
\]
with $ \phantom{\sigma_0} = 1 $ and $ k $ relatively small,
as discussed in Section~\ref{sec:cascading_dot}.
Recall that all elements in the vector are conformally quantized.
Let $ \chi $ be the worst-case element in the vector in terms of leading zero bits being introduced as part of conformal quantizing.
While on the surface it may seem like all bits in $ \chi $ can become zeroes, the fact that the last chunk is a {\em floating point number} means that if leading zero bits reach that chunk, theses are absorbed into its exponent.  In our example with $ k = 256 $, the last chunk still retains 53 bits and hence has the accuracy of a FP64 calculation.  Since bins are accumulated in FP64, we conclude that in the worst case scenario, the accuracy of the computation of an individual entry in $ C $ is still essentially equivalent to using FP64 arithmetic.

At the matrix level, when $ k \leq 256 $, the analysis of the worst-case element would suggest that the error introduced by cascading can be captured as a matrix $ X $ becoming
$ X + \delta\!X $, where $ \vert \delta\!X \vert \leq \epsilon_{\rm FP64} \vert X \vert $ and $ \vert \cdot \vert $ denotes taking an element-wise absolute value (resulting in a matrix),  $ \leq $ is an element-wise comparison, and $ \epsilon_{\rm FP64} $ is approximately the machine epsilon for FP64 arithmetic.  In this case, 
the computed $ C = A B $ equals the result of the product $ ( A + \delta\!A ) ( B + \delta\! B ) $.  In other words, the error would propagate as if FP64 arithmetic were performed.
However, the situation is much more complex: The cascaded matrix of $ A $ equals $ A + \delta\!A $ where $ \vert \delta\!A \vert \leq E \circ \vert A \vert $.   Here, the $ \circ $ represents an element-wise matrix multiplication. The elements of columns of $ E $ are positive and range from $ \epsilon_{\textrm{FP64}} $ to $ \epsilon_{\textrm{FP64x2}} $, depending on the largest entry in the corresponding row of $ A $.  A similar analysis applies to $ B $, except the the entries of columns of $ E $ then depend on the largest entry in the corresponding columns of $ B $.  The larger an entry, the more exactly it is represented.

This is not the paper where we further theoretically analyze the impact of cascading on the numerical properties of the matrix product.  For comments on future research needed regarding this, see Section~\ref{??}.
}

\NoShow{
{\bf The rest of this subsection is leftover from some of Robert's and Greg's musings.}

While this paper is not the paper to fully analyze the implications of cascading matrix multiplications

We are combining fixed-point and floating-point in the sense that computation that contributes to bins 0-2 in effect perform fixed-point arithmetic, avoiding most of the problems that entails since we control the number of bits involved, and the least significant bits corresponding to the last chunk of each of $ A $ and $ B $, and the last bins of the result, carry all the bits that they can.  In the worst case scenario, one or more quantized elements have chunks 0-2 all equal to zero and/or bins 0-2 of the result all equal zero, in which case the final accuracy is still at least comparable to FP64 accuracy.
Let us dive a little deeper into this issue.

The number of bits we cover in the fixed-point portion of the computation (the first few bins) depends partially on the $k$ dimension of \gemm, or the inner product size, as discussed in Section~\ref{sec:cascading_dot}. 
In practice, $ k $ will be relatively small. 
As reasoned in Section~\ref{??}, if $k=256$, we can choose chunks zero through two to hold 
22, 21, and 21 bits of the FP64x2 number, leaving 53 bits fro chunk three for a total of 117 bits.
Now, if due to the conformal quantizing of all numbers in a row of $ A $ or column of $ B $ there are many leading zeroes captured in chunks zero, one, two, and even three, then this last chunk can still hold at least 53 bits of the number since it is stored as a floating point number.  Any leading zeroes in this last chunk are simply absorbed into the exponent of that chunk.
The only concern might be if overflow or underflow happens.  In other words, we need to monitor and perhaps control the exponents of the chunks.

{\bf I probably need to work through the rest of this argument with Greg.  I think that this whole think might be easier to explain if we focus on a single dot product with k entries.}
 or 117 bits. However, if $k$ is smaller, we get even more aggressive and choose a larger number of bits. We just need to ensure each bin of the cascading equation has integers such that any GEMM done on them will remain below $2^{53}.$ Each matrix will be bounded by the square of the maximal element times $k$, so as $k$ is smaller, we can pick more bits without overflowing this constant. Of course, some bins contain multiple products, and we'd prefer not to use a separate scratch space for every product, and re-use some of the accumulators in our GEMM algorithm for maximal efficiency. But notice that if we have four splits, then the first three bins will have no more than 3 matrices total. The more aggressive bit choice means that our algorithm is more accurate on super small problems. 

But this also means we are tracking powers of two that range from 0 (the maximal element is normally to 1) to -116 (the number of bits we track is at least 117, greater for smaller $K$). So the number of bits we are potentially dropping and ignoring in this algorithm starts at exponent -117, so the maximal term we are dropping from our calculation is no greater than $2^{-116}$. In other words, we are finding the solution to $(A+E_A)*(B+E_B)$ where all elements in $E_A$ and $E_B$ are bounded by $2^{-116}$ and this corresponds to the error in the cascading equation itself. The error in the cascading equation is then $A*E_B + E_A*B + E_A*E_B$. Given that $A$ is scaled so the maximal element in any row is 1 (and the same can be said with $B$ and it's columns), we know $A*E_B$ and $E_A*B$ are both bounded by $2^{-115}$. We can compare any answer we get from the $A*B$ product against this term and see if 4 splittings of the cascading equations were enough. What's more, that comparison requires no extra flops because what really matters is the gap between the exponent of the answer in the first few bins and $2^{-115}$. This tends to happen when our first few bins are correctly calculating zeros instead of significant bits like we hope. So if any terms in $A_0 B_0$ happen to be zero, it's potentially possible that the final answer may be too close to $2^{-115}$ to trust a high accuracy in the computation. That is, if the computed term is $2^{-110}$, then potentially only have a few bits of relative accuracy in this term. This estimation, however, is detectable and our algorithm can easily spot these instances. 

What to do about detected problems, however, is an open issue. Nevertheless, in our performance runs, we detect whether such problems occurred in the algorithm. In those instances, if they are rare enough, it might be okay to do just those computations with strict double-double arithmetic. The point is that we can detect the problems, we know how many significant bits we've calculated correctly (the first few bins are done error-free), and so we have a handle on our accuracy that floating point algorithms do not normally have. 

We do everything we can to minimize the chances of dropping bits during the conversion involved in the cascading equations. This is partially done by finding the maximum element in each individual row for A, instead of using the maximal element of all of A. But this can also be written in matrix notation as if we are finding diagonal matrices consisting of powers of 2 called $D$ and $E$ such that 
\begin{equation}
C = D * A * B * E
\end{equation}
In future work, we are also looking into finding an ideal diagonal $F$ (also powers of 2), such that we further minimize the range of the problem with:
\begin{equation}
C = ( D * A * F ) * ( F^{-1} * B * E )
\end{equation}
The goal here being to choose these diagonal matrices ($D$, $E$, and $F$) to minimize the number of dropped bits in $A$ and $B$ during the cascading equation conversion process.
}

\section{Practical considerations}

\label{sec:practical}

We illustrate the viability of the scheme proposed in the last section by leveraging principles that underlie the BLAS-like Library Instantiation Software (BLIS) framework ~\cite{BLIS2,BLIS1}.
BLIS strives to enable optimizing performance while minimizing the amount of code that must be customized for a new architecture. Experience with many generations of CPUs has demonstrated that performance that rivals that of the best implementations can be achieved while keeping the effort manageable.

The proposed scheme for computing \gemm\ via cascaded matrices entails three phases:
\begin{description}
    \item [Phase one:] Splitting the matrices into its cascading components.  The technical details of our approach to this can be found in Appendix~\ref{app:A}.
    \item [Phase two:] Computing the various products within a bin exploiting a high-performing FP64 matrix multiplication. 
    \item[Phase three:] Adding across the bins in FP64x2 precision to obtain the final answer. 
\end{description}
We now examine how to integrate these into a high-performance implementation.

\subsection{High-performance implementation of \gemm}

\begin{figure}[tb!]
\begin{center}
\includegraphics[width=0.5\textwidth]{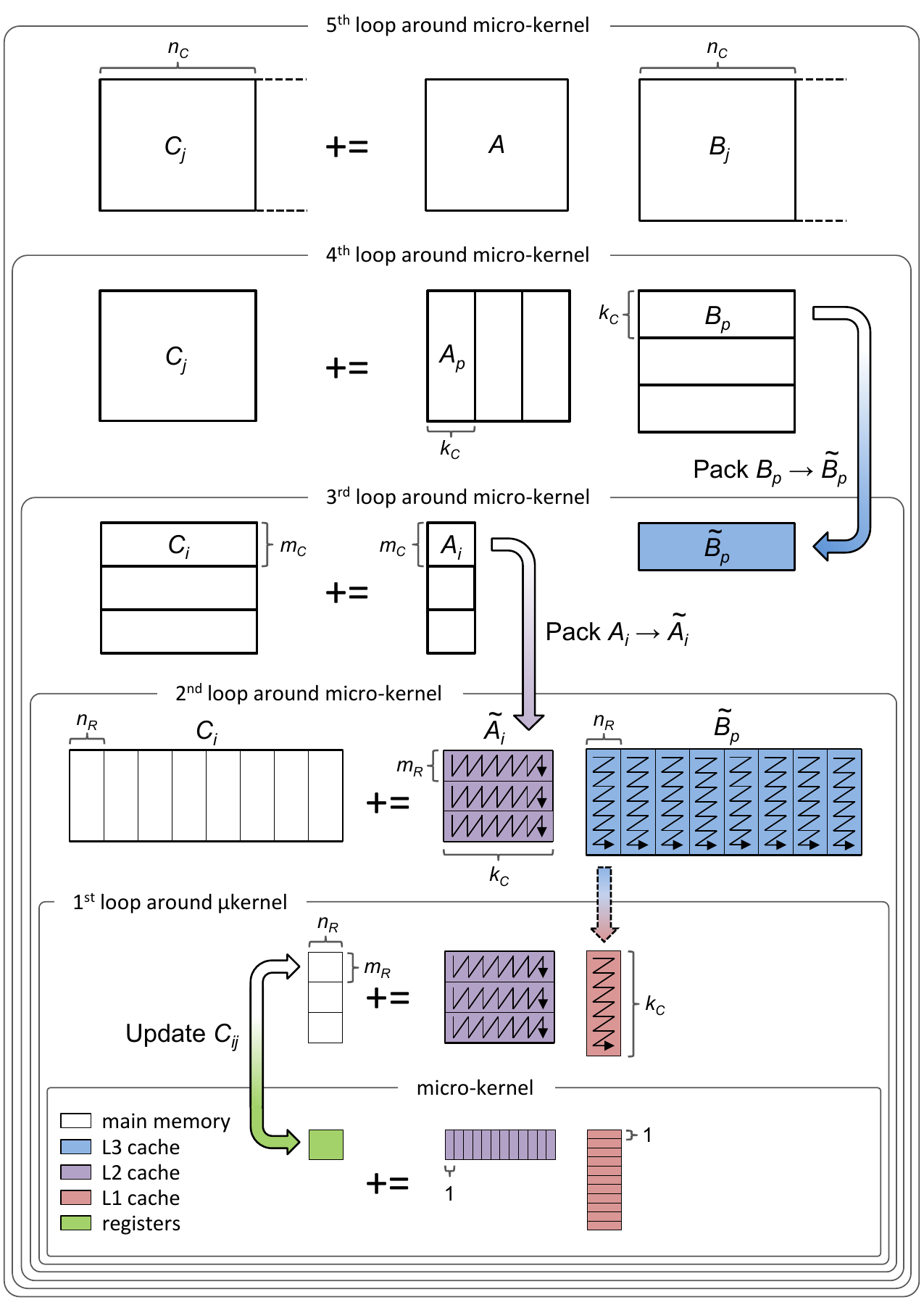}

\end{center}
\caption{
The BLIS refactoring of the GotoBLAS algorithm for \gemm\ as five loops around the microkernel.
This diagram, which is often used when explaining the fundamental techniques that underly the BLIS implementation of \gemm, was modified from a similar image first published in~\cite{BLIS5} and is used with permission.
}
\label{fig:BLIS}
\end{figure}

Most current high-performance implementations of GEMM for CPUs, including BLIS, implement Goto's Algorithm~\cite{Goto1}.
The BLIS instantiation of this algorithmn
exposes five loops (implemented in the C programming language) around a {\em micro-kernel} (implemented in assembly code or with intrinsic functions) as illustrated in Figure~\ref{fig:BLIS}.
The micro-kernel updates a $ m_R \times n_R $ {\em micro-tile} of $ C $, 
$
C_{ij}  
$,
by multiplying a $ m_R \times k_C $ {\em micro-panel} of $ A $ times a $ k_C \times n_R $ micro-panel of $ B $.  
On a typical achitecture, the micro-tile of $ C $ is kept in registers while the micro-panel of $ B $ is streamed from the L1 cache and the micro-panel of $ A $ is streamed from the L2 cache.  In order to access memory with stride one (consecutive access), blocks of $ A $ and row panels of $ B $ are packed (rearranged) at strategic points in the algorithm~\cite{GMH:92}.  Generally, blocking parameters can be determined analytically~\cite{BLIS4}.
One advantage of exposing the five loops around the micro-kernel in C is that it presents multiple loops where thread-level parallelism can be introduced~\cite{BLIS3,BLIS2}.

The original GotoBLAS implementation required the computation related to the two loops around the micro-kernel and the micro-kernel to be customized  in assembly code for a given architecture.   BLIS reduces this to just the micro-kernel.

\NoShow{
\subsection{Phase two: FP64x2 GEMM via ten FP64 GEMMs}
To calculate the various terms within the bin, one would require workspace to store these intermediate results. A simple approach would be to compute each of the ten products independently in their own workspace. However, this is an unnecessary waste of workspace. Since each product within a bin contains terms within the same range, these terms can be added together using FP64 arithmetic.  This reduces the workspace down to four from ten. The workspace can be reduced down to one by merging phase two and phase three in a way that each result is accumulated into one bin. This requires accumulation from the lowest order terms to the higher order terms. In addition, some of these accumulations must be done with FP64x2 arithmetic. 
}

\subsection{Simple: FP64x2 \gemm\ via ten FP64 \gemm s}
\label{sec:simple}

An important consequence of Goto's algorithm is that it achieves its high performance already for rank-k updates with $ k = k_C $ ($ = 256 $ on our target architecture discussed in Section~\ref{sec:perfexp}). 
This allows us to stage  
\gemm\ as a sequence of rank-k updates as shown in Figure~\ref{fig:gemmrankk}.
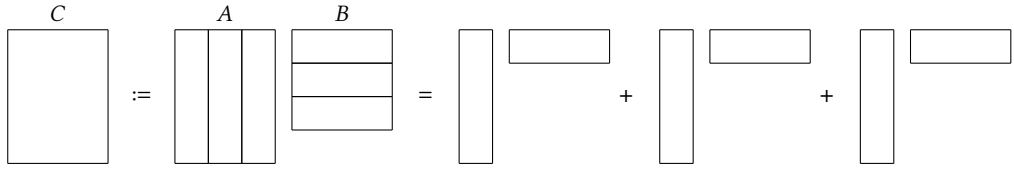
\begin{figure}

\begin{center}
{
\setlength{\unitlength}{0.175in}
\begin{picture}(30,5)
\put(0,0){\line(0,1){4}}
\put(3,0){\line(0,1){4}}
\put(0,0){\line(1,0){3}}
\put(0,4){\line(1,0){3}}
\put(1.5,4.5){\makebox(0,0){$C$}}

\put(4,2){\makebox(0,0){$:=$}}

\multiput(5,0)(1,0){3}{
\put(0,0){\line(1,0){1}}
\put(0,4){\line(1,0){1}}
\put(0,0){\line(0,1){4}}
\put(1,0){\line(0,1){4}
}
}
\put(6.5,4.5){\makebox(0,0){$A$}}

\multiput(8.5,1)(0,1){3}{
\put(0,0){\line(1,0){3}}
\put(0,1){\line(1,0){3}}
\put(0,0){\line(0,1){1}}
\put(3,0){\line(0,1){1}
}
}
\put(10,4.5){\makebox(0,0){$B$}}

\put(12.5,2){\makebox(0,0){$=$}}

\put(18.5,2){\makebox(0,0){$+$}}

\put(24.5,2){\makebox(0,0){$+$}}

\multiput(13.5,0)(6,0){3}{
\put(0,0){\line(1,0){1}}
\put(0,4){\line(1,0){1}}
\put(0,0){\line(0,1){4}}
\put(1,0){\line(0,1){4}
}
}

\multiput(15,3)(6,0){3}{
\put(0,0){\line(1,0){3}}
\put(0,1){\line(1,0){3}}
\put(0,0){\line(0,1){1}}
\put(3,0){\line(0,1){1}
}
}
\end{picture}
}
\end{center}
\NoShow{
\[
\begin{array}{rcl}
\begin{array}[b]{c}
C \\
\overbrace{
\begin{array}{| c |}\hline
 ~~~~~~~~~~ \\
 ~~~~~~~~~~ \\
 ~~~~~~~~~~ \\ \hline
\end{array}
}
\end{array}
&:=& 
\begin{array}[b]{@{}c@{}}
A \\
\overbrace{
\begin{array}{| c | c | c | c |} \hline
~ & ~ & ~ \\
~ & ~ & ~ \\
~ & ~ & ~ \\
\hline
\end{array}
}
\end{array}
~
\begin{array}[b]{@{}c@{}}
B \\
\overbrace{
\begin{array}{| c |}\hline
 ~~~~~~~~~~ \\ \hline
 ~~~~~~~~~~ \\ \hline
 ~~~~~~~~~~ \\ \hline
\end{array}
}
\end{array}
\NoShow{+
\begin{array}[b]{c}
C \\
\overbrace{
\begin{array}{| c |}\hline
 ~~~~~~~~~~ \\
 ~~~~~~~~~~ \\
 ~~~~~~~~~~ \\ \hline
\end{array}
}
\end{array}}
=
\begin{array}{| c | } \hline
~  \\
~  \\
~  \\
\hline
\end{array}
~
\begin{array}{ c }\hline
 \multicolumn{1}{|c|}{~~~~~~~~~~} \\ \hline
 ~~~~~~~~~~ \\ 
 ~~~~~~~~~~ \\ 
\end{array}
+
\begin{array}{| c | } \hline
~  \\
~  \\
~  \\
\hline
\end{array}
~
\begin{array}{ c }\hline
 \multicolumn{1}{|c|}{~~~~~~~~~~} \\ \hline
 ~~~~~~~~~~ \\ 
 ~~~~~~~~~~ \\ 
\end{array}
+
\begin{array}{| c | } \hline
~  \\
~  \\
~  \\
\hline
\end{array}
~
\begin{array}{ c }\hline
 \multicolumn{1}{|c|}{~~~~~~~~~~} \\ \hline
 ~~~~~~~~~~ \\ 
 ~~~~~~~~~~ \\ 
\end{array}
\NoShow{+
\begin{array}{| c |}\hline
 ~~~~~~~~~~ \\
 ~~~~~~~~~~ \\
 ~~~~~~~~~~ \\ \hline
\end{array}}
.
\end{array}
\]
}
    \caption{GEMM performed as a series of rank-k updates.}
    \label{fig:gemmrankk}
\end{figure}
\NoShow{
\devangi{\sout{This is necessary, because now for each of the rank-k updates, the inner size, $ k $, can be chosen to be small, say the $k_C$ encountered in Figure~\ref{fig:BLIS}.}} 
}
This fits  with the discussion in Sections~\ref{sec:CascadingGemm}, where $ k$ had to be constrained to be relatively small. 
This has the added advantage that it limits the workspace required for the cascaded matrices.

A simple approach to implement  FP64x2 \gemm\ via ten FP64 \gemm s is now to explicitly cascade a column panel of FP64x2 matrix $ \widehat A $ into four FP64 matrices, requiring $ 4 \times m \times k_C $ FP64 workspace, and a row panel of $ \widehat B $ into seven FP64 matrices, requiring $7 \times k_C \times n$ FP64 workspace. The ten FP64 \gemm s in (\ref{eqn:Gemm10}) can be computed by making calls to a high-performing DGEMM (FP64 \gemm) implementation like the one provided by BLIS, accumulating these into the four  bins 0, 1, 2, and 3-6 each of size $m\times n$ FP64 numbers. Finally (or after each individual bin is computed), the results of these computations are added into $ C $, via FP64x2 additions, adding in first bin 3-6, then bin 2, bin 1, and lastly bin 0, so that terms that are likely lower precision are accumulated in first.  After this, $ \Sigma $ and $ T $ are applied, scaling the rows and columns of the result.  

The problem with this naive implementation is that it requires considerable additional workspace to accommodate the splits of matrix $A$ and $B$ as well as the bins of matrix $C$.   A careful ordering of the computation can reduce the required space for accumulating the bins. Performance is negatively impacted in this naive approach by the requirement that data be moved between memory layers repetitively as matrices $ B $ and $ A $ are cascaded and $ C $ is assembled from the bins.

\NoShow{Dnp comments: discuss validity of this previous statement with Greg.}

\NoShow{The simplest approach is to explicitly cascade FP64x2 matrix $ A $ into for matrices and $ B $ into seven FP64 matrices, and to then compute the ten FP64 \gemm s in (\ref{eqn:Gemm10}) using a high-performing \gemm\ implementation like the one provided by BLIS, accumulating these into the four bins 0, 1, 2, and 3-6. Finally (or after each individual bin is computed), the results of these computations are added to $ C $, via FP64x2 additions, adding in first bin 3-6, then bin 2, bin 1, and finally bin 0 so that terms that are likely lower precision are accumulated in first.
The problem with this is that it requires considerable workspace ($4 \times m \times k_C$ for matrix $A$, $ k_C \times n \times 7$ for matrix $B$, and $4\times m \times n$ for matrix $C$) and it requires data to move between memory layers as matrices $ B $ and $ C $ are cascaded and $ C $ is assembled from the bins.}


\begin{figure}[tb!]
	\begin{center}
		\includegraphics[width=0.6\textwidth]{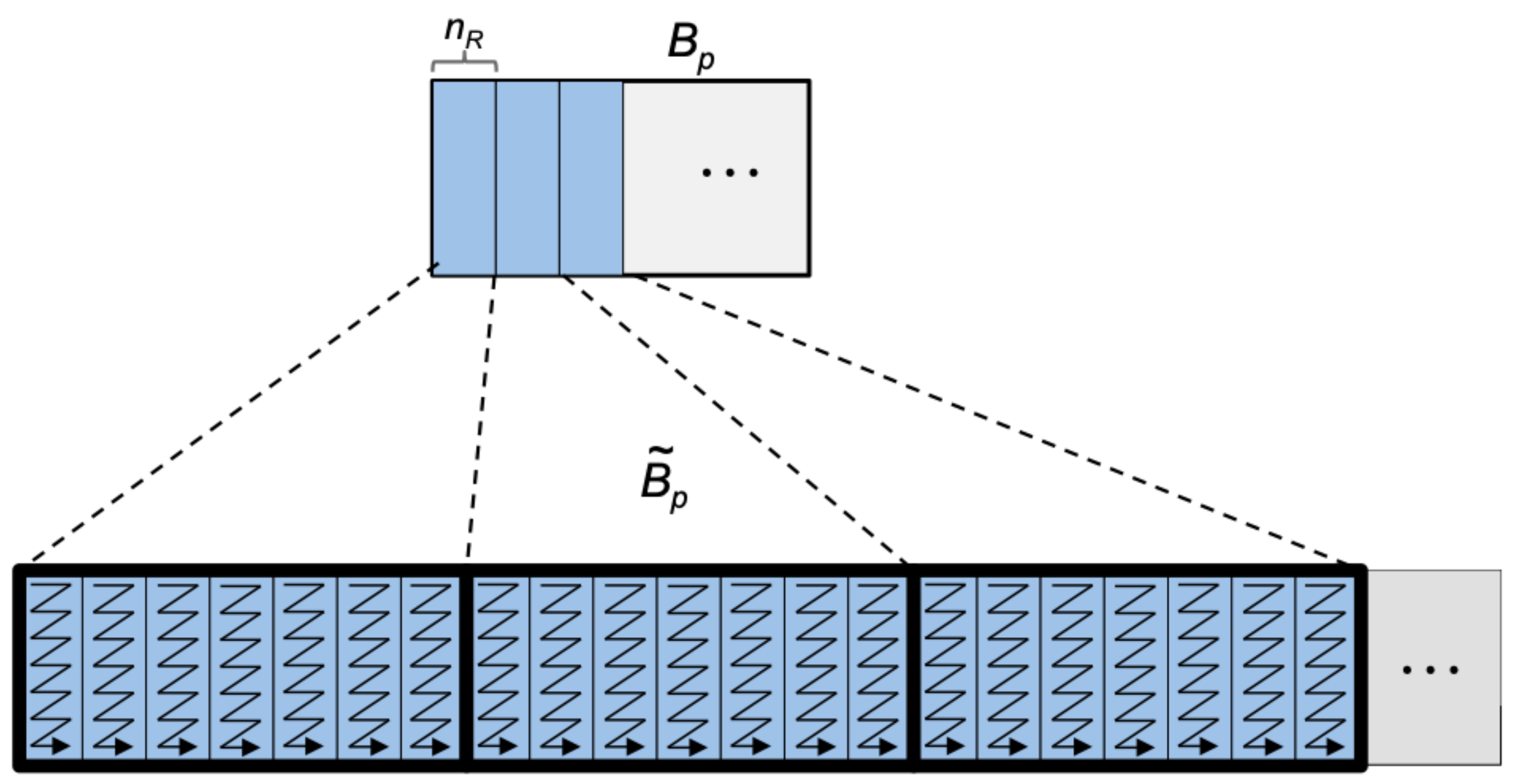}
		\includegraphics[width=0.30\textwidth]{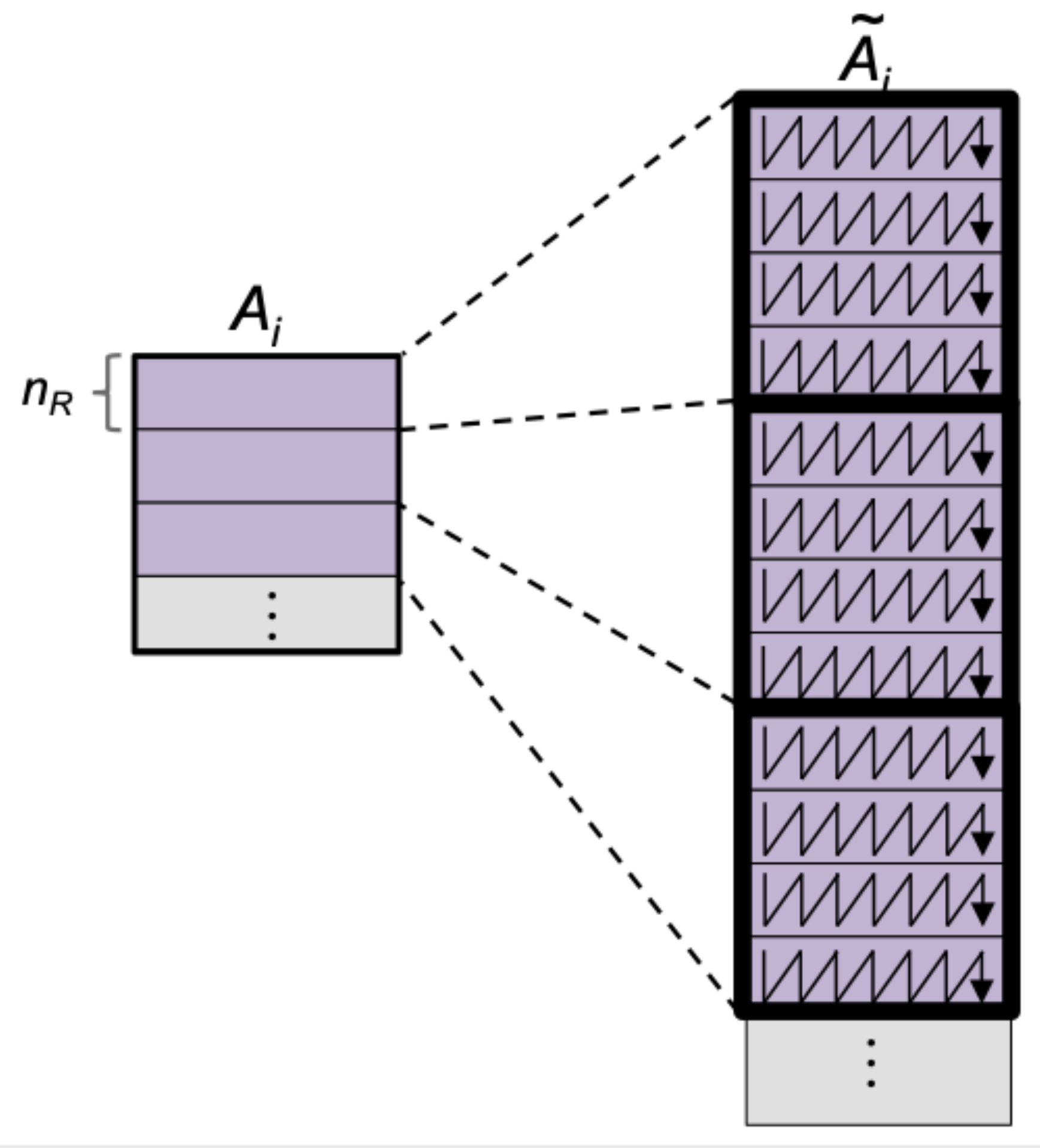}
	\end{center}
	\caption{Left: Packing layout for a panel of $B$. Right: Packing layout for a block of $A$. The colors correspond to the legend in Figure~\ref{fig:BLIS}.}
	\label{fig:packBlayout}
	\label{fig:packAlayout}
\end{figure}

\subsection{BLIS-like: FP64x2 \gemm\ via ten FP64 \gemm s}

\NoShow{
\subsection{The BLIS refactoring of the GotoBLAS algorithm}

The BLIS implementation of the GotoBLAS algorithm recognizes that all parts of the algorithm can be written in $ C $ and be portable among architectures, {\em except} the computation identified as the micro-kernel, which is (typically) assembly coded. 
Importantly, this reduces how much code must be customized for the \gemm\ operation.  
In addition, other matrix-matrix operations (level 3 BLAS) supported by the BLAS, such as Hermitian matrix-matrix multiplication (hemm), Hermitian rank-k update (herk), triangular matrix-matrix multiplication (trmm), and triangular solve with multiple right-hand sides (trsm), the micro-kernel can be reused~\cite{BLIS1}.  This contrasts with the GotoBLAS implementation of these operations which require additional assembly code to be written~\cite{Goto2}.

While ATLAS incorporated (empirical) autotuning in order to optimize the various blocking parameters that are encountered in its \gemm\ implementation (which uses a different algorithm than the GotoBLAS algorithm),  analytical models can be used to determine the various parameters encountered in BLIS~\cite{BLIS4}.
The result is a more flexible, portable, and easier to maintain framework.
Importantly, it achieves nearly optimal performance for most CPUs to which it has been ported, as reported on the BLIS GitHub repository performance pages%
\footnote{\url{https://github.com/flame/blis/blob/master/docs/Performance.md}}.

\subsection{Multithreading within BLIS}

\section{Cascading \gemm\ within the Goto Algorithm}
} 

\label{sec:BLISCascadingGemm}

\NoShow{
\subsection{BLIS-like: FP64x2 \gemm\ via ten FP64 \gemm s}
}

The kind of overhead incurred by a naive cascaded multiplication is akin to the kind of overhead that makes a naive implementation of Strassen's algorithm~\cite{Strassen} impractical except for very large problem sizes.  In~\cite{Huang:2016:SAR:3014904.3014983} it was shown how integrating Strassen's algorithm into 
an appropriate level of the BLIS \gemm\ algorithm in Figure~\ref{fig:BLIS} it was possible to reduce that overhead so that high performance is achieved even for smaller matrices.
Similar techniques can be employed for the cascaded matrix-multiplication as we now describe. 

\begin{figure}[tb!]
	\begin{center}
		\includegraphics[width=0.65\textwidth]{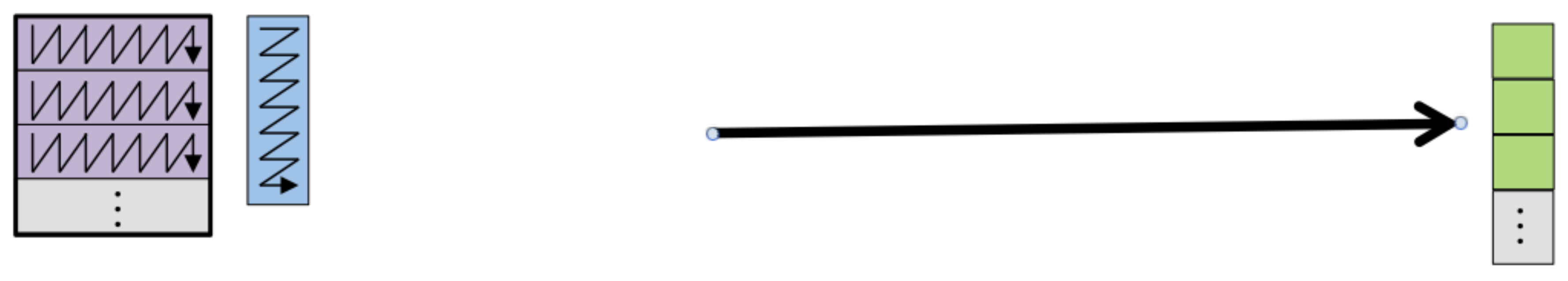} 
    \end{center}
	
    \begin{center}
	\includegraphics[width=0.65\textwidth]{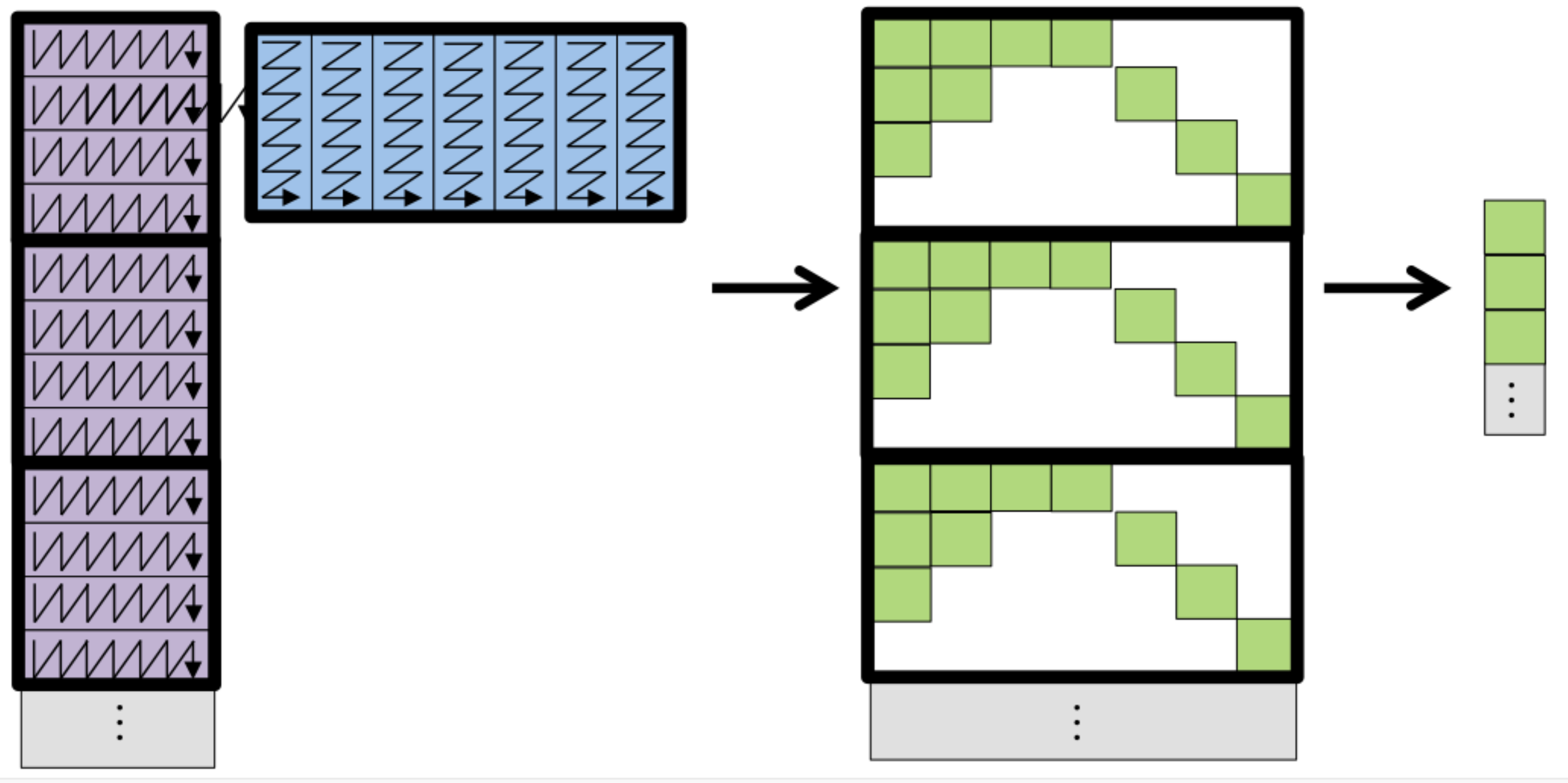}
    \end{center}
    \caption{Top: Operation Goto's Algorithm performs in first loop around the micro-kernel.
	Bottom: Operation performed with cascaded matrices by first loop around the micro-kernel. The colors correspond to the legend in Figure~\ref{fig:BLIS}.}
    \label{fig:computationlayout}
\end{figure}

This approach requires a number of changes to the BLIS framework for \gemm:
\begin{itemize}
    \item 
    During the packing of a row panel of $ B $, FP64x2 micro-panels need to be cascaded into the seven FP64 micro-panels, as illustrated in Figure~\ref{fig:packBlayout}.  This suggests that  the blocksize $ n_C $ be divided by seven so that the packed row panel of $ B $ has the same footprint in the L3 cache%
    \footnote{In practice, $ n_C $ is typically considerably smaller than the L3 cache can accommodate, and hence in practice we divide this by four.}.
    \item 
    During the packing of a block of $ A $, FP64x2 micro-panels need to be cascaded into the four FP64 micro-panels, as illustrated in Figure~\ref{fig:packAlayout}.
    This suggests dividing the blocksize $ m_C $  by four so that the packed block of $ a $ has the same footprint in the L2 cache.
    \item
    The ``packed block of $ A $ times cascaded micro-panel of $ B $'' now for each original micro-panel of $ A $ requires ten calls to the micro-kernel that \response{compute the} micro-tiles.  As part of the execution of the micro-kernel, these contributions can be accumulated into the four bins. These additional calls to the micro-kernel are made by increasing the iterations of the first loop around the micro-kernel by four, and the second loop around the micro-kernel by seven. As indicated by~(\ref{eqn:Gemm10}) all possible products need not be computed.
    \item  The final results is accumulated into the appropriate micro-tile of $ C $, as illustrated in Figure~\ref{fig:computationlayout}, and scaled by applying $ \Sigma $ and $ T $.  This  is done after the first loop around the micro-kernel to ensure the various bins remain in the L2 cache during this computation.
    \end{itemize}
    Importantly, all the described changes lie within the C code that implements \gemm.  Micro-kernels that are already part of BLIS can be reused as-is. These changes do not \response{affect} the threading infrastructure already present in BLIS, enabling multi-threading for this implementation.

\NoShow{
\subsection{Phase three: gluing it back together}
\label{sec:phasethree}
Assuming that during phase two, we have accumulated the results into four bins, we have to add the bins to obtain the final result. To minimize the absorption errors,  we start from the lowest order bin and successively add in the higher order bins. This accumulation is done in FP64x2 arithmetic or with a quick two-sum arithmetic~\cite{}. 
}

\subsection{Adding detection of cancellation errors}
\label{sec:morecancellation}
Detection of cancellation, as hinted at in Section~\ref{sec:detecting}, can be incorporated into the algorithm in the step where a $ m_C \times n_R $ micro-panel of $ \widehat C $ is assembled from the various bins.
This allows cancellation to be detected for multiplications where $ k \leq k_C $.
The current implementation does not attempt to perform any correction if such cancellation is found.

\NoShow{
We now discuss how detection of cancellation, as discussed in Section~\ref{sec:detecting}, is added.

{\bf Is this really the case:  Notice that you are only checking for one rank-k update!}

The situation where there is significant cancelation is an easy check to add to the BLIS approach to cascaded matrix multiplication with very little performance overhead ({\bf I don't understand this:} unlike the above case of detecting cancellation error in FP64 arithmetic.) We check that there are no zeroes in ..., not over the entire matrix, but over just the small submatrix that we've cached before we write the results out to the destination matrix C. A quick spin over the elements to search out the zeros is quite fast and adds an addiitonal layer of robustness to the algorithm. If we happen to find a zero, then depending on the other bins, it means that this component may not have enough accuracy in the end. Unfortunately, we still have no way of knowing if FP64x2 would do any better on that one element, but we do know that if bin 0 has no zeros at all, that we are achieving at least a minimal level of accuracy far better than FP64 and likely better than FP64x2.

For all of our performance results, we check for zeros in bin 0, although we don't provide timing for doing special cases if they are detected. And also note that the condition of bin 0 having zeros is rare, and didn't seem to come up very often except on extremely ill-conditioned results, and even then, we still managed to (accidentally) do better than FP64x2. But we kept the zero checks in for performance fairness, since the worst case scenario for this algorithm is when all the initial bins have zeros and the resulting code only has FP64 accuracy. We didn't want to report a performance number where the accuracy may have been that dismal, as that seems unethical. So all our performance numbers in this paper were done on problems where at least a minimal accuracy was achieved and the code could raise a warning flag to the user if that level was not being achieved. 
}

\section{Experimental evaluation}

We provide results in support of the proposed approach by reporting performance and accuracy results.

\subsection{Performance experiments}

\label{sec:perfexp}

We first report preliminary results of the FP64x2 GEMM via ten FP64 GEMMs implementation.  As discussed, the simple implementation proposed in Subsection~\ref{sec:simple} can only become high performing for very large matrices and hence we only give performance results for the BLIS-like implementation.

\NoShow{
One of the problems with high precision is how to measure performance. In some papers, the number of FP64 operations (flops)  are used. This this works well for FPP64 GEMM, which is made up of FP64 dot products, because FP64 dot products have an equal number of adds and multiples. 
{\bf Is this true?:}However, one of the problems in FP64x2 algorithms is that most of the computations are adds and subtracts, and what's more they are chained and sequential operations that depend strongly on the previous result.

In our case, we wanted to simplify our performance story as much as possible. The best comparison then is against DGEMM performance. DGEMM is highly tuned already in BLIS. So if our algorithm runs 10x slower than DGEMM, this has a lot more meaning than just saying it has 10x the flop count. 
}

As is customary for reporting \response{the} performance of \gemm-like operations, we report the rate of computation in billions of FP64 (double precision) floating point operations per second (FLOPS).  
For comparison, the rate at which the BLIS implementation of DGEMM  computes is given, using a $ 2 m n k $ count for multiplying a $ m \times k $ matrix times a $ k \times n $ matrix.
For the FP64x2 cascaded multiply, an operation count of $ 10 \times 2 m n k$ is used. \response{We also compare the performance to the performance of OzBLAS QDGEMM, where QGEMM (FP128 GEMM) is computed using DGEMM,  enabling the ``fast'' mode, which for these matrix sizes and data ranges reduces the number of GEMMs that are computed to 21 yielding an  operation count of $ 21 \times 2 m n k$. We compare against QDGEMM since OzBLAS does not implement FP64x2 GEMM in terms of DGEMM.} In the experiments, all matrices are square.

\NoShow{xs\subsection{Platform}

\label{sec:platform}
}

The performance experiments were conducted on an Intel Core i7-7700K CPU with 4 cores. Each core executes at 4.20~GHz with a max turbo frequency of 4.50~GHz, providing a single-core peak performance of 72~GFLOPS in double precision. The 8~MB L3 cache is shared between all four cores, while each core has a private 256~KB L2 and 32~KB L1 cache. The installed OS is Ubuntu 18.04 running the Linux 4.15.0 kernel. BLIS version 0.8.1 was used in these experiments for the DGEMM performance curves. \response{OzBLAS version 1.5 was built so that it uses DGEMM provided by BLIS.} 

\remove{\bf Should something be said about what version of BLIS was used? Devangi- could you fill this in?}

\NoShow{
\subsection{Performance results}
}

\begin{figure}[tb!]
\begin{center}
\begin{tabular}{@{\hspace{-0.15in}} p{0.52\textwidth}  p{0.52\textwidth}}
\includegraphics[width=0.52\textwidth]{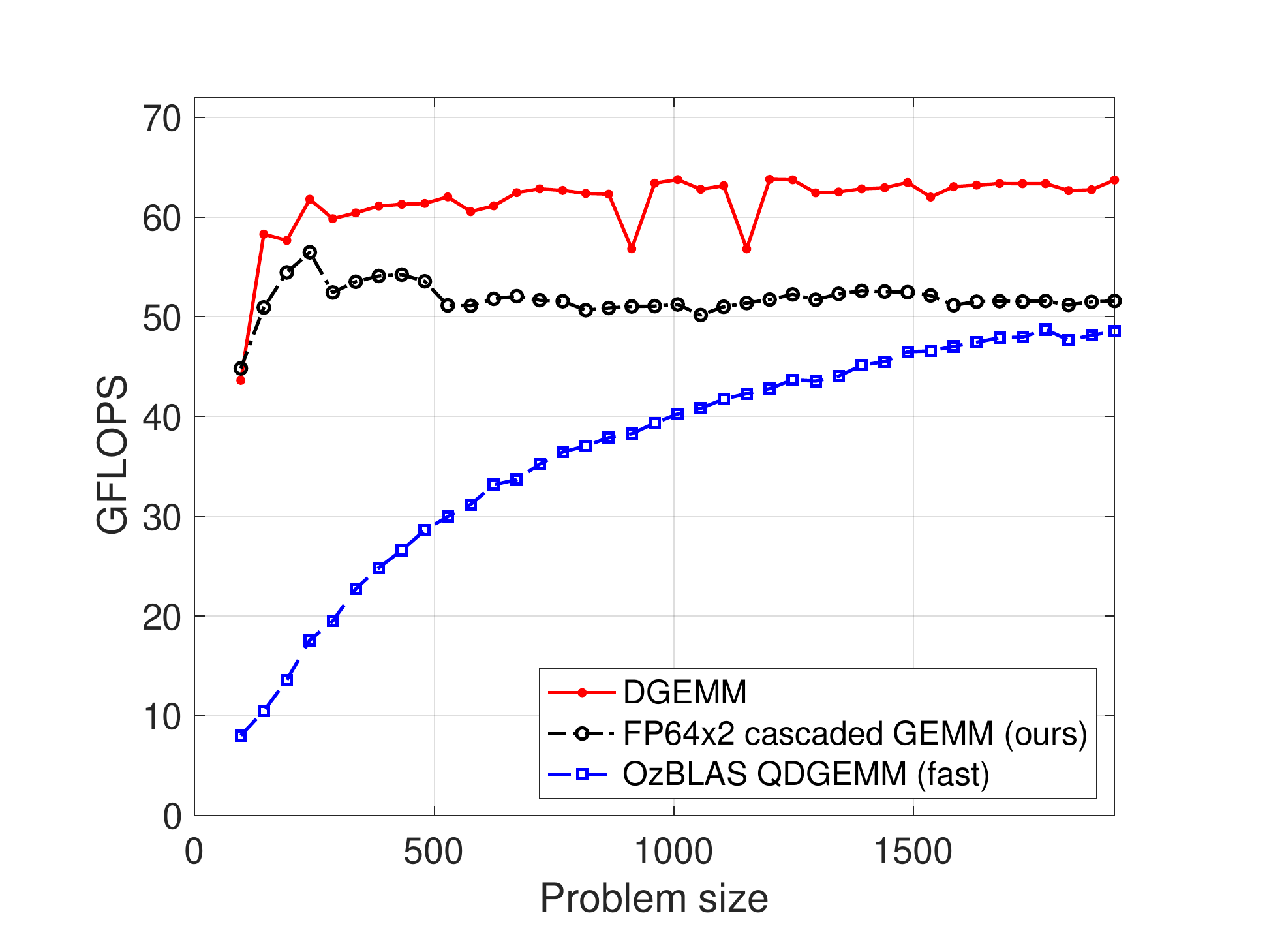}
&
\includegraphics[width=0.52\textwidth]{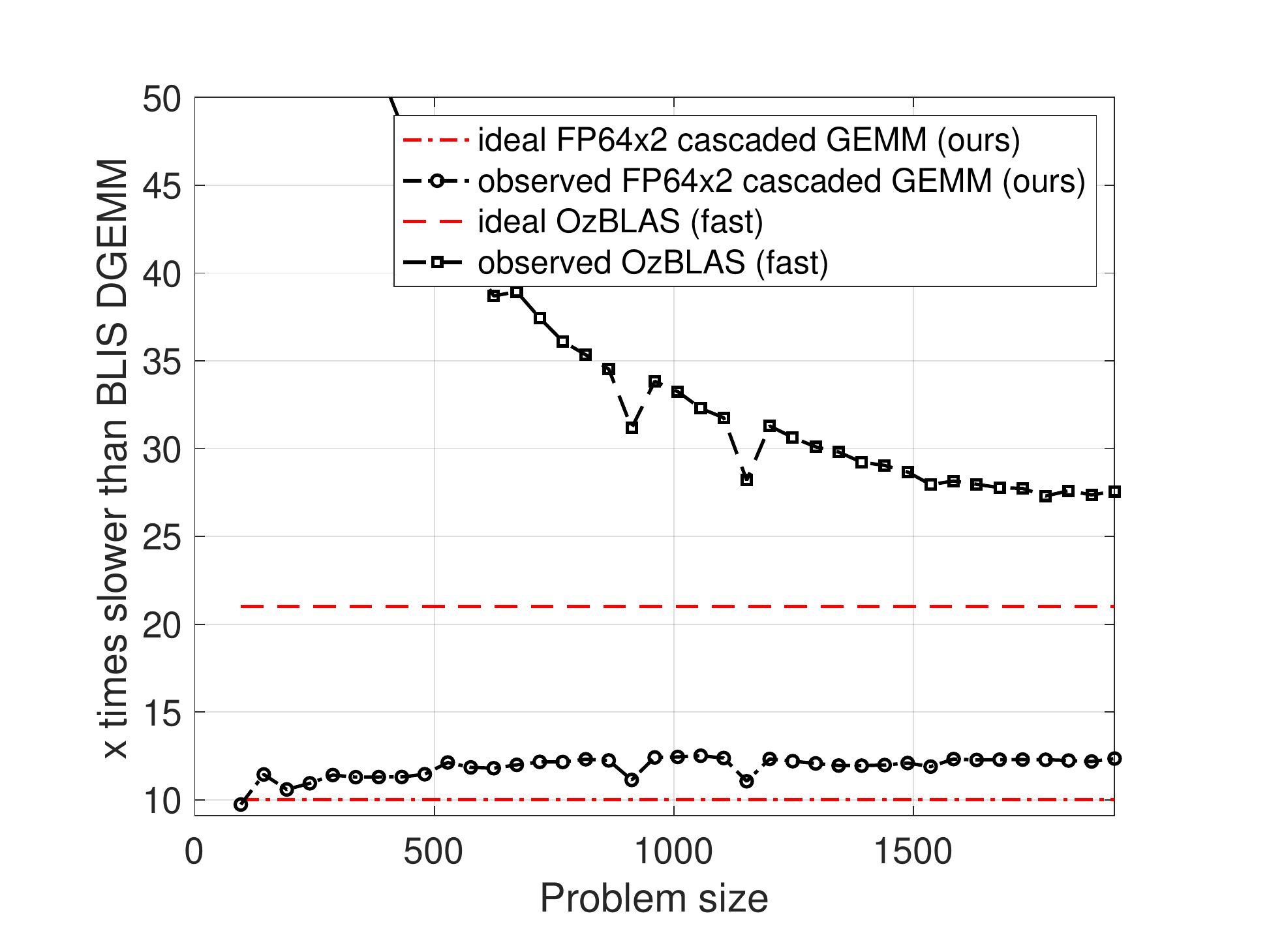}
\end{tabular}
\end{center}
\caption{Comparing the sequential performance of BLIS DGEMM with FP64x2 cascaded matrix implementation. Left: GFLOPS attained for BLIS DGEMM, FP64x2 cascaded \gemm\ at various problem sizes ($m=n=k$) \response{and OzBLAS QDGEMM (fast)}. It is important to realize that for the different curves different operation counts are used.  Right: Ratio of the execution time of FP64x2 cascaded \gemm\ vs BLIS DGEMM (FP64 \gemm) \response{and OzBLAS QDGEMM (fast) vs BLIS DGEMM}. Ideally, we expect $10\times$ slow down for FP64x2 cascaded \gemm\ and \response{$21\times$} slowdown is expected for OzBLAS. The observed OzBLAS slowdown for smaller matrices have not been shown as including them in the graph would have reduced readability.} 
\label{fig:perf}
\end{figure}

Performance results are given in Figure~\ref{fig:perf}.  In the experiments, the cancellation detection mechanism is activated, but no correction is performed if a possible cancellation situation is detected.
In the figure on the left, we report performance as a function of size, for square matrices.  Here, the top of the graph represents peak performance.
The same data is presented in the figure on the right, except that the slowdown relative to DGEMM is given.  Since ten FP64 GEMMs are performed, one would expect to at best observe only a $ 10 \times $ slowdown.  This ideal is not achieved due to the overhead incurred by cascading matrices and summing results. \response{A breakdown of the time spent in each phase is shown in Figure~\ref{fig:breakdown}.}

\begin{figure}[tb!]
\begin{center}
\includegraphics[]{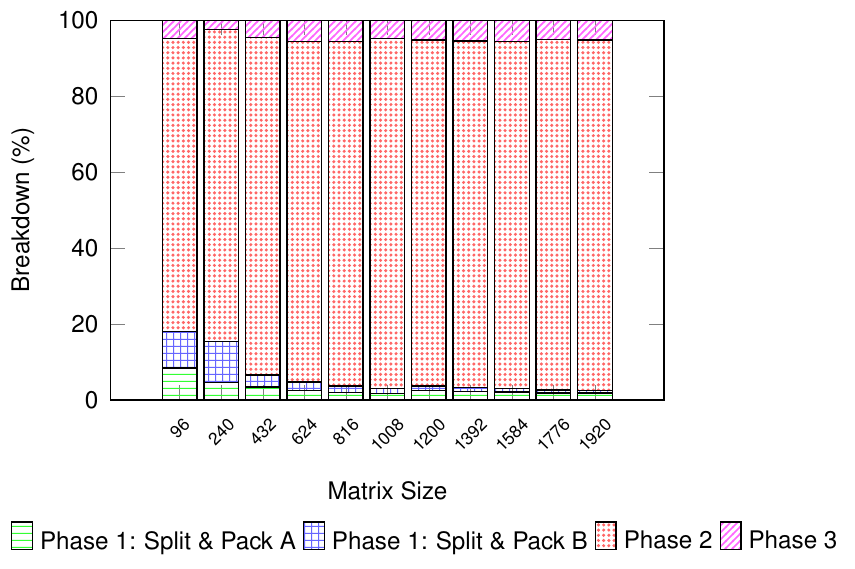}
\end{center}
\caption{Breakdown of the execution time of each phase for FP64x2 Cascaded \gemm.} 
\label{fig:breakdown}
\end{figure}

Our algorithm attains performance in line with  DGEMM efficiency. That is, we incur an overhead due to the cascading of the matrices into its splits (phase one), and the gluing together of matrices in the end (phase three), but it's only slight. The final algorithm does 10 DGEMM calls, and it has 10$ \times $ the flop count, but tends to run 10-13$ \times $ slower than DGEMM. \response{In comparison, OzBLAS, for most of the matrices we tested, performs 21 \gemm s in the fastmode but tends to run approximately 27$\times$ slower than DGEMM. The OzBLAS implementation is not only more expensive due to the number of \gemm s performance, but also ramps up to high performance slowly because calling separate \gemm s increases the incurred memory traffic, which represents overhead, amplified by 
less efficient implementation of their phase 1 and phase 3.}
\response{In summary, OzBLAS is more than twice slower than our method, even when using the fast mode approach, and even worse when the fast mode approach is not enabled. It also scales worse, is less efficient, and ramps up to high performance more slowly.}

\response{We do not compare our performance with a FP64x2 and FP128 implementation.  The primary reasons are that we do not have a competitive x86 implementation of either. We found a few FP64x2 implementations that were actually triply-nested loops and had no register or cache-blocking. The situation with FP128 is even worse since it's based entirely on software emulation.} Moreover, any implementation would be subject to questions of whether it was the highest performing that can be implemented, regardless of the attained performance. \response{We instead circumvent this by directly comparing to DGEMM performance, whose implementation is already well studied and understood.}
\NoShow{What's more is that it has the same threading opportunities as BLIS's DGEMM and it fits into the current BLIS infrastructure. 
}

\NoShow{For our performance runs, we enabled the detection of cancellation discussed in Subsection~\ref{sec:detecting}, but not the overhead of actually correcting any detected problem cases. 
}

\subsection{Accuracy experiments}\label{sec:accuracy}

A question is how accuracy is affected by casting FP64x2 \gemm\ in terms of ten FP64 \gemm s.  We here address this empirically,
by investigating two types of problems: \gemm\ with matrices filled with uniformly distributed random data in a given range (or, sometimes widely varying ranges) and matrices constructed so that \gemm\ is ill-conditioned.  The experiments examine how the error experienced by our scheme compares to the error encountered by a true FP64x2 multiplication. We use FP128x2 (double-quad) for reference accuracy.

\subsubsection*{Uniformly random data experiments}

For our experiments, we ran a series of tests over problems with points initially (pseudo-random uniform generator) picked in the given range. Uniformly random data doesn't tend to spot all irregularities associated with range, which is why we also added ill-conditioned and wide-ranging experiments. Unless otherwise stated, we ran the uniformly random experiments on $[-1,1]$. We first picked a random FP64 number in the given range, and then added random bits in the mantissa past 53, up to 113, and used this as a FP128 number in the prescribed range. We then converted this FP128 number into non-overlapping double-double FP64x2 format. This typically means that even though our exponents are non-overlapping between the high and low part, that they don't have too wide of a gap (only up to 8 or so bits in the initial exponent.) However, this seemed good enough, since larger gaps would probably get lost in the noise of the experiment.

\subsubsection*{Wide-range data experiments}

For this set of experiments, instead of picking numbers in a range randomly, we picked the exponent ranges themselves randomly, and then picked uniformly random \response{mantissa} within those ranges. Using this method, it is possible to have some dot products cover ranges like $[2^{-30},2^{0}], [-2^{10},2^{3}]$, or any combination. For each row of matrix $A$ and each column of matrix $B$, we first randomly picked exponents between $10^{-60}$ to $10^{20}$, and next randomly assign a sign. This gives us a range $[x,y]$, from which we pick uniformly random numbers to fill in the the rows of matrix $A$ and columns of $B$. This results in having as many possible different dot products, with different ranges on the inputs, as possible.

\remove{ \subsubsection*{Wide data range experiments}
For this set of experiments, instead of picking numbers in a range randomly, we picked the exponent ranges themselves randomly, and then picked uniformly random mantissas within those ranges afterward. This way, it's possible to have some dot products cover ranges like $[2^{-30},2^{0}], [-2^{10},2^{3}]$, or any combination. To generate this data, we first picked random exponents given from $10^{-60}$ to $10^{20}$. We picked random signs next. Then, once $[x,y]$ is randomly determined, we find uniformly random data in those points. We do this independently for every row of $A$ and column of $B$, so in the end, we have as many possible different dot products, with different ranges on the inputs, as possible. Any method that treats all of $A$ or $B$ the same needs to be tested in this way, otherwise, one might not realize weaknesses in the method.}

\subsubsection*{Ill-conditioned experiments}

One would expect error to be worst when cancellation is encountered.  For this reason, since matrix multiplication in effect computes an entry of $C = A B $ by taking the dot product of a row of $ A $ with a column of $ B $, it pays to examine when a dot product of two non-zero vectors, $ x $ and $ y $, with reasonable length yields a relatively small result compared to their initial norms. \response{The condition number of a non-zero dot product is given by
\begin{equation}
    {\rm  condition} = \frac{\| x \|_2\| y\|_2}{| x^T y |}. 
    \label{eq:condnum}
\end{equation}}
This ill-conditioning happens when, for example, $ \| x \|_2 \approx \| y \|_2 \approx 1 $ and $ x^T y $ is small.  The smaller $ x^T y $, the more likely the computation will have cancellation errors.

An attempt to create matrices so that multiplication with them is ill-conditioned was proposed when testing XBLAS~\cite{xblasweb}. They wrote an excellent generator for generating two vectors whose dot product manages to hit major cancellation error, and therefore, to generate matrices for matrix multiplication, one could stuff generated vectors $x$ into rows of $A$ and generated vectors $y$ into columns of $B$. While effective, this approach leads to problems just on the diagonal of the result matrix, and meaningless (potentially very well-conditioned) data elsewhere. It seems unfortunate to generate $n^2$ data but \response{have only $n$ elements of them be useful for testing purposes.} We propose an approach that yields many more potentially ill-conditioned cases within each \gemm, thus leveraging the work to compute \gemm\ towards many more cases of interest. 

We create matrices that incorporate a large number of dot products that induce cancellation.  For now, assume that the matrices are all square ($ m = n = k $).  
We start by generating (in high precision) matrix $ A $ to have mutually orthonormal rows, by generating a random matrix, computing its FP128 QR factorization (using quad Householder transformations), and setting $ A $ equal to the resulting $ Q $.
Finally, we generate a matrix $  C $ by first choosing its elements to be small, with magnitude in the range $ (t, 10 t ) $, where $ t > 0 $ is some small positive tolerance. We randomly pick these values to be positive or negative (that is, each final element of $ C $ is either in $(t,10 t )$ or $(-10t, -t )$.) We also randomly pick precisely one location in every column of $ C $ to equal 1. 
Next, we set $ B = Q^T  C $ in quad-arithmetic.
    \response{Now,
each row, $ e_i^T A $,  of $ A = Q $ has norm equal to one and,  
if $ t $ is small, each column, $ B e_j $, of $ B $ equals 
approximately one (since $ B e_j = Q^T C e_j $ and multiplication by a unitary matrix preserves length).
    Thus, since the $ i,j $ element of $ C $ equals $ e_i^T C e_j = (e_i^T A) (B e_j) $ which, for most choices of $ i$ and $ j $, will have a magnitude bounded by $ 10 t $,  taking the product of $ A $ and $ B $ will incur mostly dot products with high condition numbers, if $ t $ is chosen to be small, triggering cancellation.
}

The above description works with square matrices and may be modified for non-square matrices. 
Since the accuracy of finding the worst component-wise error is independent of the matrix dimensions, we only examined results from multiplying square ill-conditioned tests. 
Of interest now is how elements in the result matrix lose accuracy as measured by    
the component-wise relative error.  

\begin{figure}[tb!]
\begin{center}
\begin{tabular}{@{\hspace{-0.15in}} p{0.52\textwidth}  p{0.52\textwidth}}
\includegraphics[width=0.52\textwidth]{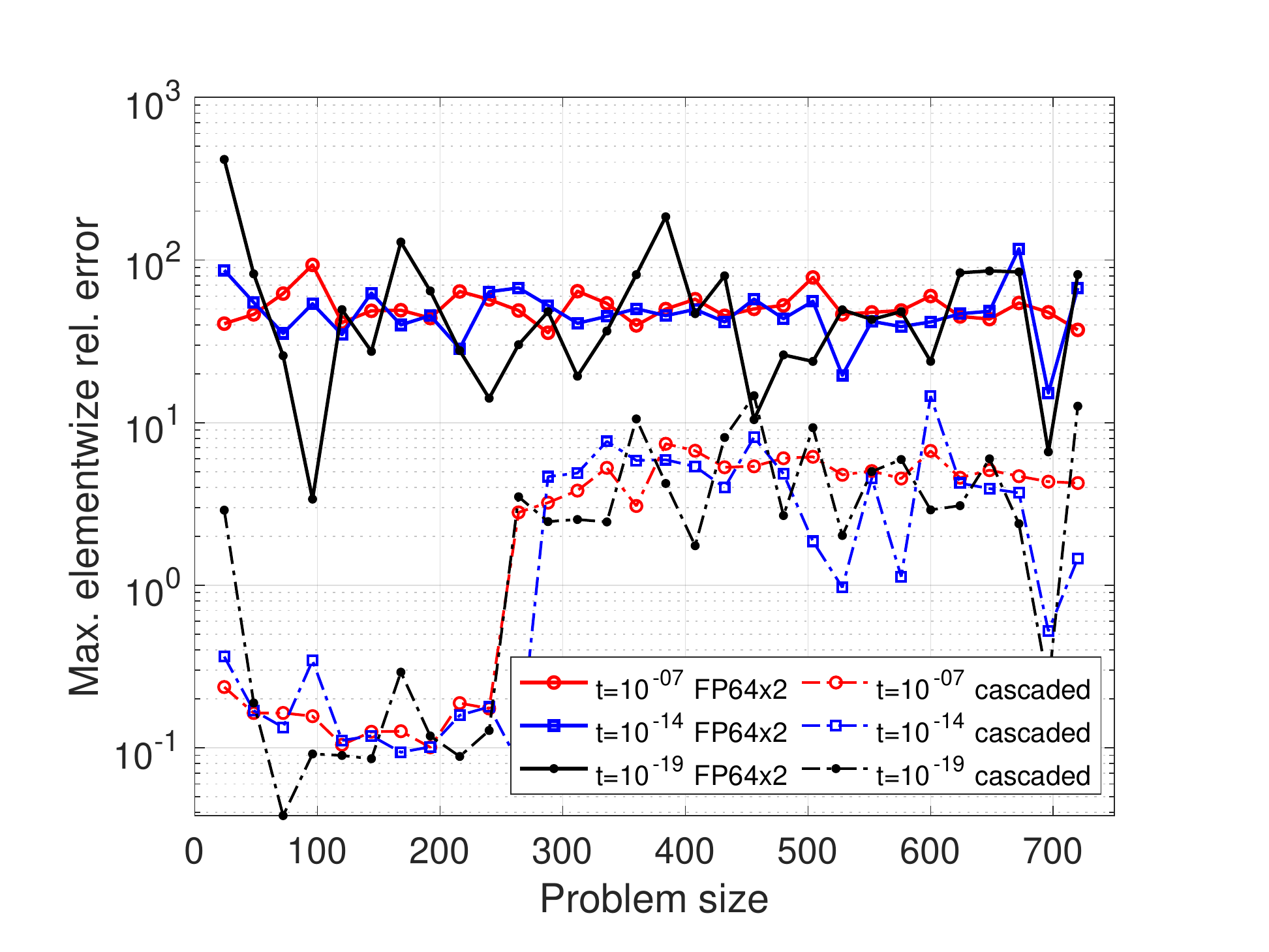}
&
\includegraphics[width=0.52\textwidth]{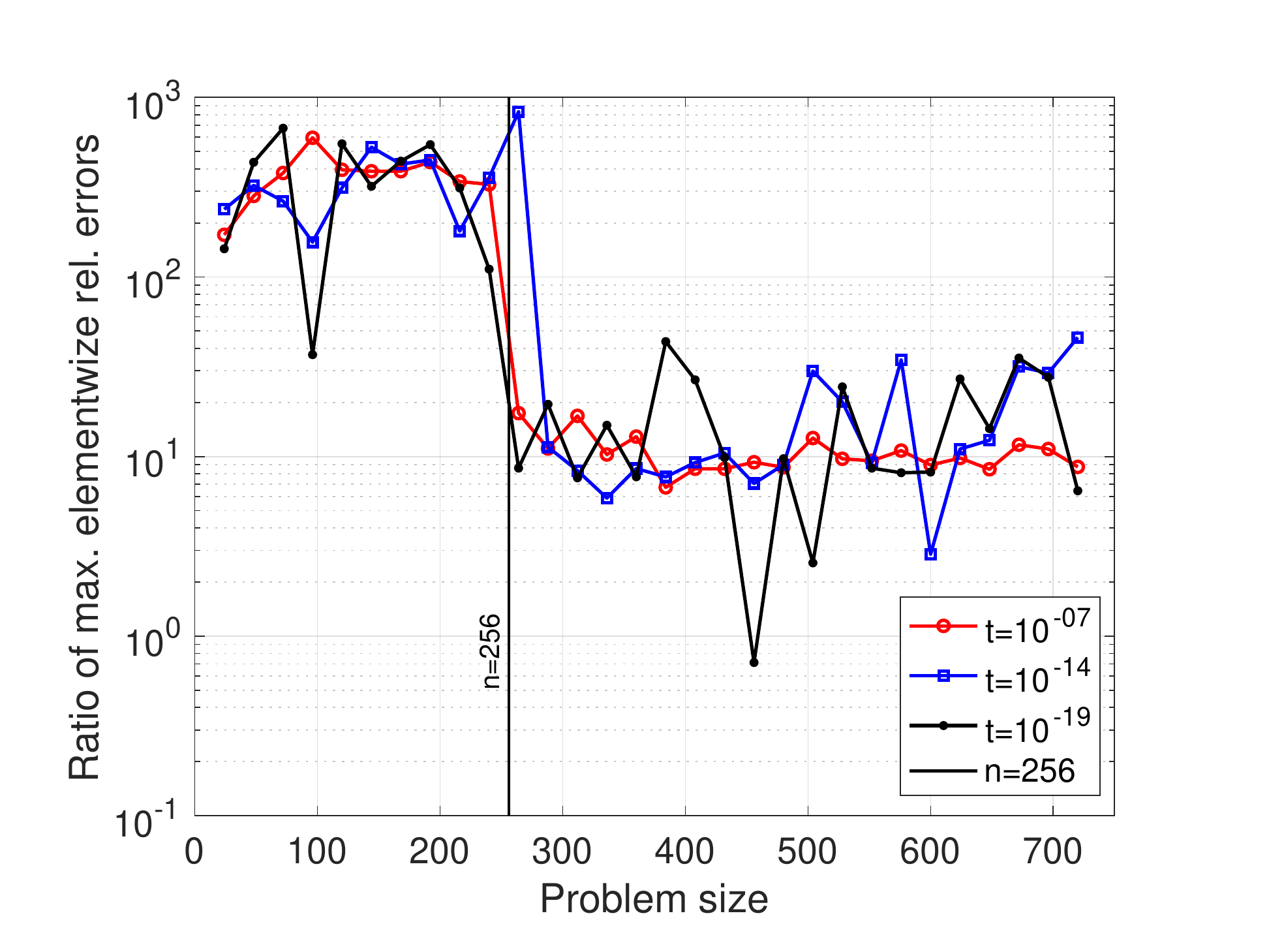}
\end{tabular}
\end{center}
\caption{Error encountered for problems constructed to be ill-conditioned. Left:  Maximum component-wise error for various choices of tolerance, for computation with FP64x2 arithmetic and with cascaded matrices. 
Right: Same data, presented as the ratio of error when computing with FP64x2 arithmetic and with cascaded multiplication.
\NoShow{\bf Fix ``data3''??}}

\label{fig:illcond}
\end{figure}

In our experiments, we focus on three choices for tolerance $ t $:
\begin{itemize}
    \item 
    $ t = 10^{-9} $, which can wipe out all accuracy when computing in single precision (FP32).
    \item 
    $ t = 10^{-14} $, which can wipe out all accuracy when computing in double precision (FP64).
    \item 
    $ t = 10^{-19} $, which can lead to significant loss of accuracy when computing in FP64x2 arithmetic, but will leave some accuracy.
\end{itemize}
For each of these choices, we ran experiments for a range of problem sizes, \response{reported} 
in Figure~\ref{fig:illcond} the average worst case  component-wise errors over many runs with different generated matrices.
The error is computed as the difference with the corresponding value in the result matrix when computed in
very high (FP128x2) arithmetic.
We note that by this measure, the accuracy of the cascaded multiply in these experiments is vastly superior to that of FP64x2.

In Figure~\ref{fig:illcond} (left), we report the maximal elementwise relative error when computing \gemm\ with a simple triple-nested loop in FP64x2 arithmetic and with our cascaded multiplication, for different choices of tolerance $ t $.  The same data is reported in Figure~\ref{fig:illcond} (right), which reports the ratio between the maximal element-wise relative errors encountered \response{by} the FP64x2 implementation and the cascaded multiplication.  
What we see is that our approach tends to be {\em more accurate} for the ill-conditioned experiments. 
We attribute this to the fact that the highest order bits are computed error-free regardless of the condition number.
It is an encouraging result that wide-ranging data seems to behave as well as the well-conditioned data: it means that our methods handle range issues well, in contrast to most algorithms that employ fixed-point. 

The drop in the ratio \response{of maximum element-wise relative errors at} size 256 can be expected: this is where, in the fourth loop around the micro-kernel, BLIS starts performing multiple rank-k updates. This means that each element in the result matrix is updated multiple times, incurring multiple conversions to FP64x2 storage.  The point is that once we rounded down to FP64x2 accuracy, the benefits of storing intermediate results in higher precision diminishes. 
\NoShow{
{\bf What are you trying to say here?}  That is, what one sees is not so much a loss of accuracy in the algorithm, but the loss of accuracy just involved in considering FP64x2.
}

A question is how accuracy is affected on an element-by-element basis.  To answer this, we examine the relative error in all entries of matrix that results when multiplying two $ 240 \times 240 $ matrices. 
In Figure~\ref{fig:illcond2} we report the ratio between the error encountered by the FP64x2 implementation and the cascaded matrix multiplication for different cases: 
\begin{itemize}
\item
Uniformly random data in $[-1,1]$.
\item
\response{Wide-ranging} data. 
\NoShow{\devangi{Moved the paragraph explaining how these experiments earlier in the section.}
}
\remove{\robert{Instead of picking numbers in a range randomly, we picked the exponent ranges themselves randomly, and then picked uniformly random mantissas within those ranges afterward. This way, it's possible to have some dot products cover ranges like $[2^{-30},2^{0}], [-2^{10},2^{3}]$, or any combination. We first picked random exponents given from $10^{-60}$ to $10^{20}$. We picked random signs next. Then, once $[a,b]$ is randomly determined, we find uniformly random data in those points. We do this independently for every row of A and column of B, so in the end, we have as many possible different dot products, with different ranges on the inputs, as possible. Any method that treats all of $A$ or $B$ the same needs to be tested in this way, otherwise, one might not realize weaknesses in the method.}}
\item
Very badly conditioned: $ t = 10^{-19} $.

\end{itemize}
Except for a very small number of elements, the cascaded multiplication yields the same or better accuracy.

In Figure~\ref{fig:illcond4}, we again present the relative error in all elements of the result matrix, but this time sort the results by the error occurred by the cascaded multiplication.
In other words, for a given element (value along the x-axis), the error incurred by the cascaded multiplication is reported in red and the error incurred by the FP64x2 implementation in black.  This shows how for well-conditioned problem\response{s} the cascaded matrix multiplication is essentially uniformly better while for the ill-conditioned and wide-ranging inputs only in a few cases the FP64x2 implementation is better.

\NoShow{
{\bf Comment:  Shouldn't the FP64x2 also be implemented using the BLIS framework, in order to mimic a more similar order of updates to elements?}
{\bf Response: While I'd love to know how fast FP64x2 can be made possible, it doesn't seem that this strategy would ever be faster than the 10-GEMM method. So the ONLY purpose for this code would be for this paper. In which case, does this paper actually need this experiment to be publishable?}
}

\begin{figure}[tb!]
\begin{center}
\includegraphics[width=0.52\textwidth]{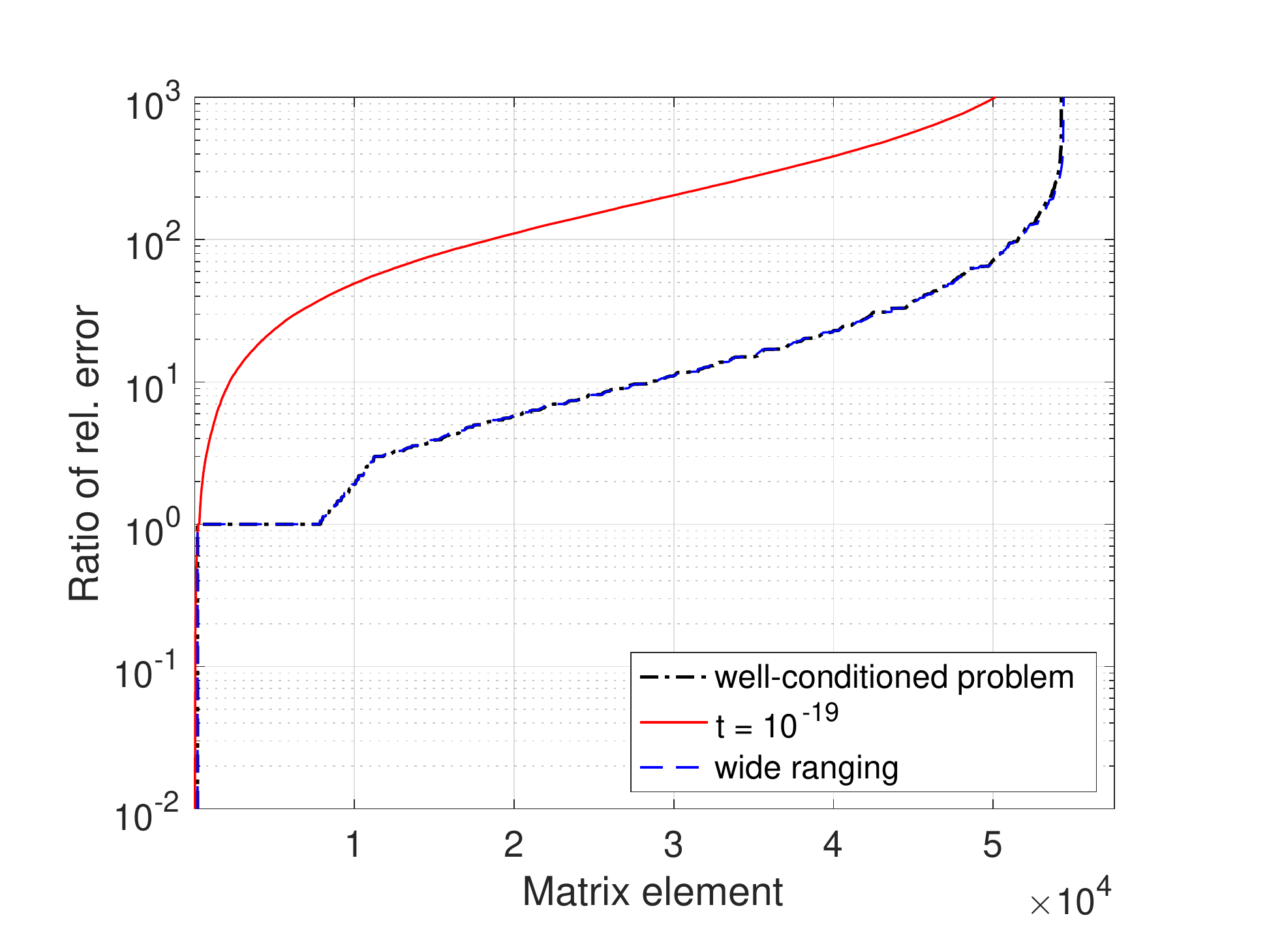}
\label{fig:illcond3}
\end{center}

\caption{
For this graph, the ratios (Cascaded divided by FP64x2) of relative accuracies of all elements of  matrix $ C = A B $ are reported, where all matrices involved are $ 240 \times 240 $.
The elements are sorted by the value of the ratio, independently for each curve.  Values (towards the left of the graph) that are less than $ 10^0 $ mean that FP64x2 is more accurate. The well-conditioned \response{(uniformly-distributed data where the input data lies within the range of [-1,1])} and wide ranging values experiments are virtually indistinguishable in the graph. 
}

\label{fig:illcond2}
\end{figure}

\begin{figure}[tb!]
\begin{center}
\begin{tabular}{@{\hspace{-0.15in}} p{0.52\textwidth}  p{0.52\textwidth}}
\includegraphics[width=0.52\textwidth]{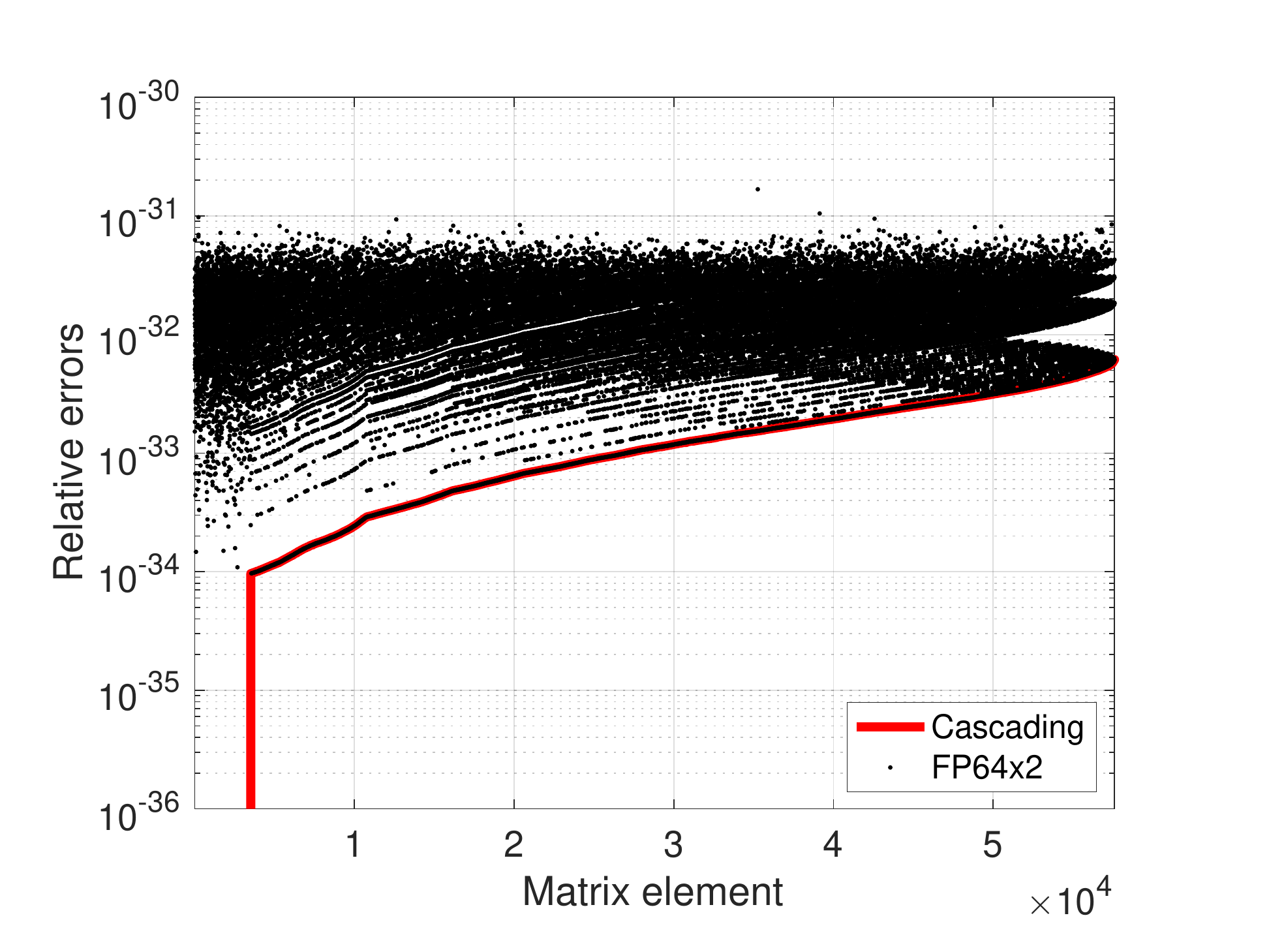}
&
\includegraphics[width=0.52\textwidth]{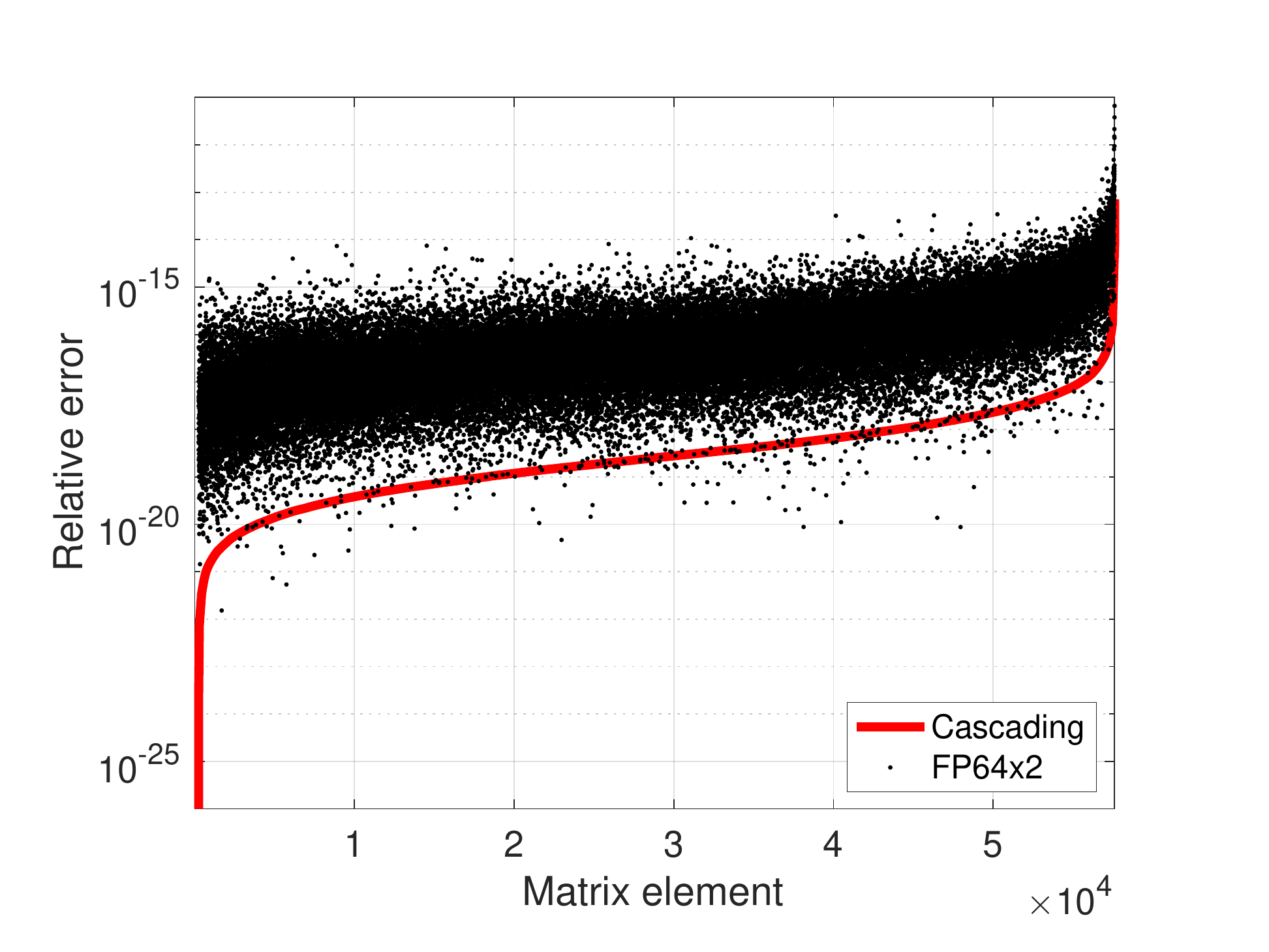}
\end{tabular}
\end{center}
\caption{
The relative error of all elements of  matrix $ C = A B $ are reported, where all matrices involved are $ 240 \times 240 $.
The elements are sorted by the relative error incurred by the cascading matrix multiplication.  The corresponding error incurred by FP64x2 is also reported.  Left: well-conditioned experiments \response{(uniformly-distributed data where the input data lies within the range of [-1,1])} .  Right: experiments with conditioning in the $ \response{10^{-19}} $ range.  Be aware of the different scale reported along the y-axis.
}

\label{fig:illcond4}
\end{figure}

\NoShow{

\subsection{Discussion}

The results in Figure~\ref{fig:illcond} may be somewhat surprising, because bits may be dropped in either $A$ or $B$ during the quantization phase. In fact, it's easy enough to break this method by simply choosing enough points with super small exponents. If the exponents are small enough, it's possible our algorithm may have no better than FP64 accuracy because computations in the first few bins may yield zeros. 

Fortunately, we can detect the impact of this on our problem. Namely, we can search the top bins for zeros. When they have zeros, that means we computed no significant bits in that bin. If that's also true for enough bins, the final accuracy will be close to FP64. But since this is detectable, it's not so alarming. We were surprised to have so many ill-conditioned matrices where this was not a common occurrence. 


}

\NoShow{
\section{Partial Accuracy Analysis}

The results in Figure~\ref{fig:illcond} may be somewhat surprising, because not only is the condition number huge, but bits may be dropped in either $A$ or $B$ during the Phase 1 quantization step. In fact, it's easy enough to break our method by simply choosing enough points with super small exponents. If the exponents are small enough, it's possible our algorithm may have no better than FP64 accuracy because computations in the first few bins may yield zeros. 

Dot products tend to have the greatest relative error accuracy problems due to cancellation errors, as opposed to absorption errors. Dot product condition numbers tend to be greater with dot products that are near orthogonal, that is, close to zero where the data potentially suffers from greater cancellation errors.

The naive algorithm has several steps:
\begin{itemize}
    \item Represent $A$ and $B$ as cascaded-matrices of lower precision types added together with powers of two scalars.
    
    \item Do the higher-order initial bins (0-2) error-free. This represents 6 multiplies and the majority of the work.
    
    \item Do the final lower-order bins (3-6) as a combination in fewer multiplies. This represents 4 multiplies and the minority of the work.
    
    \item Combine the terms back together. This represents $O(n^2)$ work and is lower-order but some of the accumulation may require a higher precision arithmetic (like double-double or quick two-sum.) However, the previous final bins can simply be all done within four DGEMM calls.
\end{itemize}

Notice that every step (except for the higher order bins) are done with errors. Notice also that standard errors associated with floating-point must also apply to the final steps. The newest step that has no other comparison point is the first step- the recasting of the data in term of cascading matrices. Any bits lost during that step can be a significant reason for this algorithm to fail {\bf where the standard algorithm wouldn't}. In fact, since the higher-order bits are computed error-free, the only time we can ever unduly suffer compared to standard floating-point is when significant cancellation errors cause the higher-order bins to be computing too many zeros. If they are no zero bits in the initial terms, that means that we have $22+21+21=64$ bits computed completely error-free, which is something not even quad-precision could ever safely promise.
For this reason, this section is primarily concerned with quantifying the error in the initial transformation into cascaded matrices.

The first observation is that any floating point number can be represented as a sum of smaller precision floating point numbers times scalars. This is obvious because every non-denormal, non-zero floating point number is precisely the sum of powers of two already. If we allow the scalars to also be powers of two, this can adapt for wider exponent ranges.

Consider some quick examples. Suppose we wish to represent a $FP32$ number as a sum of three BF16 numbers. In this trivial case, the powers of two scalars aren't even needed since both FP32 and BF16 have 8-bit exponents and the same exponent range, and so we can write $A=A1+A2+A3$. But suppose we wished to represent a FP64 number as a sum of BF16 numbers. Now the exponent range of the initial number is larger, but all we need are 6 BF16 numbers to cover the mantissa, and the final result is simply: $A=s1*A1 + s2*A2 + s3*A3 + s4*A4 + s5*A5 + s6*A6$. Now even if $A$ is outside the exponent range of FP32, it doesn't matter, the $s_i$ powers of two scalars make the final equation work.

Our goal then is find the error associated with an arbitrary number of bits expressed in the cascading equation, for numbers with a given range $[a,b]$.

Notice that this is slightly more general than we need, but that we can use this same accuracy equation for all cases of cascading matrices, including those beyond the scope of this initial report.

Suppose we have $Q$ bins, which covers $R$ bits. We don't want to assume each bin contains the same number of bits, as we have already shown, we are interested in bin 0 sometimes containing 22 bits, bins 1 and 2 sometimes containing 21 bits, and bin 3 sometimes containing 53 bits. But again, the more general result may be more generally useful. In our case, $Q=4$ and $R=117$.

When $Q=4$, we are taking any given term and expressing it as a sum of 4 smaller-order terms with powers of two scalar multiplies. So $A$ gets expressed as $s1*A1 + s2*A2 + s3*A3 + s4*A4$ and the error associated with this conversion is simply $| A - s1*A1 - s2*A2 - s3*A3 - s4*A4 |$.

In order to quantify this error, its necessary to consider how to break things up into the cascading equations. Apply a different algorithm, or use different roundings, and the final error will change. 

For the purposes of this paper, it's not necessary to nail down the technique of breaking things into cascading equations, however. All that matters is that an accurate method is chosen, but which method isn't as relevant. General fixed-point quantization usually involves an invertible linear transformation of the data. As long as the transformation is invertible, one can convert the data, work on the converted data, and then transform the final result back into the original domain as the final step. A general formula for quantizing a floating point number known to be in the range of $[a,b]$ is $f(x) = \omega * x + \gamma.$ Here, we assume that $f$ takes a floating-point number as input, and returns a fixed-point signed integer of size $R$ bits. This means that the range of $f$ should be $[-2^{(R-1)},2^{(R-1)}-1]$ and although it's not necessary, we may even try to solve for $f$ when $f(a)=-2^{(R-1)}$ and $f(b)=2^{(R-1)}-1$. But this is only one solution and only one method to create the cascading equations.

We tend to use simplifying assumptions when possible. Our first assumption is that $a < 0$ and $b > 0$. However, this is not laziness, it is simply because if all our numbers were positive or negative, the cancellation error would be significantly reduced, and the problem itself would become easier. So, without loss of generality, we handle cases where the input has both positive and negative elements.

The next simplifying assumption is that we might as well consider only one term instead of both $a$ and $b$ for the range. That is, find a number $c = max(|a|,|b|)$, and say that our range is not merely $[a,b]$ but $[-|c|,|c|]$. In fact, let's further let $c$ itself be the smallest positive power of two such that $a > -c$ and $b < c$ so that $c>0$ is still a power of two and our final range is still $[-c,c]$.

The next simplifying assumption is that overall GEMM accuracy is ultimately based on being most accurate around the larger terms. If we're going to drop bits, and we are, we might as well try to bias our accuracy around the larger numbers at perhaps the expense of the smaller ones. This logic does not hold everywhere and on every data set, but short of knowing anything else about the problem, we might as well be the most accurate around the bigger numbers.

So, a simple mapping could be to take any floating-point number in our input, divide it by $c$. This makes it less than 1, so it's exponent becomes some number between $0$ and $-R+1$. In fact, if $c > |a|$ and $c > |b|$, the final exponent can easily be between $-1$ and $-R$. Clearly, this is not the only way to quantize. One can easily go back to the original $f$ equation above, and find any number of other more complex transformations. But this transformation is particularly useful because the final exponent is so clearly in a given range, and that the mantissa bits are completely unchanged by the transformation. The only calculation is with simply powers of two which can be done as error-free integer adds, which means the final cascading equation may have errors in it, but nothing during the calculation of that equation is subject to any errors as well.

Now, consider any element $x$ in $[a,b]$. We choose $c$ so that $x/c$ is the sum of powers of two from $-1$ to $-R$. Now the error will represent all dropped mantissa bits from $2^{(-R-1)}$ on. The worse case for this is clearly a case where the number we were trying to represent had mantissa bits in slots $(-R-1)$, $(-R-2)$, $(-R-3)$, etc.. We know that number is bounded above by $2^{(-R)}$. Therefore, we know the conversion error is at most $2^{(-R)}$ in the worst possible case. But also notice that this is an absolute error. The problem with fixed-point is that if $x/c < 2^{-10}$, then we gain nothing from realizing the first 10 bits are being stored as zero. The first non-trivial one-bit might occur in position -10 or smaller. In fact, if $x/c < 2^{(-R)}$, we may end up treating $x$ as identically zero and dropping ALL bits from $x$, which is clearly a cause for concern. The best possible case is that $x/c$ has no non-zero bits past $-R$, in which case the conversion would be EXACT.

So lets suppose we convert $A$ into $A^{'}$ and $B$ into $B^{'}$ with error matrices $E_A$ and $E_B$ so that $A=A^{'} + E_A$ and $B=B^{'} + E_B$. We've just shown that that maximal element of $E_A$ and $E_B$ will be no higher than $2^{(-R)}$. If we assume that $A$ and $B$ already had been scaled by $c$, this is no more complex what we already assumed was done for scaling in solving $C = D * A * B * E$ for appropriately formed powers of two diagonal matrices $D$ and $E$. 

Now if $A$ and $B$ have been pre-scaled like this, notice that the mantissa bits are unchanged, and the scaling can therefore be done error-free. We now have a single error term for $A+E_A$ and $B+E_B$ given by $E_A*B + A*E_B + E_A*E_B$.

The absolute error of this matrix is bounded above by $||E_A|| * || B|| + ||A|| * ||E_B|| + ||E_A|| * ||E_B||$. Given that the maximal element in $E_A$ and $E_B$ is now known to be bounded above at $2^{(-R)}$, this gives us some idea on the error-matrix. In particular, now that $A$ and $B$ are pre-scaled, we know that their maximal elements are around one. In particular, the entire equation here can be bounded above by $2^{(-R+1)}$. 

Intuitively, what this is saying is that by quantizing by the maximal element, the final relative error is very small for terms close to the maximal element, but since the final absolute error tends to be fixed, the final relative error around the really small elements can be arbitrarily bad (even though the absolute errors stay in the same small range.) Unless we see significant cancellation, the smaller elements shouldn't play a big role. 
}

\section{Conclusion and future opportunities}

We reiterate that our focus on casting FP64x2 in terms of FP64 is merely a concrete illustration of a general principle.  We choose to target bootstrapping FP64x2 accuracy from a high-performance FP64 implementation because BLIS as of the writing of this paper only supports double (FP64) and single (FP32) precision.
By leveraging BLIS, we have demonstrated that high performance and accuracy can be achieved.

We now discuss a few extensions of the presented ideas that can be pursued in the future.
 We believe that these constitute several theses  worth of work and can hence not be expected to be addressed in this initial study.

\subsection{Safely dropping lower order bins}

\remove{{\bf Greg:  Can you tie this to something that is in the paper?  Or can we say something really vague here?}

\robert{I don't get this...}

Consider the case where matrix $A$ and $B$ is split into $y$ chunks, where each chunk holds $c$ bits. This means $cy$ bits in the input matrices are tracked. To compute $C=AB + C$, with at least $cy$ bits, using the proposed cascading \gemm\ methodology, we must compute at most $y^2$ multiplies and at least $\frac{y(y+1)}{2}$ multiplies.  While generalizing the proposed method, this helps  determine how many lower precision \gemm s need to be performed to compute the higher precision output. In the case study, presented in this paper, this equation also indicates the number of error-free multiplies. 

When doing any integer quantization algorithm, usually the $\frac{y(y+1)}{2}$ formula suffices. To follow up with FP64 arithmetic, we need an additional $y$ multiplies.
}
Generalize Equation (\ref{eq:dot}) where each vector is split into $t$ splits and each split has a similar bit-width.  Notice that {\em if} bin~0 has few or no leading zeroes, {\em then} any contributions from lower order bins (in Equation (\ref{eq:dot}) bins~4--6) may not affect the final stored result.  This observation illustrates that under the above conditions not all terms encountered in Equation~(\ref{eq:dot}) or Equation~(\ref{eqn:Gemm16}) may need to be computed.  
Roughly, if the highest order bin has few leading zero bits, only $ t ( t+1)/2 $ terms may need to be computed.  If the highest order bin has many zeroes (or is entirely zero) or the splits are of completely different bit widths then additional terms may be required. 

However, in our case where the first split might be 22 bits and the last 53 bits, and the lower-order terms are necessary. 
Our proposed methods reduces the number of \gemm s by combining the lower-order splits. 
In contrast, if all the splits had been similar, around 21 bits in size, such as done in the OzBLAS, but constrained $t$ to be 4, then we could have done 10 DGEMMs just by dropping the lower order terms also drastically reducing the accuracy of the result. Our method of combining the last few terms is superior but keeps us at 10 DGEMMs. 

\remove{In the event that all the bins hold an equal number of bits, it is useful to quantize how many matrix multiplies occur. As mentioned in section~\ref{sec:generalizing}, if there are $Y$ bins that are all equal in size, one must do at most $Y^2$ multiplies, but at least $Y*(Y+1)/2$ multiplies. For BF16x3, we need 6 of the 9 potential multiplies to approach FP32 accuracy \cite{HenHeiTan2019}.

For us, the number of error-free multiplies also uses this same equation. So for four cascading splits like we use for FP64x2 approximation, three of the bins must be done error free, and that's 6 multiplies. The other four involve some of the $A3$ or $B3$ products. 

When doing any integer quantization algorithm, usually the $Y*(Y+1)/2$ formula suffices. To follow up with FP64 arithmetic, we need an additional $Y$ multiplies. So a Int8x7 algorithm that is only fixed point will need a minimum of 28 multiplies.}

\subsection{General casting of high precision in terms of low precision} 

\label{sec:more_generalizing}

Based on preliminary analysis, we briefly discuss other cases where the general principle of cascading of matrices and multiplication may be of value. While the cases listed here are not exhaustive, one can extended these ideas to other cases that may be applicable. 

\subsubsection{FP32 in terms of bfloat16}
\label{sec:fp32fp16}

Of particular interest these days is how to exploit various ``half-precision'' arithmetic (like bfloat16) that is starting to be supported in hardware~\cite{bfloat16}.  Often, such bfloat16 arithmetic is 8-32 times faster than single precision (FP32) arithmetic%
\footnote{Many times these operations can be memory-bound, and we do not see the full theoretical speed-up}.

The proposed technique can be modified to accommodate cascading FP32 \gemm\ in terms of bfloat16 \gemm. In this case, FP32 numbers can be cascaded in terms of three bfloat16 splits, with appropriately chosen $ D_0 $, $ D_1 $, and $ D_2 $ (as a function of $ k $), and the cascaded multiplication can be cast in as few as six  bfloat16 \gemm s~\cite{HenHeiTan2019}. The FP32 bit accumulation has a number of benefits:  the bins do not first need to be accumulated separately, which simplifies the implementation, reduces the amount of workspace needed and changes how registers and caches are leveraged.

\remove{Our techniques can be modified to accommodate cascading FP32 matrix multiplication in terms of FP16.  
This time, FP32 numbers can be cascaded in terms of three chunks, with appropriately chosen $ D_0 $, $ D_1 $, and $ D_2 $ (as a function of $ k $), and the cascaded multiplication can be cast in as few as \devangi{6}  FP16 \gemm s.
What is particularly interesting is that 
such hardware often supports computing in FP16 while accumulating in FP32.  This has a number of benefits:  the bins do not first need to be accumulated separately, which simplifies the implementation and changes how  registers and caches are leveraged.}

\subsubsection{FP64 in terms of FP32}
This case does not fit the cascading \gemm\ methodology well. The primary problem is that FP64 is supported in hardware on most current architectures.  A secondary problem is that the number of splits becomes unmanageable (we would need approximately five FP32 splits with $ k = 256$ in order to make the first few bins error-free) and hence the number of FP32 \gemm s becomes impractical to gain any speedup.

\remove{This case appears to not fit the cascading \gemm\ idea well.
The primary problem is that FP64 is supported in hardware on most current architectures.  A secondary problem is that the number of chunks becomes unmanageable (7 to 8 chunks with $ k = 256 $) and hence the number of FP32 \gemm s becomes impractical.  Hence, unless architecture design changes the ratio between the speeds of FP64 and FP32 arithmetic dramatically, there is no point in pursuing this.}

\subsubsection{FP64x3 in terms of FP64}
Just as this paper shows how to use a cascading GEMM to approximate FP64x2 accuracy, one can approximate FP64x3 \gemm, by cascading the FP64x3 matrix into 6 FP64 splits. 

\subsubsection{FP64 in terms of INT8}
Modern architectures often support INT8 arithmetic with 32-bit accumulation. This lends itself to the cascading \gemm\ methodology, since to cascade the matrix into its splits, the matrix is converted into fixed point numbers which can be stored in integers. 

To compute FP64 \gemm\ by cascading into INT8, we need at least seven INT8 splits to hold all the mantissa bits of an FP64 number (dropped bits can still occur in this method.) With seven splits, we need at least 21 multiplies to obtain the result. The 32-bit accumulation provides the same benefits of as mentioned in section~\ref{sec:fp32fp16}. A similar strategy has been proposed in~\cite{ootomo2024dgemm}.

\subsubsection{FP64x2 in terms of INT8}
To support FP64x2 \gemm\ by cascading it into INT8, we need approximately 14 to 15 INT8 splits and about 105 to 120 INT8 multiplies to obtain the result. On specific processors, INT8 may run faster than DGEMM. According to Figure~\ref{fig:perf}, we can obtain an FP64x2 \gemm\ result in approximately 12x more time than DGEMM. Therefore, on any processor, where the peak INT8 operations runs at least $120/12 = 10$ times faster than double precision floating point operations, we should see a net speedup using 15 cascading INT8 splits using the methodology developed here. In fact, on many processors, INT8 \gemm\ may run 30x faster than DGEMM, making an INT8 version of cascading matrices potentially three times faster than the results we have presented. 

The reason we do not present FP64x2 \gemm\  by cascading it into INT8 in this work is that we wanted to leverage existing kernels in BLIS, which meant basing an algorithm on DGEMM or SGEMM kernels. {However,} cascading matrices, in general, covers a wide range of problems not yet considered and further justifies the usefulness of this contribution. 

\remove{
{\bf I don't understand any of this.  If $ k = 256 $ then we need around $ \log_2(k) = 8 $ bits to account for accumulating...  that seems to completely ruin the opportunity to use an Int 8 number.}
\greg{ You are forgetting that Int8 usually has a 32-bit integer accumulator (so I added a new paragraph below so others won't stumble here- please let me know if you find it sufficient. So even when $k=256$, we can add up a bunch of Int8s and never worry about overflow. In fact, the range to prevent overflow is even wider than our high precision case. It's the opposite of your worry - this case is actually easier.}

With FP64, we concerned ourselves with getting exact answers in the first few bins, and blocking to keep our integers less than $2^{53}$ to be exact. We assume that Int8 has a wider-accumulator, say Int32. Now we just have to ensure that nothing in our most important first bins can overflow Int32. 

Everything is exactly as we have, just potentially with more matrices. We already convert to fixed-point on the higher precision kernels in general, so this change is quite simple to the code. If one leaves everything in fixed-point after the conversion, we could have the same algorithm, just with more matrices. For FP64x2, instead of having 4 matrices to cover the 117 bits, one could use 14 or 15 matrices to store the converted form and then a huge number of multiplies. Note that if one splits both $A$ and $B$ into $y$ parts, then one only needs the first $y*(y+1)/2$ multiplies with this type of trick. So if $y=14$, this is 105 multiplies, and $y=15$, this is 128 multiplies. Note also that on some processors, Int8 is many times faster than FP64, so this actually yields better results than reported in our existing performance charts. That is, in Figure~\ref{fig:perf}, we saw our algorithm basically running 12x slower than DGEMM. So, on any processor where Int8 runs $104/12 \approx 9$ times faster than FP64, the resulting code would end up even faster yet. In fact, if BLIS had supported Int8-GEMM when we started this work, we would simply have implemented this algorithm instead and demonstrated performance improvements on GPUs that manage Int8 running at over 9x faster than FP64. 
}

\subsubsection{Mixed precision}

As of this writing, BLIS  supports mixing domains and precision of the four data types (\texttt{s, d, c, z}) supported by the traditional BLAS~\cite{BLIS7} and is being extended to support new precisions (e.g. FP16 and FP64x2). Consider $ C := A B + C $.  Each of $ \{ A, B , C \} $ may have a different precision and/or the computation may be performed in a different precision than some or all of the operands. The current approach in BLIS is to convert all operands to the highest precision that is involved. Ignoring the copy overheads which are lower order in complexity, the time for the mixed-precision GEMM will be the same time as the higher precision GEMM. The presented techniques in this paper can be leveraged to, for example, selectively cascade operands to attain a desired precision while not being limited to the performance of the highest precision computations. For instance, if one can compute the necessary multiplies in lower precision faster than a single multiply in the higher precision (this is clearly data type and machine type dependent), then this new method of doing mixed precision will be more efficient than the current state-of-the-art inside BLIS.  
\remove{
is being extended to support new precisions (e.g.,  FP16 and FP64x2) and mixed precision computation.  Consider $ C := A B $.  Each of $ \{ A, B , C \} $ may have a different precision and/or the computation may be performed in a different precision than some or all of the operands.
The current approach is to convert all operands to the highest precision that is involved.

The presented techniques can be leverage to, for example, selectively cascade operands to attain a desired precision.}

\NoShow{
\subsection{\gemm-like operations}
}

\NoShow{
\subsection{Analyzing numerical aspects of cascading matrix multiplication}
}

\subsection{Targeting GPUs}

As mentioned in Section~\ref{sec:BLISCascadingGemm},
the techniques employed by the CPU implementation of cascading \gemm\ are similar to those used to implement a high-performance Strassen's algorithm.  In~\cite{gpustrassen}, it was shown how there is a parallel between the BLIS \response{implementation} of \gemm\ on CPUs and the CUTLASS implementation of \gemm\ on Nvidia GPUs~\cite{cutlassweb}.  This suggests that the approach may translate naturally to such GPUs.

\NoShow{\subsection{Numerical analysis}

\label{sec:moreNA}

{\bf I suggest we take this whole section out. While what we say still holds, I suggest not drawing attention to it...}  In Section~\ref{sec:analysis}, we analyzed the impact of cascading matrices on the accuracy of a matrix multiplication.
We discussed the issue and give theoretical insights, but did not solve all issues. While we have outlined a way to detect when we are suffering from severe cancellation error at almost no cost, the  question that still remains is whether there is a way to know for sure that a result is more or less accurate than FP64x2. However, the fact that we have insight into cancellation error is already more than can be said about most other \gemm\ implementations.
}

\remove{\subsection*{\devangi{\sout{Detecting cancellation}}}
\label{sec:moreCancellation}
We can remove since we discuss this earlier. Maybe not. I need to go back and read the sections we have earlier to see if we can get rid of this.} 

\subsection{Scaling/balancing methods}

Mathematically,
$ \widehat A \widehat B = 
\widehat A F F^{-1} \widehat B $, where $ F $ is nonsingular.  If $ F $ is diagonal, then this means that columns of $ \widehat A $ are scaled by the corresponding diagonal element of $ F $ and rows of $ \widehat B $ by the inverse of the corresponding diagonal element of $ F $.
Now, if a diagonal element of $ F $ is chosen to be small enough, the contribution of the corresponding column of $ \widehat A $ in a cascaded matrix multiplication can be reduced to FP64 accuracy, since the entries can become small relative to other entries in the same row of $ \widehat A $.  A similar argument can be made for scaling a row of $ \widehat B $.
In other words, how columns of $ \widehat A $ and/or rows of $ \widehat B $ are scaled matters.

This brings up the question \response{of} whether a diagonal matrix $ F $ can be chosen to improve the accuracy of the cascaded matrix multiplication.  We leave answering this to future research.

\NoShow{
\subsection{Converting matrices into chunks}
}

\NoShow{
\subsection{Exploring hardware support for the fixed precision computations}
}

\NoShow{
\subsection{Analyzing data reuse}
    Should the change in reuse of data change the algorithm?
}

\NoShow{
\subsection{Mixing operand precision within cascaded matrix multiplication}
}

\subsection{Analytical model of performance}

An open question remains how to create an analytical model that estimates the performance of the proposed cascading \gemm\ techniques. Such a model would guide the choices of when, in the high performance \gemm\ algorithm, Phase 1 and Phase 3 can be performed to maximize the data reuse.

\remove{\subsection{Hardware support for cascading}
\greg{I'm not sure we still need this section. Yes, I can talk about how neat it'd be to do these splits with a hardware instruction, but I suspect Intel would rather I submit a HW patent on the idea and NOT publish it. I'm not sure it has value in this paper to be honest. Can we discuss next time?}}

\subsection{Supporting other level-3 BLAS-like functionality}
All Level-3 BLAS algorithms can be written in terms of GEMM \cite{LAWN107}. One naive solution then is to take a fast cascading GEMM implementation and re-use it as much as possible during any BLAS-3 routine. But also note that Phase 1 involves a lot of conversion, so integrating Phase 1 into specific calls of say TRSM can make the routine run even faster yet.

\section*{Acknowledgements}

We thank members of the Science of High-Performance Computing (SHPC) group for their encouragement and feedback.  This research was sponsored in part by the National Science Foundation (Award CSSI-2003921).

{\em Any opinions, findings and conclusions or recommendations expressed in this material are those of the author(s) and do not necessarily reflect the views of the National Science Foundation (NSF).}

\bibliographystyle{plain}
\bibliography{biblio}

\begin{thebibliography}{10}

\bibitem{Bailey}
David~H. Bailey.
\newblock A {F}ortran-90 double-double precision library.
\newblock \url{ http://crd.libl.gov/~dhbailey/mpdist}.

\bibitem{Bailey88}
David~H. Bailey.
\newblock Extra high speed matrix multiplication on the {C}ray-2.
\newblock {\em SIAM J. Sci. Comput.}, 8(3):603--607, 1988.

\bibitem{Ba95}
David~H. Bailey.
\newblock A {F}ortran-90 based multiprecision system.
\newblock {\em ACM Trans. Math. Softw.}, 21:379--387, 1995.

\bibitem{Bientinesi:2011:GMS:2078718.2078728}
Paolo Bientinesi and Robert~A. {v}an~{d}e {G}eijn.
\newblock Goal-oriented and modular stability analysis.
\newblock {\em SIAM J. Matrix Anal. Appl.}, 32(1):286--308, March 2011.

\bibitem{bfloat16}
Intel Corporation.
\newblock {BFLOAT16}--hardware numerics definition.
\newblock White paper 338302-001US, Nov. 2018.

\bibitem{cutlassweb}
{CUTLASS}.
\newblock {CUTLASS:} {C}uda templates for linear algebra subroutines (v0.1.0).
\newblock \url{https://github.com/NVIDIA/cutlass}, 2018.

\bibitem{Dekker1971}
T.~J. Dekker.
\newblock A floating-point technique for extending the available precision.
\newblock {\em Numer. Math.}, 18(3):224–242, June 1971.

\bibitem{DeHi03}
James Demmel and Yozo Hida.
\newblock Accurate and efficient floating point summation.
\newblock {\em SIAM Journal on Scientific Computing}, 25(4):1214--1248, 2004.

\bibitem{DeHiLiRi08}
James Demmel, Yozo Hida, E.~Jason Riedy, and Xiaoye~S. Li.
\newblock Extra-precise iterative refinement for overdetermined least squares
  problems.
\newblock {\em ACM Trans. Math. Softw.}, 35(4), Feb. 2009.

\bibitem{BLAS3}
Jack~J. Dongarra, Jeremy Du~Croz, Sven Hammarling, and Iain Duff.
\newblock A set of level 3 basic linear algebra subprograms.
\newblock {\em ACM Trans. Math. Soft.}, 16(1):1--17, March 1990.

\bibitem{FaHiLoMaMi22}
Massimiliano Fasi, Nicholas~J. Higham, Florent Lopez, Theo Mary, and Mantas
  Mikaitis.
\newblock Matrix multiplication in multiword arithmetic: Error analysis and
  application to {GPU} tensor cores.
\newblock {\em SIAM J. Sci. Comput.}, 45(1):C1--C19, 2023.

\bibitem{Goto1}
Kazushige Goto and Robert~A. van~de Geijn.
\newblock Anatomy of high-performance matrix multiplication.
\newblock {\em ACM Trans. Math. Soft.}, 34(3: Article 12, 25 pages), May 2008.

\bibitem{GMH:92}
Greg~M. Henry.
\newblock {BLAS} based on block data structures.
\newblock Theory Center Technical Report CTC92TR89, Cornell University, Feb.
  1992.

\bibitem{HenHeiTan2019}
Greg~M. Henry, Ping Tak~Peter Tang, and Alexander Heinecke.
\newblock Leveraging the bfloat16 artificial intelligence datatype for
  higher-precision computations.
\newblock In {\em 2019 IEEE 26th Symposium on Computer Arithmetic (ARITH)},
  pages 69--76, 2019.

\bibitem{Higham:2002:ASN}
Nicholas~J. Higham.
\newblock {\em Accuracy and Stability of Numerical Algorithms}.
\newblock Society for Industrial and Applied Mathematics, Philadelphia, PA,
  USA, second edition, 2002.

\bibitem{7967156}
Jianyu Huang, Leslie Rice, Devin~A. Matthews, and Robert~A. {v}an {de}~{G}eijn.
\newblock Generating families of practical fast matrix multiplication
  algorithms.
\newblock In {\em 2017 IEEE International Parallel and Distributed Processing
  Symposium (IPDPS)}, pages 656--667, May 2017.

\bibitem{Huang:2016:SAR:3014904.3014983}
Jianyu Huang, Tyler~M. Smith, Greg~M. Henry, and Robert~A. van~de Geijn.
\newblock Strassen's algorithm reloaded.
\newblock In {\em Proceedings of the International Conference for High
  Performance Computing, Networking, Storage and Analysis}, SC '16, pages
  59:1--59:12, Piscataway, NJ, USA, 2016. IEEE Press.

\bibitem{gpustrassen}
Jianyu Huang, Chenhan~D. Yu, and Robert A. van~de Geijn.
\newblock Strassen’s algorithm reloaded on {GPU}s.
\newblock {\em ACM Trans. Math. Softw.}, 46(1), March 2020.

\bibitem{LAWN107}
Bo~K\r{a}gstr\"{o}m, Per Ling, and Charles van Loan.
\newblock {GEMM}-based level 3 {BLAS}: High-performance model, implementations
  and performance evaluation benchmark.
\newblock LAPACK Working Note \#107 CS-95-315, Univ. of Tennessee, Nov. 1995.

\bibitem{LiDeBaHeHiIsKaMaThTuYo02}
Xiaoye~S. Li, James Demmel, David~H. Bailey, Greg~M. Henry, Yozo Hida, Jimmy
  Iskandar, William Kahan, Suh~Y. Kang, Anil Kapur, Michael~C. Martin,
  Brandon~J. Thompson, Teresa Tung, and Daniel~J. Yoo.
\newblock Design, implementation and testing of extended and mixed precision
  {BLAS}.
\newblock {\em ACM Trans. Math. Softw.}, 28(2):152–205, June 2002.

\bibitem{BLIS4}
Tze~Meng Low, Francisco~D. Igual, Tyler~M. Smith, and Enrique~S.
  Quintana-Ort\'{\i}.
\newblock Analytical modeling is enough for high-performance {BLIS}.
\newblock {\em ACM Trans. Math. Softw.}, 43(2):12:1--12:18, August 2016.

\bibitem{tblis_sisc}
Devin~A. Matthews.
\newblock High-performance tensor contraction without transposition.
\newblock {\em SIAM J. Sci. Comput.}, 40(1):C1–C24, January 2018.

\bibitem{OzBLAS}
Daichi Mukunoki, Takeshi Ogita, and Katsuhisa Ozaki.
\newblock {R}eproducible {BLAS} {R}outines with {T}unable {A}ccuracy {U}sing
  {O}zaki {S}cheme for {M}any-{C}ore {A}rchitectures.
\newblock In Roman Wyrzykowski, Ewa Deelman, Jack Dongarra, and Konrad
  Karczewski, editors, {\em Parallel Processing and Applied Mathematics}, pages
  516--527, Cham, 2020. Springer International Publishing.

\bibitem{OzakiScheme}
Daichi Mukunoki, Takeshi Ogita, Katsuhisa Ozaki, and Toshiyuki Imamura.
\newblock Accurate {M}atrix {M}ultiplication on {B}inary128 {F}ormat
  {A}ccelerated by {O}zaki {S}cheme.
\newblock {\em 50th International Conference on Parallel Processing (ICPP 21)},
  2021.

\bibitem{MukOzaOgiIma20}
Daichi Mukunoki, Katsuhisa Ozaki, Takeshi Ogita, and Toshiyuki Imamura.
\newblock {DGEMM} using tensor cores, and its accurate and reproducible
  versions.
\newblock In {\em High Performance Computing: 35th International Conference,
  ISC High Performance}, pages 230--248, June 2020.

\bibitem{OgRuOi05}
Takeshi Ogita, Siegfried~M. Rump, and Shin'ichi Oishi.
\newblock Accurate sum and dot product.
\newblock {\em SIAM J. Sci. Comput.}, 26(6):1955–1988, January 2005.

\bibitem{ootomo2024dgemm}
Hiroyuki Ootomo, Katsuhisa Ozaki, and Rio Yokota.
\newblock {DGEMM} on {I}nteger {M}atrix {M}ultiplication {U}nit.
\newblock Technical report, 2024.

\bibitem{OsArPeHeCa2022}
John Osorio, Adria Armejach, Eric Petit, Greg~M. Henry, , and Marc Casas.
\newblock A {BF}16 {FMA} is all you need for {DNN} training.
\newblock In {\em IEEE Transactions on Emerging Topics in Computing}, 2022.

\bibitem{OzOgOiRu2011}
Katsuhisa Ozaki, Takeshi Ogita, Shin'Ichi Oishi, and Siegfried~M. Rump.
\newblock Error-free transformations of matrix multiplication by using fast
  routines of matrix multiplication and its applications.
\newblock {\em Numer. Algorithms}, 59(1):95–118, January 2012.

\bibitem{Rump2009}
Siegfried~M. Rump.
\newblock Ultimately fast accurate summation.
\newblock {\em SIAM J. Sci. Comput.}, 31(5):3466–3502, September 2009.

\bibitem{BLIS3}
Tyler~M Smith, Robert {v}an~{d}e {G}eijn, Mikhail Smelyanskiy, Jeff~R Hammond,
  and Field~G {V}an {Z}ee.
\newblock Anatomy of high-performance many-threaded matrix multiplication.
\newblock In {\em Parallel and Distributed Processing Symposium, 2014 IEEE 28th
  International}, pages 1049--1059. IEEE, 2014.

\bibitem{Strassen}
Volker Strassen.
\newblock Gaussian elimination is not optimal.
\newblock {\em Numer. Math.}, 13(4):354–356, August 1969.

\bibitem{BLIS6}
Field~G. {V}an {Z}ee.
\newblock Implementing high-performance complex matrix multiplication via the
  {\sc 1m} method.
\newblock {\em SIAM J. Sci. Comput.}, 42(5):C221--C244, September 2020.

\bibitem{BLIS7}
Field~G. {V}an {Z}ee, Devangi~N. Parikh, and Robert~A. van~de {G}eijn.
\newblock Supporting mixed-domain mixed-precision matrix multiplication within
  the blis framework.
\newblock {\em ACM Trans. Math. Softw.}, 47(2):12:1--12:26, April 2021.

\bibitem{BLIS5}
Field~G. {V}an {Z}ee and Tyler~M. Smith.
\newblock Implementing high-performance complex matrix multiplication via the
  {\sc 3m} and {\sc 4m} methods.
\newblock {\em ACM Trans. Math. Soft.}, 44(1):7:1--7:36, June 2017.

\bibitem{BLIS2}
Field~G. {V}an {Z}ee, Tyler~M. Smith, Bryan Marker, Tze~Meng Low, Robert~A.
  {v}an~{d}e {G}eijn, Francisco~D. Igual, Mikhail Smelyanskiy, Xianyi Zhang,
  Michael Kistler, Vernon Austel, John~A. Gunnels, and Lee Killough.
\newblock The {BLIS} framework: Experiments in portability.
\newblock {\em ACM Trans. Math. Softw.}, 42(2):12:1--12:19, June 2016.

\bibitem{BLIS1}
Field~G. Van~Zee and Robert~A. {v}an~{d}e {G}eijn.
\newblock {BLIS}: A framework for rapidly instantiating {BLAS} functionality.
\newblock {\em ACM Trans. Math. Soft.}, 41(3):14:1--14:33, June 2015.

\bibitem{xblasweb}
\url{http://www.netlib.org/xblas}, 2019.

\bibitem{Yu:2015:POK:2807591.2807601}
Chenhan~D. Yu, Jianyu Huang, Woody Austin, Bo~Xiao, and George Biros.
\newblock Performance optimization for the k-nearest neighbors kernel on x86
  architectures.
\newblock In {\em Proceedings of the International Conference for High
  Performance Computing, Networking, Storage and Analysis}, SC '15, pages
  7:1--7:12, New York, NY, USA, 2015. ACM.

\end{thebibliography}

\newpage

\appendix

\section{Cascading a F64x2: Technical details.}
\label{app:A}

How to cascade a given FP64x2 number into its four parts is conceptually easy, but requires care in practice so that it doesn't create unacceptable overhead.

Recall that cascading a number does not happen in isolation: All elements in a row of $ A $ or column of $ B $ must cascade conformally.  
Thus, we assume that $ \sigma_0 = 2^e $ has been determined where $ e $ equals the exponent of the element with largest magnitude.
The element we are cascading, $ \widehat \chi $ , is first scaled so that
\[
\widehat \chi := \widehat \chi \times 2^{-e} =
\pm .
\begin{array}{|l|l|l|l} \hline
\beta_0 \cdots  \beta_{ \digits_0 -1} & \beta_{ \digits_0 } \cdots \beta_{\digits_1-1} &
\beta_{\digits_1} \cdots \beta_{\digits_2-1} &
\beta_{\digits_2} \cdots ~~~
\\
\hline
\end{array}
.
\]
Here, as before, $ D_0 $, $ D_1 $, and $ D_2 $ indicate where splits occur, as illustrated with the boxes.  
If we are working with the element that is largest in magnitude, then $ \beta_0 = 1 $ and all digits starting with $ \beta_{2D} $ equal zero.  
Otherwise, there may be leading zero bits.  
The question now becomes how to extract the indicated parts into four FP64 numbers.

One way to achieve this is to recognize that ranges of the mantissa can be extracted by masking the corresponding bits and using a bit-wise logical AND operation.  This allows one to extract $ \chi_0 $, $ \chi_1 $, and $ \chi_2 $, after which $ \chi_3 = \widehat \chi - \chi_0 - \chi_1 - \chi_2 $.
The problem with this approach is that leading zeroes in sections may extract denormalized FP64 numbers and in the end FP64x2 subtraction is required to compute $ \chi_3 $.  This results in an algorithm complicated by many conditionals, which in implementation incur considerable overhead.  In other words, our experience is that this is slow.

We instead favor an approach that only computes with FP64 numbers, performance only FP64 arithmetic, and avoids all conditionals.  We have found it to be fast enough.

The first observation is that, because the ranges of bits covered by $ \chi_{\rm hi} $ and $ \chi_{\rm lo} $ do not overlap, 
the two FP64 numbers that store $ \widehat \chi $, $ \chi_{\rm hi} $ and $ \chi_{\rm lo} $ can be cascaded seperately so that
\[
\begin{array}{rcl}
\chi_{\rm hi} &=& 
\chi_{{\rm hi},0}
+
\chi_{{\rm hi},1}
\sigma_1
+
\chi_{{\rm hi},2}
\sigma_2
+
\chi_{{\rm hi},3}
\sigma_3
\\
\chi_{\rm lo} &=& 
\chi_{{\rm lo},0}
+
\chi_{{\rm lo},1}
\sigma_1
+
\chi_{{\rm lo},2}
\sigma_2
+
\chi_{{\rm lo},3}
\sigma_3.
\end{array}
\]
Then
\[
\chi_i =
\chi_{{\rm hi},i} +
\chi_{{\rm lo},i},
~i \in \{ 0,1,2,3\},
\]
an addition that is exact when performed in FP64 arithmetic.
This means we can focus on how to cascade a FP64 number, $ \chi_\star $ where $ \star \in \{ {\rm hi}, {\rm lo} \} $, into four parts.

The second observation is that, given a FP64 number 
$ \chi = 
\pm . \beta_0 \beta_1 \cdots 
\beta_{D-1} \beta_D \cdots $
one can compute, in FP64 arithmetic, 
$ \psi =
\pm . \beta_0 \beta_1 \cdots 
\beta_{d-1}
$
via the steps
\begin{itemize}
\item
$ \psi := \chi $.
\item
$ \psi = \psi + 2^{53-d} + 2^{53-d-1} $.
Since the result is stored in a FP64 number, this removes bits starting with $ \beta_D $.
\item
$ \psi = \psi - 2^{53-d} - 2^{53-d-1} $.
\end{itemize}
(More precisely, this rounds to the first $ d $ digits.\NoShow{At least, if you get the details right and you take Greg's word for it!})

The third observation is that this computation will be performed for many FP64x2 numbers and hence various powers of two can be precomputed.

Putting all this together yields the following steps for cascading $ \widehat \chi $, given that it needs to be scaled by $ 2^e $:
\begin{itemize}
\item
Consider $ \widehat \chi = \chi_{\rm hi} + \chi_{\rm lo} $.
\item
$ c_0 := D_0 $, $ c_1 := D_1 - D_0 $, $ c_2 := D_2 - D_1 $
\item
for $ \star \in \{ {\rm hi}, {\rm lo} \} $:
\begin{itemize}
\item
Scale $\chi_\star = \chi_\star \times 2^{-e} $
\item
$ \chi_{\star,0} := \chi_{\star}$;
$ \chi_{\star,0} := \chi_{\star,0} + 2^{53-c_0} + 2^{53-c_0-1}$;
$ \chi_{\star,0} = \chi_{\star,0} - 2^{53-c_0} - 2^{53-c_0-1}$.
\item
$ \chi_\star := \chi_\star - \chi_{\star,0}$; $ \chi_\star := \chi_\star \times 2^{c_0} $
\item
$ \chi_{\star,1} := \chi_{\star}$;
$ \chi_{\star,1} := \chi_{\star,1} + 2^{53-c_1} + 2^{53-c_1-1}$;
$ \chi_{\star,1} = \chi_{\star,1} - 2^{53-c-1} - 2^{53-c_1-1}$.
\item
$ \chi_\star := \chi_\star - \chi_{\star,1}$; $ \chi_\star := \chi_\star \times 2^{c_1} $
\item
$ \chi_{\star,2} := \chi_{\star}$;
$ \chi_{\star,2} := \chi_{\star,2} + 2^{53-c_2} + 2^{53-c_2-1}$;
$ \chi_{\star,2} = \chi_{\star,2} - 2^{53-c_2} - 2^{53-c_2-1}$.
\item
$ \chi_{\star,3} := \chi_\star - \chi_{\star,2}$;
$ \chi_{\star,3} := \chi_{\star,3} \times 2^{c_2} $.
\end{itemize}
\item
$ \chi_i := \chi_{{\rm hi},i} + \chi_{{\rm lo},i} $ for $ i = 0, 1, 2, 3 $.
\end{itemize}
It is likely that many parts will equal zero because the active part of the FP64x2 number fall outside of range covered by the part.  

A final observation is that because conditionals are avoided in the computation, cascading multiple FP64x2 numbers can be vectorized.

\end{document}